\renewcommand{\cite}{\citeyear}
\theoremstyle{plain}
\newtheorem{lemma}{Lemma}[section]
\newtheorem{proposition}{Proposition}[section]
\numberwithin{equation}{section}
\begin{document}
\renewcommand{\thefootnote}{\fnsymbol{footnote}}
\begin{center}
{\Large\textbf{A Consistent Stochastic Model of the Term Structure\\[1ex] of Interest Rates for Multiple Tenors}}\footnote{We would like to thank Alex Backwell, Bruno Bouchard, Alan Brace, Jos\'e da Fonseca, Marc Henrard, Andrea Macrina, Michael Nealon and David Skovmand for helpful discussions on earlier versions of this paper. The usual disclaimer applies.}\\[1em]
{\Large Mesias Alfeus\footnote{University of Technology Sydney, Australia\\ e-mail: {\texttt{Mesias.Alfeus@student.uts.edu.au}}}, Martino Grasselli\footnote{Dipartimento di Matematica, Universit\`a degli Studi di Padova (Italy) and L\'eonard de Vinci P\^ole Universitaire,
Research Center, Finance Group, 92 916 Paris La D\'efense Cedex (France).\\ e-mail:
{\texttt{grassell@math.unipd.it}}} and Erik Schl\"ogl\footnote{University of Technology Sydney, Australia\\ e-mail: {\texttt{Erik.Schlogl@uts.edu.au}}}}\\[2em]
\today
\end{center}
\renewcommand*{\thefootnote}{\arabic{footnote}}
\setcounter{footnote}{0}
\begin{abstract}
Explicitly taking into account the risk incurred when borrowing at a shorter tenor versus lending at a longer tenor (``roll-over risk''), we construct a stochastic model framework for the term structure of interest rates in which a frequency basis (i.e. a spread applied to one leg of a swap to exchange one floating interest rate for another of a different tenor in the same currency) arises endogenously. This roll-over risk consists of two components, a credit risk component due to the possibility of being downgraded and thus facing a higher credit spread when attempting to roll over short--term borrowing, and a component reflecting the (systemic) possibility of being unable to roll over short--term borrowing at the reference rate (e.g., LIBOR) due to an absence of liquidity in the market. The modelling framework is of ``reduced form'' in the sense that (similar to the credit risk literature) the \emph{source} of credit risk is not modelled (nor is the source of liquidity risk). However, the framework has more structure than the literature seeking to simply model a different term structure of interest rates for each tenor frequency, since relationships between rates for all tenor frequencies are established based on the modelled roll-over risk. We proceed to consider a specific case within this framework, where the dynamics of interest rate and roll-over risk are driven by a multifactor Cox/Ingersoll/Ross--type process, show how such model can be calibrated to market data, and used for relative pricing of interest rate derivatives, including bespoke tenor frequencies not liquidly traded in the market.
\end{abstract}
\textbf{Keywords:} tenor swap,  basis, frequency basis, liquidity risk, swap market \\
\textbf{JEL Classfication: C6, C63, G1, G13} \\
\line(1,0){250}\\[1em]
\section{Introduction}
The phenomenon of the frequency basis (i.e. a spread applied to one leg of a swap to exchange one floating interest rate for another of a different tenor in the same currency) contradicts textbook no--arbitrage conditions and has become an important feature of interest rate markets since the beginning of the Global Financial Crisis (GFC) in 2008. As a consequence, stochastic interest rate term structure models for financial risk management and the pricing of derivative financial instruments in practice now reflect the existence of multiple term structures, i.e. possibly as many as there are tenor frequencies. While this pragmatic approach can be made mathematically consistent (see Grasselli and Miglietta (\citeyear*{GM15}) and Grbac and Runggaldier (\citeyear*{GrbRung:2015}) for a recent treatise, as well as the literature cited therein), it does not seek to explain this proliferation of term structures, nor does it allow the extraction of information potentially relevant to risk management from the basis spreads observed in the market.

In the pre-GFC understanding of interest rate swaps (see explained in, e.g., Hull (\citeyear*{Hull2008})), the presence of a basis spread in a floating--for--floating interest rate swap would point to the existence of an arbitrage opportunity, unless this spread is too small to recover transaction costs. As documented by Chang and Schl\"ogl (\citeyear*{ChaSch:2015}), post--GFC the basis spread cannot be explained by transaction costs alone, and therefore there must be a new perception by the market of risks involved in the execution of textbook ``arbitrage'' strategies. Since such textbook strategies to profit from the presence of basis spreads would involve lending at the longer tenor and borrowing at the shorter tenor, the prime candidate for this is ``roll--over risk.'' This is the risk that in the future, once committed to the ``arbitrage'' strategy, one might not be able to refinance (``roll over'') the borrowing at the prevailing market rate (i.e., the reference rate for the shorter tenor of the basis swap). This ``roll--over risk,'' invalidating the ``arbitrage'' strategy, can be seen as a combination of ``downgrade risk'' (i.e., the risk faced by the potential arbitrageur that the credit spread demanded by its creditors will increase relative to the market average) and ``funding liquidity risk'' (i.e., the risk of a situation where funding in the market can only be accessed at an additional premium).

We propose to model this roll-over risk explicitly, which endogenously leads to the presence of basis spreads between interest rate term structures for different tenors. This is in essence the ``reduced--form'' or ``spread--based'' approach to multicurve modelling, similar to the approach taken in the credit risk literature, where the risk of loss due to default gives rise to credit spreads. The model allows us to extract the forward--looking ``market's view'' of roll-over risk from the observed basis spreads, to which the model is calibrated. Preliminary explorations using a simple model of deterministic basis spreads in Chang and Schl\"ogl (\citeyear*{ChaSch:2015}) indicate an improving stability of the calibration, suggesting that the basis swap market has matured since the turmoil of the GFC and pointing toward the practicability of constructing and implementing a full stochastic model.

The bulk of the literature on modelling basis spreads is in a sense even more ``reduced--form'' than what we propose here, in the sense that basis spreads are recognised to exist, and are modelled to be either deterministic or stochastic in a mathematically consistent fashion, but there are no structural links between term structures of interest rates for different tenors (in a sense, the analogue of this approach applied to credit risk would be to model stochastic credit spreads directly, without any link to probabilities of default and losses in the event of default). This strand of the literature can be traced back to Boenkost and Schmidt (\citeyear*{BoeSch:2004}), who used this approach to construct a model for cross currency swap valuation in the presence of a basis spread. This was subsequently adapted by Kijima, Tanaka and Wong (\citeyear*{KijTanWon:2009}) to modelling a single--currency basis spread. Henrard (\citeyear*{Henrard:2010}) took an axiomatic approach to the problem, modelling a deterministic multiplicative spread between term structures associated with different tenors. Initially, these models were not reconciled with the requirement of the absence of arbitrage. Subsequent work, however, such as Fujii, Shimada and Takahashi (\citeyear*{FujShiTak:2009}), gave explicit consideration to this requirement. This pragmatic way of modelling interest rates in the presence of spreads between term structures of interest rates for different tenors has been pursued further in a number of papers, including Mercurio (\citeyear*{Mercurio:2009},\citeyear*{Mercurio2010}) in a LIBOR Market Model setting, Kenyon (\citeyear*{Kenyon:2010}) in a short--rate modelling framework, stochastic additive basis spreads in Mercurio and Xie (\citeyear*{MerXie:2012}), and Henrard (\citeyear*{Henrard:2013}) for stochastic multiplicative basis spreads. Moreni and Pallavicini (\citeyear*{MorPal:2014}) construct a model of two curves, riskfree instantaneous forward rates and forward LIBORs, which is Markovian in a common set of state variables. Macrina and Mahomed (\citeyear*{risks6010018}) construct a pricing kernel framework for multicurve models (be it discount count curves in different currencies, or real vs. nominal interest rates, or for different tenors), but again this approach does not attempt to model the structural links between different term structures.

Early work incorporating some of the potential causes of basis spreads into models of the single--currency ``multicurve'' environment post--GFC includes Morini (\citeyear*{Morini:2009}) and Bianchetti (\citeyear*{Bianchetti:2010}), who focus on counterparty credit risk. The model of Cr\'epey (\citeyear*{Crepey:2015}) links funding cost and counterparty credit risk in a credit valuation adjustment (CVA) framework, but does not explicitly consider spreads between different tenor frequencies arising from roll--over risk.

Recently, there has been an emerging view that ``roll--over risk'' is what prevents pre--crisis textbook arbitrage strategies to exploit the basis spreads between tenor frequencies, and that modelling this risk can provide the link between overnight index swaps (OIS), the XIBOR (e.g. LIBOR, EURIBOR, etc.) style money market, the vanilla swap market, and the basis swap market. An important contribution in this vein is Filipovi\'c and Trolle (\citeyear*{FilTro:2013}), who estimate the dynamics of interbank risk from time series data from these markets. They define ``interbank risk'' as ``the risk of direct or indirect loss resulting from lending in the interbank money market.'' Decomposing the term structure of interbank risk into what they identify as default and non-default (liquidity) components, they study the associated risk premia. Filipovi\'c and Trolle interpret the ``default'' component in terms of the risk of a deterioration of creditworthiness of a LIBOR reference panel bank resulting in it dropping out of the LIBOR panel,\footnote{This is also known as the ``renewal effect,'' see Collin--Dufresne and Solnik (\citeyear*{ColSol:2001}) and Grinblatt (\citeyear*{Grinblatt:2001}).} in which case this bank immediately would no longer be able to roll over debt at the overnight reference rate, while the rate on any LIBOR borrowing would remain fixed until the end of the accrual period (i.e., typically for several months). In their analysis, this differential impact of downgrade risk on rolling debt explains part of the LIBOR/OIS spread; the residual is labelled the ``liquidity'' component.\footnote{Past empirical studies, in particular of the GFC, also indicate that credit risk alone is insufficient to explain the LIBOR/OIS spread; see e.g. Eisenschmidt and Tapking (\citeyear*{EisTap:2009}).} It is important to note that both components manifest themselves in the risk of additional cost when rolling over debt, i.e., ``downgrade risk'' and ``funding liquidity risk'' combining to form a total ``roll--over risk.''\footnote{``Funding liquidity risk'' has also been considered explicitly in a separate strand of the literature. For example, Acharya and Skeie (\citeyear*{AchSke:2011}) model liquidity hoarding by participants in the interbank market. In their model, there is a positive feedback effect between roll--over risk and liquidity hoarding (via term premia on interbank lending rates), which in the extreme case can lead to a freeze of interbank lending. Brunnermeier and Pedersen (\citeyear*{BrunnermeierPedersen2009}) model a similar adverse feedback effect between market liquidity and funding liquidity.}

Based on similar considerations, Cr\'epey and Douady  (\citeyear*{CreDou:2013}) model the spread between LIBOR and OIS as a combination of credit and liquidity risk premia, where in particular they focus on providing some model structure for the latter. They construct a stylised equilibrium model of credit risk and funding liquidity risk to explain the LIBOR/OIS spread, arguing (unlike Filipovi\'c and Trolle) that the overnight rate underlying OIS (e.g., the Fed Funds or EONIA rate) is riskfree (we will return to this point in our model setup below).

An alternative approach at the more fundamental end of the modelling spectrum is the recent work by Gallitschke, M\"uller and Seifried (\citeyear*{GalMulSei:2014}), who propose a model for interbank cash transactions and the relevant credit and liquidity risk factors, which endogenously generates multiple term structures for different tenors. In particular, they explicitly model a mechanism by which XIBOR is determined by submissions of the member banks of a panel, which adds substantial complexity to the model.

Our aim is to construct a consistent stochastic model encompassing OIS, XIBOR, vanilla and basis swaps in a single currency.\footnote{Since the GFC a \emph{cross currency basis} exceeding pre-crisis textbook arbitrage bounds has also emerged, see for example Chang and Schl\"ogl (\citeyear*{ChaSch:2012}). The approach presented here could be extended to multiple currencies, but it is our view that across currencies there may be other factors than various forms of roll--over risk giving rise to a basis spread. For example, Andersen et al. (\citeyear*{OZ:And&Duf&Son:2017}) demonstrate how funding value adjustments (FVAs) can prevent potential arbitrageurs from enforcing covered interest parity (CIP) across currencies. If the CIP arbitrage channel is blocked, then this would allow for CIP violations driven by, say, different supply/demand equilibria in FX spot versus FX forward markets.} Our ``reduced--form'' approach explicitly models both the credit and the funding component of roll--over risk to link multiple yield curves. In that, it is more parsimonious than the ``pragmatic'' way of modelling extant in the literature (reviewed above), where stochastic dynamics for basis spreads are specified directly without recourse to the underlying roll--over risk. In particular, this allows the relative pricing of bespoke tenors in a model calibrated to basis spreads between tenor frequencies for which liquid market data is available. It does not require the introduction of a new stochastic factor (or deterministic spread) for each new tenor frequency. However, the approach is ``reduced--form'' in the sense that it abstracts from structural causes of downgrade risk and funding liquidity risk --- in this sense, our approach is closest in spirit to the ``reduced--form'' models of credit risk, doing for basis spreads what those models have done for credit spreads. The framework which we propose below departs from that of Chang and Schl\"ogl (\citeyear*{ChaSch:2015}) in that, rather than focusing exclusively on basis swaps, we treat OIS, XIBOR, vanilla and basis swaps in a unified framework, as well as calibrating credit risk to credit default swaps, and model both the credit and funding liquidity components of roll--over risk in a ``reduced--form'' manner.

The remainder of the paper is organised as follows. Section 2 expresses the basic instruments in terms of the model variables, i.e., the overnight rate, credit spreads and a spread representing pure funding liquidity risk. Section 3 calibrates a concrete specification of the model in terms of multifactor Cox/Ingersoll/Ross--type dynamics to market data for OIS, interest rate and basis swaps --- this is the version of the model focused solely on the frequency basis. In order to separate roll--over risk into its credit and liquidity component in calibration to market data, we need to include repos (for the ``default--free'' interest rate) and credit default swaps (CDS) --- this is done in Section 4. Section 5 concludes.

\section{The model}\label{modelsection}
\subsection{Model variables}
We model a frictionless market free of arbitrage opportunities in which trading takes place continuously over the time interval $[0,T]$, where $T$ is an arbitrary positive final date.
Uncertainty in the market is modeled through a filtered probability space $(\Omega, {\cal F}, ({\cal F}_t)_{t \in [0,T]}, \mathbb{Q})$, supporting all the price processes we are about to introduce. Here $\mathbb{Q}$ denotes the risk neutral measure, so that we are modeling the market directly under the pricing measure. Also, $\mathbb{E}^{\mathbb Q}_t$ is a shorthand for $\mathbb{E}^{\mathbb Q}[.\vert {\cal F}_t]$.

Denote by $r_c$ the continuously compounded short rate abstraction of the interbank overnight rate (e.g., Fed funds rate or EONIA). This is equal to the riskless (default--free) continuously compounded short rate $r$ plus a credit spread. In the simplest case, one could adopt a ``fractional recovery in default,'' a.k.a. ``recovery of market value,'' model\footnote{See Duffie and Singleton (\citeyear*{OZ:Duf&Sin:99}).} and denote by $q$ the (assumed constant) loss fraction in default. Then
\begin{equation}
r_c(s)=r(s)+\Lambda(s)q
\label{collateral}\end{equation}
where $\Lambda(s)q$ is the average (market aggregated) credit spread across the panel and $\Lambda(s)$ is the corresponding default intensity.\footnote{Here, with ``risk neutral measure'' we mean an equivalent martingale measure associated with discounting by the instantaneous, default--free interest rate $r$ (i.e., the numeraire is the associated continuously compounded savings account). In the presence of risks (such as credit risk) with respect to which the market may be incomplete, this measure is not necessarily unique, but we follow the bulk of the credit derivatives literature in assuming that the model, calibrated to the extent possible to available liquid market instruments, gives us the ``correct'' dynamics under the assumed pricing measure. In other words, model prices are arbitrage--free with respect to the prices observed in the market, but this does not necessarily imply that any departure from these model prices would result in an exploitable arbitrage opportunity.} Note that although a significant part of the ``multicurve'' interest rate modelling literature mentioned in the introduction heuristically advances the argument that (mainly due to its short maturity) the interbank overnight rate is essentially free of default risk, in intensity--based models of default this is incorrect even in the instantaneous limit. In our modelling, we do not require this argument. Instead, it suffices that $r_c$ is the appropriate rate at which to discount payoffs of any fully collateralised derivative transaction, because the standard ISDA Credit Support Annex (CSA) stipulates that posted collateral accrues interest at the interbank overnight rate.\footnote{For a detailed discussion of this latter point, see Piterbarg (\citeyear*{Piterbarg2010}).}

Roll--over risk is modelled via the introduction of a $\pi(s)$, denoting the spread over $r_c(s)$ which an arbitrary but fixed entity must pay when borrowing overnight. $\pi(s)$ has two components,
\begin{equation}
\pi(s)=\phi(s)+\lambda(s)q
\end{equation}
where $\phi(s)$ is pure funding liquidity risk\footnote{In Section 3, $\phi(s)$ will be modelled as a diffusion as a ``first--cut'' concrete specification of our model. Modelling liquidity freezes properly may require permitting $\phi(s)$ to jump --- the framework laid out in the present section would allow for such an extension.} (both idiosyncratic and systemic) and $\lambda(s)q$ is the idiosyncratic credit spread over $r_c$ (initially, e.g. at time 0, $\lambda(0)=0$ by virtue of the fact that at time of calibration to basis spread data, we are considering market aggregated averages). The default intensity of any given (but representative) XIBOR panel member is $\Lambda(s)+\lambda(s)$. Thus $\lambda(s)q$ represents the ``credit'' (a.k.a. ``renewal'') risk component of roll--over risk, i.e. the risk that a particular borrower will be unable to roll over overnight (or instantaneously, in our mathematical abstraction) debt at $r_c$, instead having to pay an additional spread $\lambda(s)q$ because their credit quality is lower than that of the panel contributors determining $r_c$.

\subsection{OIS with roll--over risk, one--period case}

\begin{table}[t]
  \centering

    \begin{tabular}{|c|c|c|}
    \hline
    \textbf{Action} & \multicolumn{2}{c|}{\textbf{Time }} \\
    \hline
          & $t$ & $T$ \\
          \hline
    Borrow overnight, & \multirow{3}[1]{*}{1} & \multirow{3}[1]{*}{$-e^{\int_t^Tr_c(s)ds}$} \\
    rolling from time &       &  \\
    $t$ to $T$. &       &  \\
   \hline
    Enter OIS & 0     &  $e^{\int_t^Tr_c(s)ds}-(1+(T-t)\text{OIS}(t,T))$\\
    \hline
    Lend at LIBOR & -1    & $1+(T-t)L(t,T)$  \\
    \hline
    \hline
    \textbf{Net outcome} & 0     &  $(T-t)(L(t,T)-\text{OIS}(t,T))$ \\
    \hline
    \hline
    \end{tabular}%
    \caption{Strategy to exploit the LIBOR/OIS spread assuming the absence of roll--over risk.}\label{arbstrategy}
  \label{tab:addlabel}%
\end{table}%

Let us consider first the simplest case, i.e. the LIBOR/OIS spread over a single accrual period. In the absence of roll--over risk, one could construct a strategy to take advantage of this spread (similar strategies can be constructed for multiple accrual periods, or to take advantage of the frequency (tenor) basis): Borrow at the overnight rate, rolling over the borrowing daily.\footnote{Note again that for mathematical convenience we are equating ``daily'' with the continuous--time infinitesimal limit --- if this simplification is considered to have material impact, it could be lifted at the cost of some additional mathematical tedium.} Enter into an OIS receiving floating and paying fixed, thus eliminating the exposure to interest rate risk. Lend at LIBOR. Although this not strictly an arbitrage strategy,\footnote{This strategy is not strictly riskless, because of the potential impact of default risk. However, this default risk should be reflected in any interbank borrowing credit spread, be it overnight or for a longer period. It is the roll--over risk which introduces a new distinction between shorter and longer tenor borrowing, as we demonstrate below.} if we assume that one is guaranteed to be able to roll over the borrowing at $r_c$ (i.e., in the absence of roll--over risk), and credit spreads apply symmetrically to the borrower and the lender, then this strategy results in a profit equal to the LIBOR/OIS spread, as summarised in Table \ref{arbstrategy}. We interpret this profit as a compensation for roll--over risk, in a manner which we will now proceed to make specific.

Borrowing overnight from $t$ to $T$ (setting $\delta=T-t$), rolling principal and interest forward until maturity, an arbitrary but fixed entity pays at time $T$:
\begin{equation}\label{overnight}
-e^{\int_t^Tr_c(s)ds}e^{\int_{t}^T\pi(s)ds}
\end{equation}
Assuming symmetric treatment of credit risk when borrowing and lending,\footnote{We need to assume symmetric treatment of roll--over risk, both the ``credit'' and the ``funding liquidity'' component, in order to maintain additivity of basis spreads: Swapping a one--month tenor into a three--month tenor, and then swapping the three--month tenor into a 12--month tenor, is financially equivalent to swapping the one--month tenor into the 12--month tenor, and thus (ignoring transaction costs) the 1m/12m basis spread must equal the the sum of the 1m/3m and 3m/12m basis spreads.} lending to an arbitrary but fixed entity from $t$ to $T$ will incur credit risk with intensity $\Lambda(s)+\lambda(s)$. Discounting with
$$
r(s)+(\Lambda(s)+\lambda(s))q=r_c(s)+\lambda(s)q
$$
the present value of (\ref{overnight}) is
\begin{equation}
-\mathbb{E}_t^\mathbb{Q}\left[e^{\int_{t}^T\phi(s)ds}\right] \label{borrow}
\end{equation}
Enter OIS to receive the overnight rate and pay the fixed rate $\text{OIS}(t,T)$, discounting with $r_c$ (due to collateralisation of OIS), the present value of the payments is
\begin{equation}
\mathbb{E}_t^\mathbb{Q}\left[1-e^{-\int_t^Tr_c(s)ds}-e^{-\int_t^Tr_c(s)ds}\text{OIS}(t,T)\delta\right] \label{OIS}
\end{equation}
Spot LIBOR observed at time $t$ for the accrual period $[t,T]$ is denoted by $L(t,T)$. Lending at LIBOR $L(t,T)$, we receive (at time $T$) the credit risky payment
$$
1+\delta L(t,T)
$$
Discounting as above with $r_c(s)+\lambda(s)q$, the present value of this is
\begin{equation}
\mathbb{E}_t^\mathbb{Q}\left[e^{-\int_{t}^T(r_c(s)+\lambda(s)q)ds}(1+\delta L(t,T))\right] \label{LIBOR}
\end{equation}
We must have (because the initial investment in the strategy is zero) that the sum of the three terms ((\ref{borrow}), (\ref{OIS}) and (\ref{LIBOR})) is zero, i.e.
\begin{multline} \label{noarb}
\mathbb{E}_t^\mathbb{Q}\left[e^{\int_{t}^T\phi(s)ds}\right]=\\
\mathbb{E}_t^\mathbb{Q}\left[1+e^{-\int_{t}^T(r_c(s)+\lambda(s)q)ds}(1+\delta L(t,T))-e^{-\int_t^Tr_c(s)ds}(1+\delta \text{OIS}(t,T))\right]
\end{multline}
Consequently, if renewal risk is zero, i.e. $\lambda(s)\equiv0$, then the LIBOR/OIS spread is solely due to funding liquidity risk:
\begin{equation}\label{oneperiod}
\mathbb{E}_t^\mathbb{Q}\left[e^{\int_{t}^T\phi(s)ds}\right]=
\mathbb{E}_t^\mathbb{Q}\left[1+e^{-\int_t^Tr_c(s)ds}\delta(L(t,T)-\text{OIS}(t,T))\right]
\end{equation}
Define the discount factor implied by the overnight rate as
\begin{equation}\label{DOIScond}
D^{OIS}(t,T)=\mathbb{E}_t^\mathbb{Q}\left[e^{-\int_t^Tr_c(s)ds}\right]
\end{equation}
Since the mark--to--market value of the OIS at inception is zero, we have
\begin{eqnarray}
\text{OIS}(t,T) &=& \frac{1-D^{OIS}(t,T)}{\delta D^{OIS}(t,T)}\\
\Leftrightarrow D^{OIS}(t,T) &=& \frac1{1+\delta\text{OIS}(t,T)}\label{DOIS}
\end{eqnarray}
Thus the dynamics of $r_c$ should be consistent with (\ref{DOIS}) (the term structure of the $D^{OIS}(t,T)$), and the dynamics of $\phi(s)$ and $\lambda(s)$ should be consistent with (\ref{noarb}).

The following remarks are worth noting:
\begin{itemize}
\item (\ref{DOIS}) implies
$$
\mathbb{E}_t^\mathbb{Q}\left[e^{-\int_t^Tr_c(s)ds}(1+\delta \text{OIS}(t,T))\right]=1
$$
Therefore (\ref{noarb}) implies that we cannot have
\begin{equation}\label{LIBORPV}
\mathbb{E}_t^\mathbb{Q}\left[e^{-\int_{t}^T(r_c(s)+\lambda(s)q)ds}(1+\delta L(t,T))\right]=1
\end{equation}
unless $\phi(s)=0$, i.e. unless the LIBOR/OIS spread is solely due to renewal risk.
\item It may seem counterintuitive that (\ref{LIBORPV}) doesn't hold, but this is due to the fact that the discounting in (\ref{LIBORPV}) only takes into account credit risk (including ``renewal risk''), i.e. it does not take into account the premium a borrower of LIBOR (as opposed to rolling overnight borrowing) would pay for avoiding the roll--over risk inherent in $\phi(s)$.
\end{itemize}

\subsection{OIS with roll--over risk, multiperiod case}
OIS may pay more frequently than once at $T$ for an accrual period $[t,T]$ (especially when the period covered by the OIS exceeds one year). In this case the strategy of the previous section needs to be modified as follows.

Borrowing overnight from $t=T_0$ to $T_n$ (normalising $T_j-T_{j-1}=\delta, \forall 0<j\leq n$), rolling principal until maturity and interest forward until each $T_j$, an arbitrary but fixed entity pays at time $T_j$ $(\forall 0<j<n)$:
\begin{equation}
1-e^{\int_{T_{j-1}}^{T_j}r_c(s)ds}e^{\int_{T_{j-1}}^{T_j}\pi(s)ds}
\end{equation}
and at time $T_n$:
\begin{equation}
-e^{\int_{T_{n-1}}^{T_n}r_c(s)ds}e^{\int_{T_{n-1}}^{T_n}\pi(s)ds}
\end{equation}
Discounting with $r_c(s)+q\lambda(s)$, the present value of this is
\begin{multline}
\sum_{j=1}^{n-1}\left(E_t^\mathbb{Q}\left[e^{-\int_t^{T_{j}}(r_c(s)+q\lambda(s))ds}\right]
-\mathbb{E}_t^\mathbb{Q}\left[e^{-\int_t^{T_{j-1}}(r_c(s)+q\lambda(s))ds}e^{\int_{T_{j-1}}^{T_i}\phi(s)ds}\right]\right) \label{nborrow} \\
-\mathbb{E}_t^\mathbb{Q}\left[e^{-\int_t^{T_{n-1}}(r_c(s)+q\lambda(s))ds}e^{\int_{T_{n-1}}^{T_n}\phi(s)ds}\right]
\end{multline}
Enter OIS to receive the overnight rate and pay the fixed rate $\text{OIS}(t,T_n)$ at each $T_j$ $(\forall 0<j\leq n)$, discounting with $r_c$, the present value of the payments is
\begin{equation}
\sum_{j=1}^{n}(D^{OIS}(t,T_{j-1})-(1+\delta\text{OIS}(t,T_n))D^{OIS}(t,T_j)) \label{nOIS}
\end{equation}
If lending to time $T_n$ at $L(t,T_n)$ is possible (i.e., a LIBOR $L(t,T_n)$ is quoted in the market), and supposing that LIBOR is quoted with annual compounding, with $T_n-t=m$, the present value of interest and repayment of principal is
\begin{equation}
\mathbb{E}_t^\mathbb{Q}\left[e^{-\int_t^{T_{n}}(r_c(s)+q\lambda(s))ds}(1+L(t,T_n))^m\right] \label{nLIBOR}
\end{equation}
We must have (because the initial investment in the strategy is zero) that the sum of the three terms ((\ref{nborrow}), (\ref{nOIS}) and (\ref{nLIBOR})) is zero. Since the mark--to--market value of the OIS at inception is zero, we can drop (\ref{nOIS}) and write this directly as
\begin{multline} \label{nnoarb}
\sum_{j=1}^{n}\mathbb{E}_t^\mathbb{Q}\left[e^{-\int_t^{T_{j-1}}(r_c(s)+q\lambda(s))ds}e^{\int_{T_{j-1}}^{T_j}\phi(s)ds}\right]=\\
\mathbb{E}_t^\mathbb{Q}\left[e^{-\int_t^{T_{n}}(r_c(s)+q\lambda(s))ds}(1+L(t,T_n))^m\right]
+\sum_{j=1}^{n-1}\mathbb{E}_t^\mathbb{Q}\left[e^{-\int_t^{T_{j}}(r_c(s)+q\lambda(s))ds}\right]
\end{multline}
Analogously to the single--period case, since the mark--to--market value of the OIS at inception is zero, we have
\begin{equation}\label{DOISmult}
\text{OIS}(t,T_n)=\frac{1-D^{OIS}(t,T_n)}{\delta\sum_{j=1}^{n}D^{OIS}(t,T_j)}
\end{equation}

\subsection{Rates faced by an arbitrary but fixed panel member over a longer accrual period}
Substituting (\ref{DOIS}) into (\ref{noarb}), we obtain the dynamics of the $L(t,T)$ as conditional expectations over the dynamics of $\phi$, $r_c$ and $\lambda$:
\begin{equation}
L(t,T)=\frac1{\delta}\left(\frac{\mathbb{E}_t^\mathbb{Q}\left[e^{\int_{t}^T\phi(s)ds}\right]}{\mathbb{E}_t^\mathbb{Q}\left[e^{-\int_{t}^T(r_c(s)+\lambda(s)q)ds}\right]}
-1\right)\label{libor}
\end{equation}
Rewriting this as a discount factor
\begin{equation}
D^L(t,T)=(1+\delta L(t,T))^{-1}= \frac{\mathbb{E}_t^\mathbb{Q}\left[e^{-\int_{t}^T(r_c(s)+\lambda(s)q)ds}\right]}{\mathbb{E}_t^\mathbb{Q}\left[e^{\int_{t}^T\phi(s)ds}\right]}
\label{Dlibor}
\end{equation}
we see that if we assume independence of the dynamics of $r_c(s)+\lambda(s)q$ from $\phi(s)$, the ``instantaneous spread'' (admittedly a theoretical abstraction) over $r_c$ inside the expectation becomes simply $\pi(s)=\phi(s)+\lambda(s)q$:
\begin{equation}
D^L(t,T)= \mathbb{E}_t^\mathbb{Q}\left[e^{-\int_{t}^T(r_c(s)+\phi(s)+\lambda(s)q)ds}\right]
\label{Dlibora}
\end{equation}

\subsection{Going beyond LIBOR maturities with interest rate swaps (IRS)}
The application of roll--over risk must be symmetric, i.e. applied to roll--over of both borrowing and lending (with opposite sign for borrowing vs. lending), because otherwise contradictions are inescapable.\footnote{Such contradictions would arise in particular because asymmetry in the treatment of roll-over risk would prevent additivity of basis spreads: Combing a position of paying one--month LIBOR versus receiving three--month LIBOR with a position of paying three--month LIBOR versus receiving six--month LIBOR is equivalent to paying one--month LIBOR versus receiving six--month LIBOR; therefore (up to transaction costs), the one--month versus six--month basis spread should equal the sum of the one--month versus three--month and three--month versus six--month spreads.}

Suppose we are swapping LIBOR with tenor structure $T^{(L)}$, from $T^{(L)}_0=T_0=t$ to $T^{(L)}_{n_L}$ into a fixed rate $s^{(1)}(t,T^{(1)}_{n_1})$, paid based on a tenor structure $T^{(1)}$, with $T^{(L)}_{n_L}=T^{(1)}_{n_1}$, using an interest rate swap.\footnote{For example, a vanilla USD swap would exchange three--month LIBOR (paid every three months) against a stream of fixed payments paid every six months.}
For a (fully collateralised) swap transaction, it must hold that
\begin{eqnarray}
&&\sum_{j=1}^{n_L}\mathbb{E}_t^\mathbb{Q}\left[e^{-\int_{t}^{T^{(L)}_j}r_c(s)ds}(T^{(L)}_j-T^{(L)}_{j-1})L(T^{(L)}_{j-1},T^{(L)}_j)\right]\nonumber\\
&=& \sum_{j=1}^{n_1}D^{OIS}(t,T^{(1)}_j)(T^{(1)}_j-T^{(1)}_{j-1})s^{(1)}(t,T^{(1)}_{n_1})\label{swapcond}
\end{eqnarray}
Typically, we have market data on (vanilla) fixed--for--floating swaps of different maturities, for a single tenor frequency. Combining this with market data on basis swaps, we get a matrix of market calibration conditions (\ref{swapcond}), i.e. one condition for each (maturity,tenor) combination. Note that the basis spread is typically added to the shorter tenor of a basis swap, so (\ref{swapcond}) needs to be modified accordingly. For example, if $T^{(L)}$ corresponds to a three--month frequency, $T^{(1)}$ corresponds to a six--month frequency, and $T^{(2)}$ corresponds to a one--month frequency of the same maturity ($T^{(L)}_{n_L}=T^{(2)}_{n_2}$), then combining (\ref{swapcond}) with a basis swap with spread $b(t,T^{(L)}_{n_L},T^{(2)}_{n_2})$, we obtain
\begin{eqnarray}
&&\sum_{j=1}^{n_2}\mathbb{E}_t^\mathbb{Q}\left[e^{-\int_{t}^{T^{(2)}_j}r_c(s)ds}(T^{(2)}_j-T^{(2)}_{j-1})L(T^{(2)}_{j-1},T^{(2)}_j)\right]\nonumber\\
&=& \sum_{j=1}^{n_1}D^{OIS}(t,T^{(1)}_j)(T^{(1)}_j-T^{(1)}_{j-1})s^{(1)}(t,T^{(1)}_{n_1})\nonumber\\
&&\qquad-\ \sum_{j=1}^{n_2}D^{OIS}(t,T^{(2)}_j)(T^{(2)}_j-T^{(2)}_{j-1})b(t,T^{(L)}_{n_L},T^{(2)}_{n_2})\label{basisswapcond1}
\end{eqnarray}
Conversely, because the convention of basis swaps is that the basis swap spread is added to the shorter tenor, if $T^{(L)}$ corresponds to a three--month frequency, $T^{(1)}$ corresponds to a six--month frequency, and $T^{(3)}$ corresponds to a twelve--month frequency of the same maturity ($T^{(L)}_{n_L}=T^{(3)}_{n_3}$), then combining (\ref{swapcond}) with a basis swap with spread $b(t,T^{(L)}_{n_L},T^{(3)}_{n_3})$, we obtain
\begin{eqnarray}
&&\sum_{j=1}^{n_3}\mathbb{E}_t^\mathbb{Q}\left[e^{-\int_{t}^{T^{(3)}_j}r_c(s)ds}(T^{(3)}_j-T^{(3)}_{j-1})L(T^{(3)}_{j-1},T^{(3)}_j)\right]\nonumber\\
&=& \sum_{j=1}^{n_1}D^{OIS}(t,T^{(1)}_j)(T^{(1)}_j-T^{(1)}_{j-1})s^{(1)}(t,T^{(1)}_{n_1})\nonumber\\
&&\qquad+\ \sum_{j=1}^{n_3}D^{OIS}(t,T^{(3)}_j)(T^{(3)}_j-T^{(3)}_{j-1})b(t,T^{(L)}_{n_L},T^{(3)}_{n_3})\label{basisswapcond2}
\end{eqnarray}


\subsection{Stochastic modelling}
We assume that the model is driven by a  time-homogeneous affine Markov process  $X$ taking values in a non-empty convex subset $E$ of $\mathbb{R}^d$ ($d\geq 1$), endowed with the inner product $\langle \cdot, \cdot \rangle$.
We assume that $X = (X_t)_{t \in [0,\infty)}$ admits the transition semigroup $(P_t)_{t \in [0,\infty)}$ acting on ${\cal B}(E)_b$ (the space of bounded Borel functions on $E$).
%

Let us now define the model variables $r_c(t)$, $\lambda(t)$ and $\phi(t)$ as follows:
\begin{eqnarray}
r_c(t)    &=& a_0(t)+\langle a, X(t)\rangle\label{r_c}\\
\lambda(t) &=& b_0(t)+\langle b, X(t)\rangle\label{lambda}\\
\phi(t)    &=& c_0(t)+\langle c,X(t)\rangle\label{phi}
\end{eqnarray}
where $a,b,c$  are arbitrary constant projection vectors in $E$ and $a_0,b_0,c_0$ are scalar deterministic functions.
Note that the process $(r_c,\lambda, \phi )$ does not enjoy, in general, the Markov property and when it does, it is \emph{a priori} time-inhomogeneous.

In order to preserve analytical tractability, in this paper we will take the Markov process $X$ in the class of affine processes, introduced by Duffie and Kan (\citeyear*{DuffieKan1996}) and then classified by Duffie et al. (\citeyear*{DuffieEtAl2003}) in the canonical state space domain $E=\mathbb{R}^m_+ \times \mathbb{R}^n$. Affine processes have been recently recovered thanks to the interesting extension to the state space  of positive semidefinite matrices (see Bru (\citeyear*{Bru1991}), Gourieroux and Sufana (\citeyear*{GourierouxSufana2003}),
Da Fonseca et al. (\citeyear*{DaFonsecaEtAl2007}),
 Grasselli and Tebaldi (\citeyear*{GrasselliTebaldi2008})  and Cuchiero et al. (\citeyear*{CuchieroEtAl2011})).\footnote{Affine processes belong to the  more general family of polynomial processes recently investigated by e.g.  Filipovic and Larsson (\citeyear*{polynomial}), see also references therein. Our approach can be easily extended to this larger class with minor changes and a slightly different technique in the computation of the expectations. The same holds true as well as for other processes, like for example the L\'evy driven models, see e.g. Eberlein and Raible (\citeyear*{EberleinRaible(1999)}).  Actually, the choice of the stochastic model is just instrumental as far as the computation of the relevant expectations involved in the sequel can be efficiently performed, in view of our calibration exercise.}

In Appendix \ref{affine} we recall the definition of the affine processes as well as their characterisation in terms of the solution of Riccati ODEs. Then, in Appendix \ref{expectations}, we develop the computation of the relevant expectations involved in the pricing of swaps as well as non-linear instruments like caps. We emphasise that our framework is extremely analytically tractable.

\section{Calibration to market data}\label{calsec}
\begin{table}[t]
\begin{center}
\begin{tabular}{|l|l|l|l|l|l|l|l|l|l|}
\hline
\multicolumn{9}{|c|}{\;\;\;\;\;\;\;\;\;\;\;\;\;\;\;IRS \;\;\;\;\;\;\;\; \;\;\;\;\;\;\;\;\;\;\;\;OIS \;\;\;\;\;\;\;\;\;\;\;\;1m/3m\;\;\;\;\;\;\;\;\;\;\;\;\;3m/6m} \\\hline
T& bid& ask& bid& ask& bid& ask& bid& ask\\ \hline
0.5   & 0.50825 & 0.50825 & 0.13  & 0.17  & 9.6   & 9.6   & 19.32 & 21.32 \\
    1     & 0.311 & 0.331 & 0.125 & 0.165 & 8.29  & 9     & 16.25 & 18.25 \\
    2     & 0.37  & 0.395 & 0.125 & 0.165 & 7.8   & 9.8   & 13.56 & 15.56 \\
    3     & 0.5   & 0.5   & 0.12  & 0.16  & 7.21  & 9.21  & 11.7  & 13.7 \\
    4     & 0.657 & 0.667 & 0.12  & 0.16  & 7.7   & 7.7   & 10.64 & 12.64 \\
    5     & 0.83  & 0.87  & 0.109 & 0.16  & 7.3   & 7.3   & 9.74  & 11.74 \\
    6     & 1.0862 & 1.0978 & 0.13  & 0.17  & 6.8   & 6.8   & 10.2  & 10.2 \\
    8     & 1.5001 & 1.5149 & 0.226 & 0.276 & 6     & 6     & 9.7   & 9.7 \\
    9     & 1.6496 & 1.6684 & 0.368 & 0.418 & 5.7   & 5.7   & 9.7   & 9.7 \\
    10    & 1.836 & 1.837 & 0.563 & 0.613 & 5.3   & 5.3   & 8.67  & 10.67 \\
\hline
\end{tabular}\\[1ex]
\caption{Market data quotes on 01/01/2013. Source: Bloomberg }\label{Marketdata:20130101}
\end{center}
\end{table}

\begin{table}[t]
\begin{center}
\begin{tabular}{|l|l|l|l|l|l|l|l|l|l|}
\hline
\multicolumn{9}{|c|}{\;\;\;\;\;\;\;\;\;\;\;\;\;\;\;IRS \;\;\;\;\;\;\;\; \;\;\;\;\;\;\;\;\;\;\;\;OIS \;\;\;\;\;\;\;\;\;\;\;\;1m/3m\;\;\;\;\;\;\;\;\;\;\;\;\;3m/6m} \\\hline
T& bid& ask& bid& ask& bid& ask& bid& ask\\ \hline
0.5 & 0.3263 & 0.3263&0.091 & 0.096 & 8.2 & 8.2 & 1.96875 & 2.46875 \\
    1     & 0.337 & 0.3395 & 0.092 & 0.097 & 8.75  & 9.25  & 2.09375& 2.21875 \\
    2     & 0.7191 & 0.7231 & 0.094 & 0.099 & 9.625 & 10.125 & 2.125 & 2.25 \\
    3     & 1.157 & 1.161 & 0.098 & 0.103 & 10.25 & 10.75 & 2.125 & 2.25 \\
    4     & 1.5234 & 1.5274 & 0.126 & 0.131 & 10.75 & 11.25 & 2.15625 & 2.28125 \\
    5     & 1.8   & 1.8038 & 0.184 & 0.189 & 11    & 11.5  & 2.15625 & 2.28125 \\
    6     & 2.0166 & 2.0206 & 0.345 & 0.355 & 10.875 & 11.375 & 2.15625 & 2.28125 \\
    8     & 2.3348 & 2.3388 & 0.949 & 0.989 & 10.125 & 10.625 & 2.15625 & 2.28125 \\
    9     & 2.4551 & 2.4591 & 1.296 & 1.346 & 9.625 & 10.125 & 2.15625 & 2.28125 \\
    10    & 2.559 & 2.563 & 1.562 & 1.602 & 9.125 & 9.625 & 2.15625 & 2.28125 \\

\hline
\end{tabular}\\[1ex]
\caption{Market data quotes on 08/09/2014. Source: Bloomberg }\label{Marketdata:20140908}
\end{center}
\end{table}

In this section, we proceed to calibrate the version of the model based on multifactor Cox/Ingersoll/Ross (CIR) dynamics to market data for overnight index swaps, vanilla and basis swaps, step by step adding more instruments to the calibration. Thus, the objective is to obtain a single calibrated, consistent roll--over risk model for interest rate term structures of all tenors. More explicit separation of roll--over risk into its credit and funding liquidity components is left for Section \ref{CDScal} below, which includes credit default swaps in the set of calibration instruments. This version of the model is obtained by making the dynamics (\ref{r_c})--(\ref{phi}) specific as
\begin{eqnarray}
r_c(t)    &=& a_0(t)+\sum_{i=1}^{d} a_iy_{i}(t)\label{r_cspec}\\
\lambda(t) &=& b_0(t)+\sum_{i=1}^{d} b_iy_{i}(t)\label{lambdaspec}\\
\phi(t)    &=& c_0(t)+\sum_{i=1}^{d} c_iy_{i}(t)\label{phispec}
\end{eqnarray}
where the $y_{i}$ follow the Cox--Ingersoll--Ross (CIR) dynamics under the pricing measure, i.e. \\
\begin{equation}
dy_{i}(t) = \kappa_{i}(\theta_{i} -y_{i}(t)) dt + \sigma_{i}\sqrt{y_{i}(t)} dW_{i}(t),\label{dy_i}
\end{equation}\\
where $dW_{i}(t) \quad (i = 1, \cdots, d)$ are independent Wiener processes. In order to keep the model analytically tractable, in keeping with (\ref{r_c})--(\ref{phi}) we do not allow for time--dependent coefficients at this stage.\footnote{For example, following Schl\"ogl and Schl\"ogl \cite{OZ:Sch&Sch:00}, the coefficients could be made piecewise constant to facilitate calibration to at--the--money option price data, while retaining most analytical tractability.} Since each of the factors follow independent CIR--type dynamics, the sufficient condition for each factor to remain positive is $2\kappa_{i}\theta_{i}\geq \sigma_{i}^{2}, \forall i$, as discussed for the one--factor case in Cox et al. (\citeyear*{CIR1985}).

\begin{table}[t]
\begin{center}
\begin{tabular}{|l|l|l|l|l|l|l|l|l|l|}

\hline
\multicolumn{9}{|c|}{\;\;\;\;\;\;\;\;\;\;\;\;\;\;\;IRS \;\;\;\;\;\;\;\; \;\;\;\;\;\;\;\;\;\;\;\;OIS \;\;\;\;\;\;\;\;\;\;\;\;1m/3m\;\;\;\;\;\;\;\;\;\;\;\;\;3m/6m} \\\hline
T& bid& ask& bid& ask& bid& ask& bid& ask\\ \hline
    0.5   & 0.4434 & 0.4434 & 0.136 & 0.146 & 9.1   & 9.1   & 2.375 & 2.875 \\
    1     & 0.51  & 0.511 & 0.159 & 0.164 & 10    & 10.5  & 2.19675 & 2.19675 \\
    2     & 0.896 & 0.898 & 0.175 & 0.18  & 11.75 & 12.25 & 1.9375 & 2.05625 \\
    3     & 1.2354 & 1.2404 & 0.189 & 0.194 & 13    & 13.5  & 1.84375 & 1.96875 \\
    4     & 1.5103 & 1.5153 & 0.255 & 0.26  & 13.625 & 14.125 & 1.78125 & 1.90625 \\
    5     & 1.698 & 1.757 & 0.333 & 0.338 & 13.875 & 14.375 & 1.78125 & 1.90625 \\
    6     & 1.918 & 1.922 & 0.541 & 0.581 & 13.75 & 14.25 & 1.75  & 1.875 \\
    8     & 2.1877 & 2.1917 & 0.991 & 1.041 & 13.035 & 14.035 & 1.8125 & 1.9375 \\
    9     & 2.2857 & 2.288 & 1.255 & 1.305 & 13    & 13.5  & 1.84375 & 1.96875 \\
    10    & 2.3665 & 2.369 & 1.478 & 1.518 & 12.75 & 13.25 & 1.90625 & 2.03125 \\
\hline
\end{tabular}\\[1ex]
\caption{Market data quotes on 18/06/2015. Source: Bloomberg }\label{Marketdata:20150618}
\end{center}
\end{table}

\begin{table}[t]
\begin{center}
\begin{tabular}{|l|l|l|l|l|l|l|l|l|l|}

\hline
\multicolumn{9}{|c|}{IRS \;\;\;\; \;\;\;\;\;\;\;\;\;\;\;\;OIS \;\;\;\;\;\;\;\;\;\;\;\;\;\;\;\;\;1m/3m\;\;\;\;\;\;\;\;\;\;\;\;\;\;\;\;\;3m/6m} \\\hline
T& bid& ask& bid& ask& bid& ask& bid& ask\\ \hline

    0.5   & 0.90415 & 0.90415 & 0.379 & 0.385 & 17.3  & 17.875 & 4.96875 & 5.96875 \\
    1     & 0.7705 & 0.774 & 0.395 & 0.402 & 15.75 & 17.232 & 5.35  & 5.625 \\
    2     & 0.8935 & 0.8975 & 0.4055 & 0.4125 & 15    & 14.4  & 4.725 & 5.1 \\
    3     & 1.0005 & 1.0054 & 0.419 & 0.42  & 14.625 & 13.875 & 4.4   & 4.65 \\
    4     & 1.1075 & 1.1095 & 0.452 & 0.459 & 13.111 & 13.623 & 4.075 & 4.325 \\
    5     & 1.208 & 1.21  & 0.485 & 0.492 & 12.245 & 13.237 & 3.825 & 4.075 \\
    6     & 1.3097 & 1.312 & 0.54  & 0.548 & 12.2  & 12.755 & 3.825 & 3.825 \\
    8     & 1.4895 & 1.4895 & 0.685 & 0.695 & 10.063 & 11.337 & 3.7   & 3.7 \\
    9     & 1.5655 & 1.5675 & 0.7838 & 0.7938 & 9.665 & 10.904 & 3.575 & 3.825 \\
    10    & 1.634 & 1.636 & 0.863 & 0.903 & 9.359 & 10.625 & 3.625 & 3.85 \\

\hline
\end{tabular}\\[1ex]
\caption{Market data quotes on 20/04/2016. Source: Bloomberg }\label{Marketdata:20160420}
\end{center}
\end{table}

\begin{table}[t]
\begin{center}
\begin{tabular}{|l|l|l|l|l|l|l|l|l|l|}

\hline
\multicolumn{9}{|c|}{IRS \;\;\;\; \;\;\;\;\;\;\;\;\;\;\;\;OIS \;\;\;\;\;\;\;\;\;\;\;\;\;\;\;\;\;1m/3m\;\;\;\;\;\;\;\;\;\;\;\;\;\;\;\;\;3m/6m} \\\hline
T& bid& ask& bid& ask& bid& ask& bid& ask\\ \hline

   0.5 & 1.43128 & 1.43128 & 0.9809 & 0.989 & 13.7091 & 16.4909 &19.5232 & 20.2268 \\
    1     & 1.385 & 1.39  & 1.1165 & 1.1165 & 14.016 & 15.184 & 17.4461 & 18.6439 \\
    2     & 1.6098 & 1.6188 & 1.3335 & 1.3335 & 13.6717 & 15.0783 & 15.4479 & 16.752 \\
    3     & 1.795 & 1.7962 & 1.494 & 1.504 & 11.7628 & 15.4372 & 13.9918 & 15.4082 \\
    4     & 1.9304 & 1.9354 & 1.601 & 1.649 & 11.4041 & 14.0959 & 13.785 & 14.365 \\
    5     & 2.038 & 2.0401 & 1.718 & 1.728 & 11.6383 & 12.3617 & 13.6495 & 14.1505 \\
    6     & 2.1249 & 2.1299 & 1.796 & 1.804 & 10.7223 & 11.7777 & 13.4117 & 14.3383 \\
    8     & 2.2614 & 2.2664 & 1.914 & 1.924 & 10.1162 & 10.7838 & 13.4088 & 14.2912 \\
    9     & 2.3159 & 2.3209 & 0.9809 & 0.989 & 9.7987 & 10.5013 & 14.0925 & 14.3075 \\
    10    & 2.3671 & 2.368 & 1.987 & 2.037 & 9.5421 & 10.258 & 13.6338 & 14.6062 \\
\hline
\end{tabular}\\[1ex]
\caption{Market data quotes on 22/03/2017. Source: Bloomberg }\label{Marketdata:20170322}
\end{center}
\end{table}

\begin{table}[t]
\begin{center}
\begin{tabular}{|l|l|l|l|l|l|l|l|l|l|}

\hline
\multicolumn{9}{|c|}{IRS \;\;\;\; \;\;\;\;\;\;\;\;\;\;\;\;OIS \;\;\;\;\;\;\;\;\;\;\;\;\;\;\;\;\;1m/3m\;\;\;\;\;\;\;\;\;\;\;\;\;\;\;\;\;3m/6m} \\\hline
T& bid& ask& bid& ask& bid& ask& bid& ask\\ \hline

   0.5 & 1.57511 & 1.57511 & 1.3455 & 1.3555 & 2.5057 & 5.9943 & 9.5544 & 10.1956 \\
    1     & 1.6286 & 1.6316 & 1.4655 & 1.4715 & 6.2566 & 6.7434 & 10.127 & 11.0731 \\
    2     & 1.8071 & 1.8121 & 1.59  & 1.63  & 7.127 & 8.873 & 10.8631 & 11.3869 \\
    3     & 1.9235 & 1.9284 & 1.699 & 1.709 & 8.1502 & 8.8498 & 11.2344 & 12.0156 \\
    4     & 2.0095 & 2.0115 & 1.749 & 1.799 & 8.1319 & 9.1181 & 11.4939 & 12.3061 \\
    5     & 2.0815 & 2.0832 & 1.827 & 1.837 & 8.2078 & 9.0422 & 11.8685 & 12.8315 \\
    6     & 2.1469 & 2.1488 & 1.862 & 1.912 & 7.8617 & 8.8383 & 12.1997 & 13.1753 \\
    8     & 2.261 & 2.264 & 1.961 & 2.011 & 7.2408 & 8.0593 & 13.1587 & 13.8513 \\
    9     & 2.3086 & 2.3103 & 1.3455 & 2.042 & 7.3706 & 7.8294 & 13.5685 & 14.3315 \\
    10    & 2.3519 & 2.3533 & 2.035 & 2.085 & 6.8616 & 7.8384 & 13.4028 & 14.7973 \\
\hline
\end{tabular}\\[1ex]
\caption{Market data quotes on 31/10/2017. Source: Bloomberg }\label{Marketdata:20171031}
\end{center}
\end{table}

We begin our calibration with OIS, then include money market instruments such as vanilla swaps and basis swaps. Calibration to instruments with optionality (caps/floors and swaptions) is in principle possible, but not included in the scope of the present paper. At each step calibration consists in minimising the sum of squared deviations from calibration conditions such as (\ref{DOIScond}) and (\ref{swapcond}).

Since we are modelling several sources of risk simultaneously, the calibration problem necessarily becomes one of high-dimensional non-linear optimisation. To this end, we apply the ``Adaptive Simulated Annealing'' algorithm of Ingber \cite{Ingber:93} and Differential Evolution of Storn and Price \cite{Rainer:1997}. The data set (of daily bid and ask values) for the instruments under consideration is sourced from Bloomberg, consisting of US dollar (USD) instruments commencing on 1 January 2013 and ending on 31 October 2017. Since we are calibrating the model to ``cross--sectional'' data, i.e. ``snapshots'' of the market on a single day, we present the results for an exemplary set of dates only, having verified that results for other days in our data period are qualitatively similar.

\subsection{Market data processing}
Our goal is to jointly calibrate the model to three different tenor frequencies, i.e., to  1-month, 3-month and 6-month tenors, as well as to the OIS--based discount factors. Table \ref{Marketdata:20130101} to \ref{Marketdata:20171031} show market data quoted on the 01/01/2013, 08/09/2014,  18/06/2015, 20/04/2016, 22/03/2017 and 31/10/2017, respectively.\footnote{With maturities $T=\{0.5,1,2,3,4,5,6,8,9,10\}$, as there are no quotes for  maturity of 7 years in the basis swap market. IRS and OIS are given in $\%$ while basis swaps are quoted in basis points, i.e., one--hundredths of one percent.}

In the USD market, the benchmark interest rate swap (IRS) pays 3-month LIBOR floating vs. 6-month fixed, i.e. $\delta=0.25$ and $\Delta=0.5$. To get IRS indexed to LIBOR of other maturities, we can use combinations of basis swaps with the benchmark IRS.
Calibration conditions for 1-month, 3-month and 6-month tenors are given in Equations \eqref{basisswapcond1}, \eqref{swapcond}, and in \eqref{basisswapcond2}, respectively. We compute the right--hand side of these equations using the market data. We shall call this ``the market side.''

We use the OIS equation \eqref{OIS} for interpolation. Our approach is to calibrate the model to OIS--based discount factors, then use the calibrated model to infer discount factors of any required intermediate maturities. In particular, we bootstrap the time-dependent parameter $a_0$ for both OIS bid and ask and then interpolate the bond prices using the OIS equation to get the corresponding bid and ask values for intermediate maturities.

\subsection{Model calibration problem}
Model calibration is a process of finding model parameters such that model
prices match market prices as closely as possible, i.e., in equation \eqref{swapcond}, we want the left hand side to be equal to the right hand side. In essence, we are implying model parameters from the market data. The most common approach is to minimise the deviation between model prices and market prices, i.e., in the ``relative least--squares'' sense of
\begin{eqnarray}\label{cali}
\arg \min_{\theta\in\Theta} \sum_{i=1}^N \left(\frac{P_{\tiny \mbox{model}}(\theta)-P_{\tiny \mbox{market}}}{P_{\tiny \mbox{market}}}\right)^2,
\end{eqnarray}
where $N$ is the number of calibration instruments. $\theta$ represents a choice of parameters from $\Theta$, the admissible set of model parameter values. $P_{\tiny \mbox{model}}$ and $P_{\tiny \mbox{market}}$ are the model and market prices, respectively. In our case, $P_{\tiny \mbox{model}}$ is the left hand side of \eqref{swapcond} and $P_{\tiny \mbox{market}}$ is the right hand side of the same equation. Market instruments are usually quoted as bid and ask prices, therefore we consider market prices to be accurate only to within the bid/ask spread. Thus, we solely require the model prices to fall within bid--ask domain rather than attempting to fit the mid price exactly. The problem in \eqref{cali} can be reformulated as
\begin{eqnarray}\label{bidask}
\arg \min_{\theta\in\Theta} \sum_{i=1}^N \left(\max\left\{\frac{P_{\tiny \mbox{model}}(\theta)-P_{\tiny \mbox{market}}^{\tiny \mbox{ask}}}{P_{\tiny \mbox{market}}^{\tiny \mbox{ask}}},0\right\}+\max\left\{\frac{P_{\tiny \mbox{market}}^{\tiny \mbox{bid}}-P_{\tiny \mbox{model}}(\theta)}{P_{\tiny \mbox{market}}^{\tiny \mbox{bid}}},0\right\}\right)^2.
\end{eqnarray}

Gradient-based techniques typically are inadequate to handle high--dimensional problems of this type.  Instead, we utilise two methods for global optimisation, Differential Evolution (DE) (see Storn and Price \cite{Rainer:1995,Rainer:1997}) and Adaptive Simulated Annealing (ASA) (see Ingber \cite{Ingber:1996,Ingber:93}).

\subsection{Model parameters}
Recall that the model is made up of the following quantities: $r_c$ which is the short rate abstraction of the interbank overnight rate, $\phi$ the pure funding liquidity risk, and $\lambda$ which represents the credit or renewal risk portion of roll-over risk.

\begin{figure}[t]
\begin{center}
\includegraphics[width=0.9\textwidth]{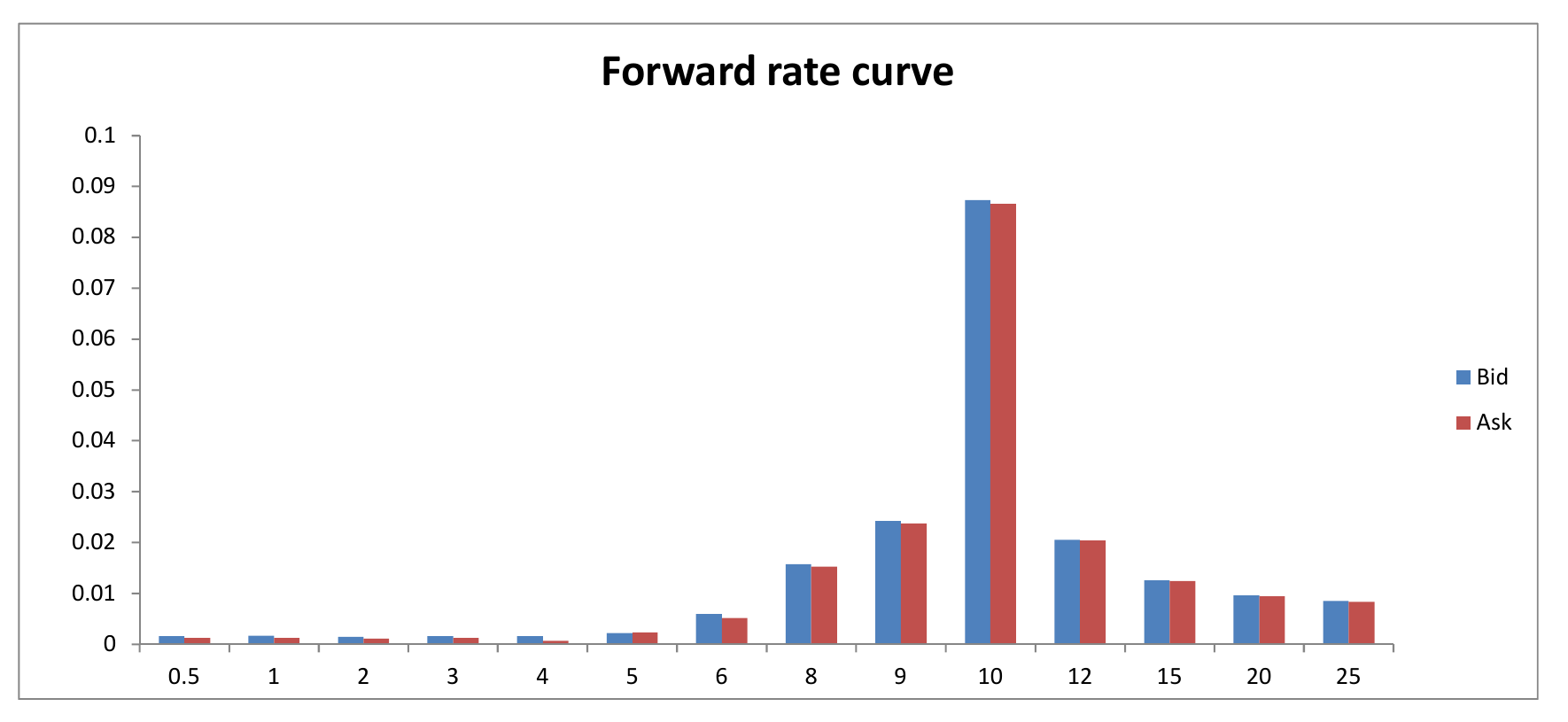}
\end{center}
\caption{OIS--implied forward rates on 1 January 2013 (based on data from Bloomberg)}\label{OISfwd}
\end{figure}
In order to keep the dimension of the optimisation problem manageable, we split the calibration into 3 stages:
\begin{itemize}
\item Stage 1: Fit the model to OIS data. The OIS model parameters consist of the set
\[\theta=\{a_i,y_i(0),\sigma_i,\kappa_i,\theta_i,\;\forall i=1,\cdots,d, \;\mbox{and}\;\; a_0(t),\; t\in [T_{k-1},T_k],\; k=1,2,\cdots,10\}.\]
where $d$ is the number of factors (one or three in the cases considered here).
\item Stage 2: Using Stage 1 calibrated parameters, calibrate the model to basis swaps. From the results in Section \ref{CIRLIBOR}, note that basis swaps depend on $c_0(t)$ and $b_0(t)$ only via $c_0(t)+qb_0(t)$, i.e. the two cannot be separated based on basis swap data alone --- this will be done when calibration to credit default swaps is implemented in Section \ref{CDScal}. Define $d_0(t)=c_0(t)+qb_0(t)$ and in this stage assume constant $d_0$. Thus, the parameters which need to be determined in this step are
\[\theta=\{b_i,c_i,q, \;\forall i=1,\cdots,d, \;\mbox{and}\;\; d_0\}.\]
\item Stage 3: Now, keeping all other parameters fixed, we improve the fit to vanilla and basis swaps by allowing $d_0(t)$ to be a piecewise constant function of time, i.e. $d_0(t)=d_0^{(k)}$ for $t\in[T_{k-1},T_k).$ Since the shortest tenor is one month, the $[T_{k-1},T_k)$ are chosen to be one--month time intervals. The calibration problem is then underdetermined, so we aim for the $d_0^{(k)}$ to vary as little as possible by imposing an additional penalty in the calibration objective function based on the sum of squared deviations between consecutive values, i.e.
    $$
    \sum_k(d_0^{(k+1)}-d_0^{(k)})^2.
    $$
    This type of smoothness criterion on time--varying parameters is well established in the literature on the calibration of interest rate term structure models, with its use in this context dating back to the seminal paper by Pedersen (\citeyear*{OZ:Pedersen:98}).
\end{itemize}

\subsection{Fitting to OIS discount factors}\label{OISfit}
\begin{figure}[t]
  \centering
    \begin{subfigure}[b]{0.53\textwidth}
        \includegraphics[width=\textwidth]{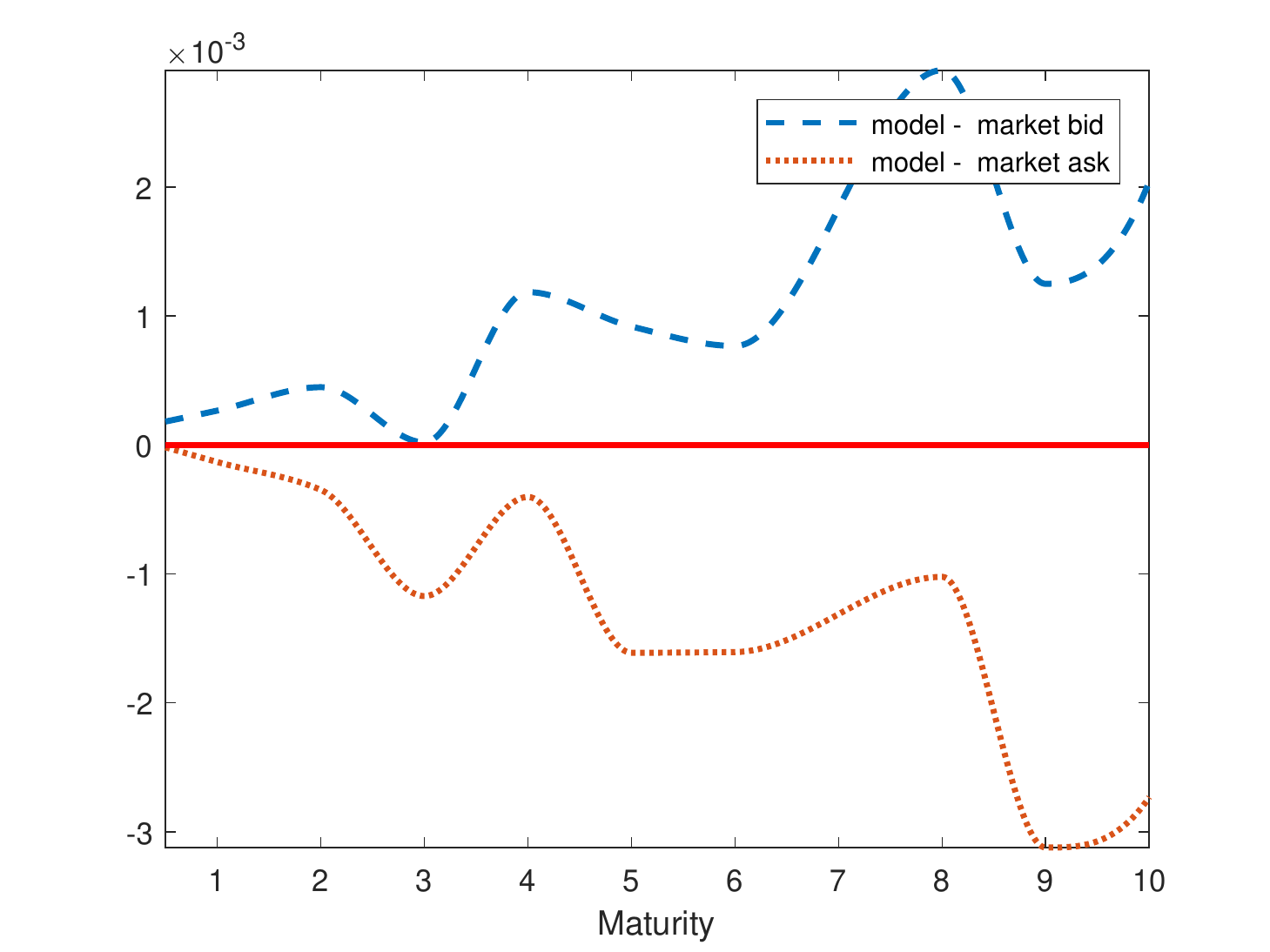}
        \caption{1 January 2013}
        \label{fig:u}
    \end{subfigure}
    ~
    \begin{subfigure}[b]{0.53\textwidth}
        \includegraphics[width=\textwidth]{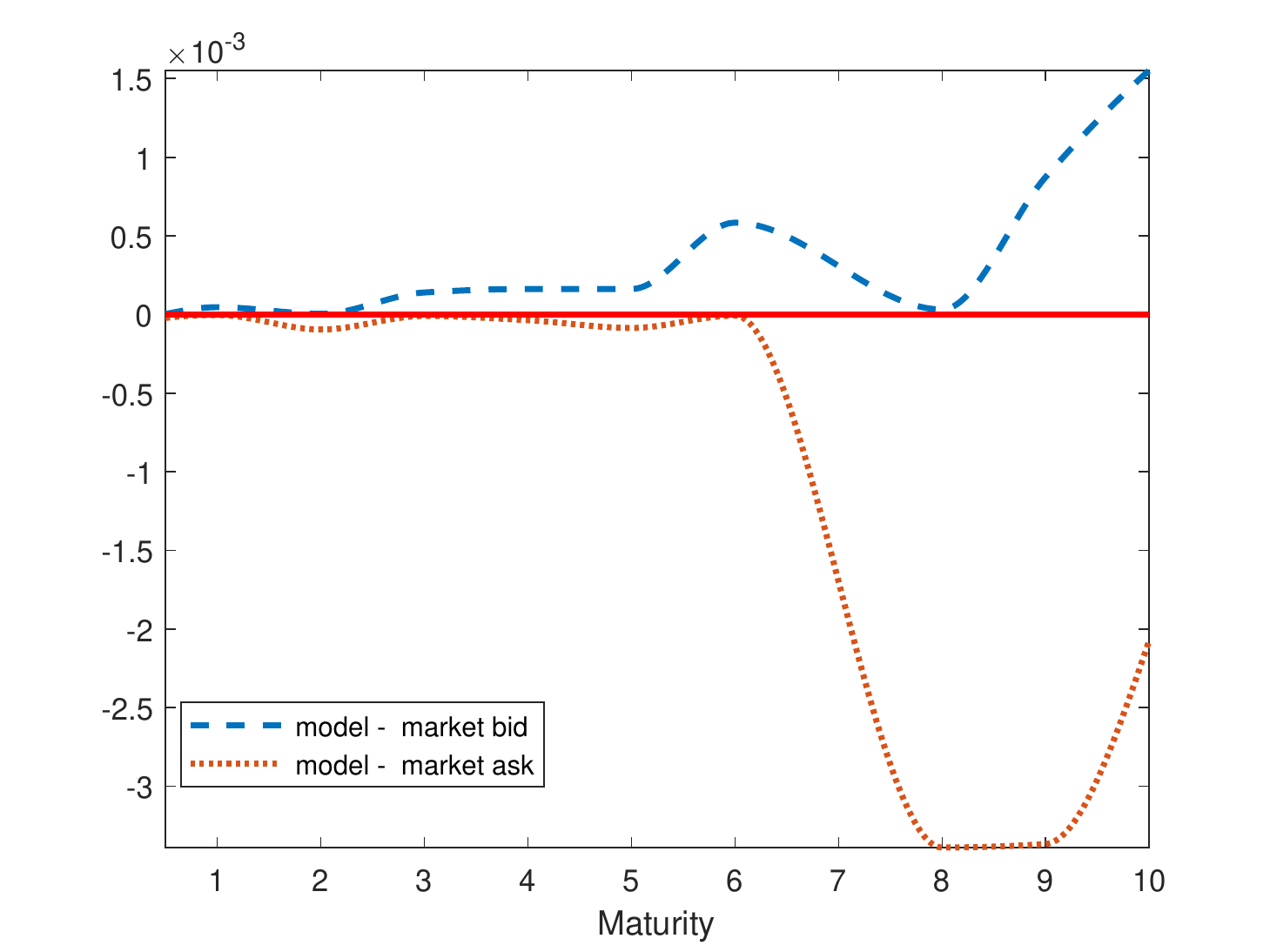}
       \caption{08 September 2014}
        \label{figx}
    \end{subfigure}

\begin{subfigure}[b]{0.53\textwidth}
        \includegraphics[width=\textwidth]{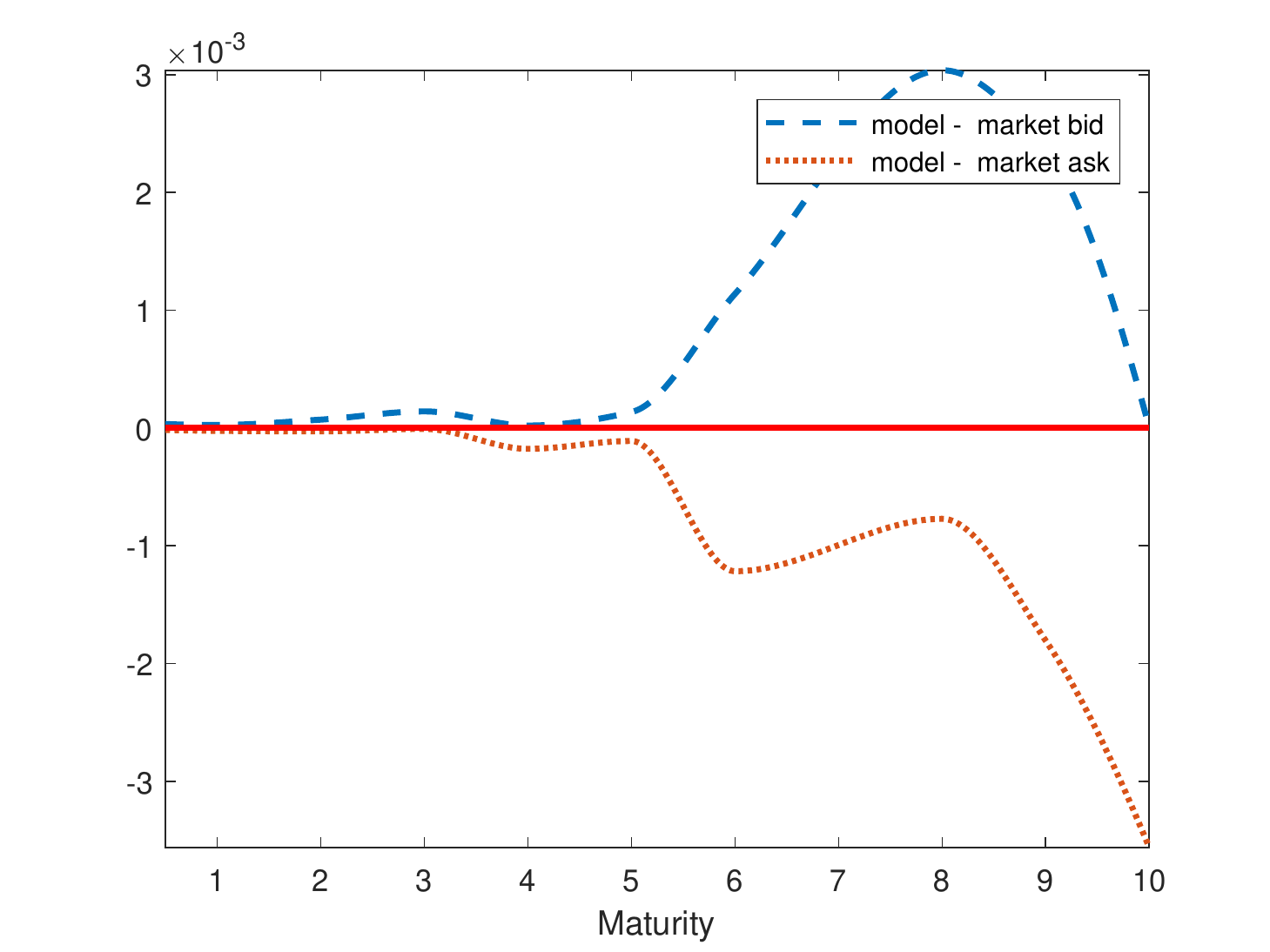}
       \caption{18 June 2015}
        \label{figx}
    \end{subfigure}
    ~
     \begin{subfigure}[b]{0.53\textwidth}
        \includegraphics[width=\textwidth]{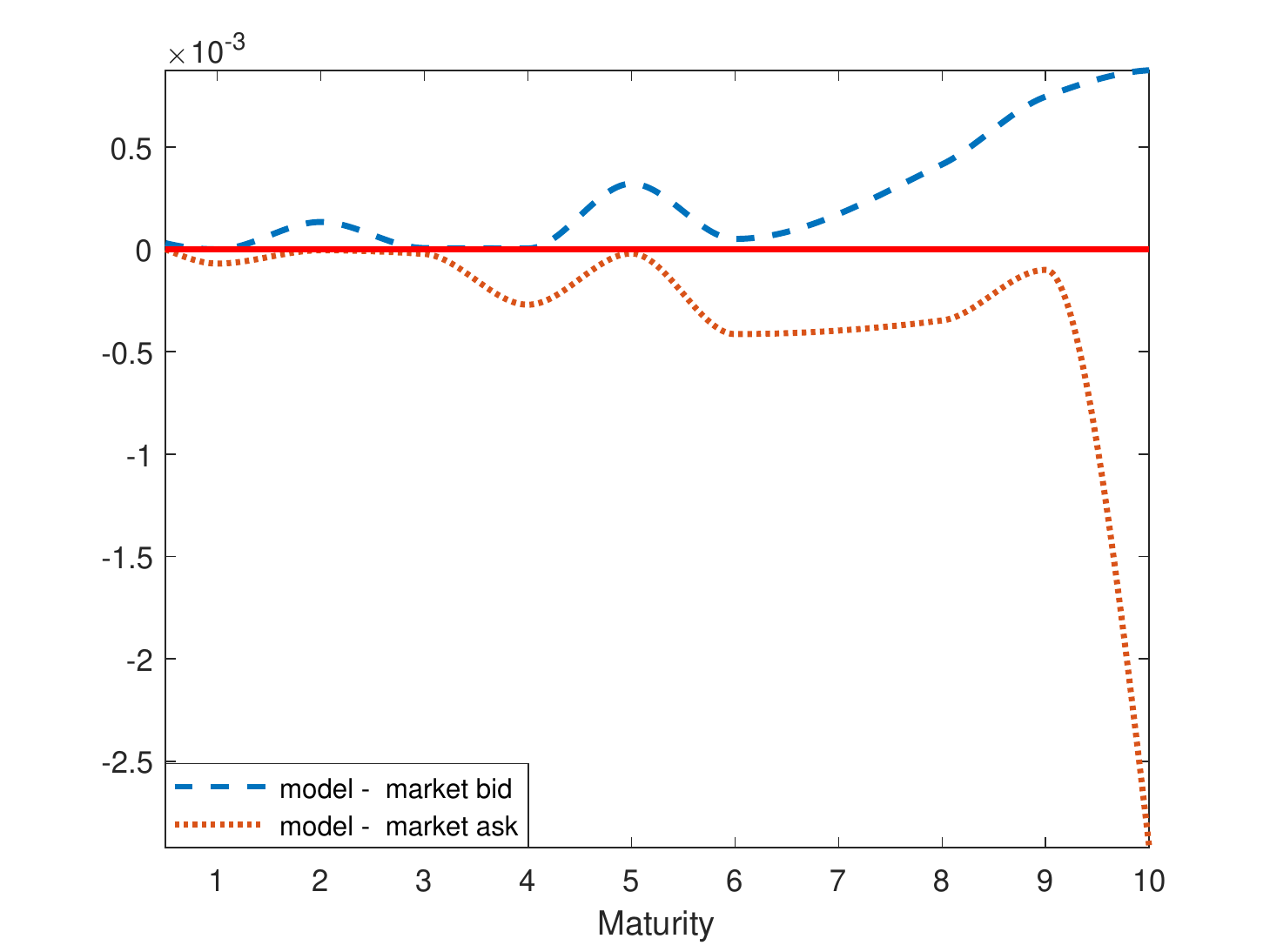}
       \caption{20 April 2016}
        \label{figxa}
    \end{subfigure}

    \begin{subfigure}[b]{0.53\textwidth}
        \includegraphics[width=\textwidth]{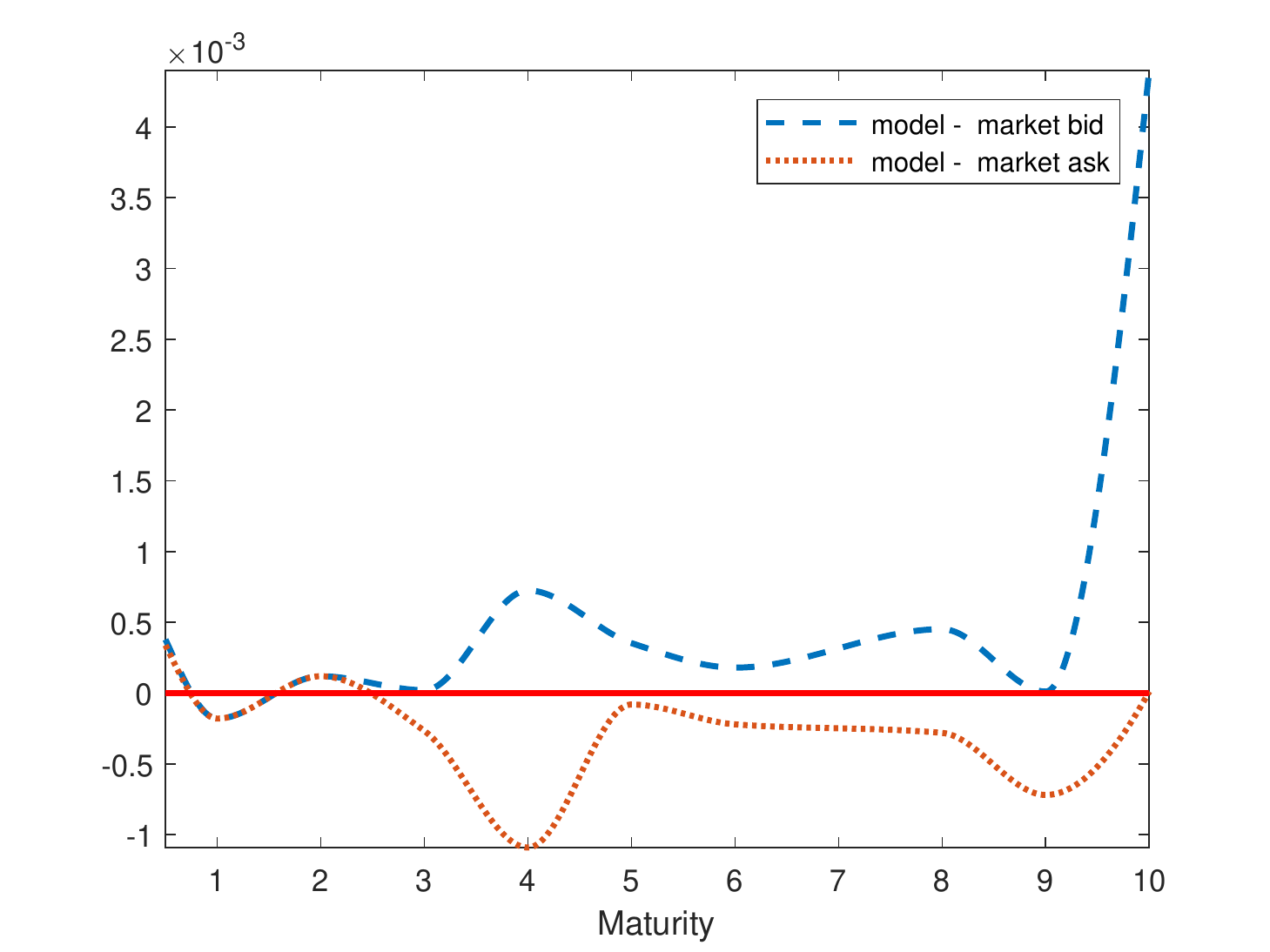}
       \caption{22 March 2017}
        \label{figx}
    \end{subfigure}
    ~
     \begin{subfigure}[b]{0.53\textwidth}
        \includegraphics[width=\textwidth]{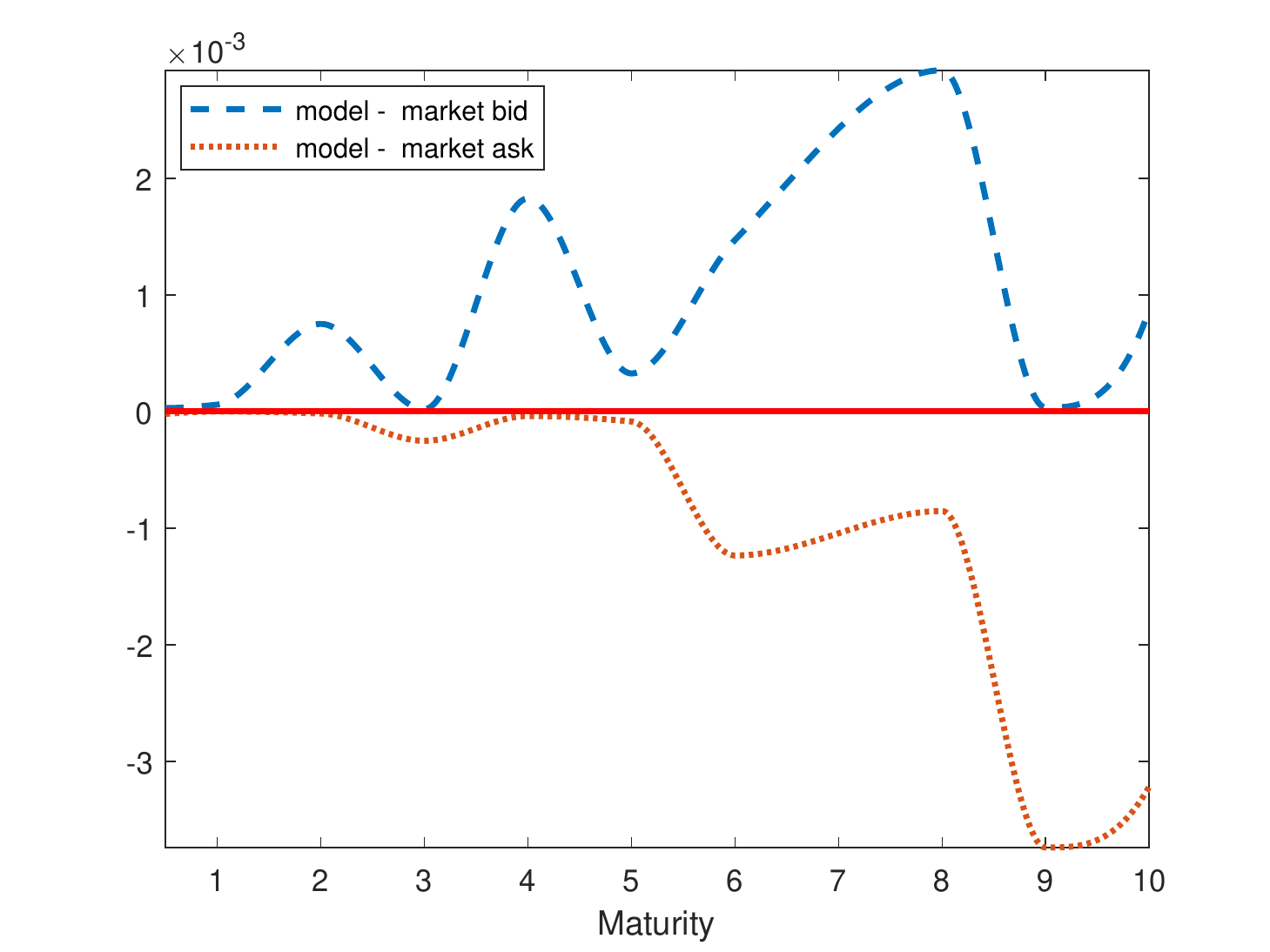}
       \caption{31 October 2017}
        \label{figxa}
    \end{subfigure}
    \caption{One--factor model fit to OIS discount factors (based on data from Bloomberg)}\label{OIS1Factor}
    \end{figure}
We use overnight index swap (OIS) data to extract the OIS discount factors $D^{\tiny \mbox{OIS}}(t,T)$ as per equations (\ref{DOIS}) and (\ref{DOISmult}). Note that this is model--independent.

Bloomberg provides OIS data out to a maturity of 30 years. However, there seems to be a systematic difference in the treatment of OIS beyond 10 years compared to maturities of ten years or less. This becomes particularly evident when one looks at the forward rates implied by the OIS discount factors: As can be seen in Figure \ref{OISfwd}, there is a marked spike in the forward rate at the 10--year mark. Therefore, for the time being we will focus on maturities out to ten years only in our model calibration.

The OIS discount factors depend only on the dynamics of $r_c$, thus calibrating to OIS discount factors only involves the parameters in equation (\ref{r_c}). As is standard practice in CIR--type term structure models,\footnote{See e.g. Brigo and Mercurio \cite{BM2006}, Section 4.3.4.} we can ensure that the model fits the observed (initial) term structure by the time dependence in the (deterministic) function $a_0(t)$. In the one--factor case ($d=1$), choosing $a_1=1$, we first fit the initial $y_1(0)$ and constant parameters $a_0$, $\kappa_1$, $\theta_1$ and $\sigma_1$ in such a way as to match the observed OIS discount factors on a given day as closely as possible,\footnote{If one were to include market instruments with option features in the calibration, the volatility parameters would be calibrated primarily to those instruments, rather than the shape of the term structure of OIS discount factors.} and then choose time--dependent (piecewise constant) $a_0(t)$ to achieve a perfect fit, as illustrated in Figure \ref{OIS1Factor}. In this figure, the difference between the discount factor based on OIS bid and the model discount factor is always negative, and the difference between the discount factor based on OIS ask and the model discount factor is always positive. This means that we have fitted the model to the market in the sense that the model price always lies between the market bid and ask.

\begin{table}[t]

  \centering
    \begin{tabular}{|r|rrrrr|}
   \hline
    \multicolumn{1}{|l|}{Factor} & \multicolumn{1}{l}{$y(0)$} & \multicolumn{1}{l}{$\kappa$} & \multicolumn{1}{l}{$\theta$} & \multicolumn{1}{l}{$\sigma$} & \multicolumn{1}{l|}{$a$} \\
     \hline
    \multirow{3}[2]{*}{3} & 0.049993 & 0.419187 & 0.199849 & 0.355935 & 0.003736 \\
          & 0.01501 & 0.691312 & 0.006249 & 0.08993 & 0.002735 \\
          & 0.002957 & 0.074725 & 0.18461 & 0.009337 & 0.00404 \\
     \hline
    1     & 0.476693 & 0.455794 & 0.134384 & 0.052677 & 0.000699 \\
    \hline
    \end{tabular}\\[1ex]

    \caption{CIR model calibrated parameters on 01/01/2013}\label{OISPar:20130101}%

   \begin{tabular}{|r|rrrrr|}
   \hline
    \multicolumn{1}{|l|}{Factor} & \multicolumn{1}{l}{$y(0)$} & \multicolumn{1}{l}{$\kappa$} & \multicolumn{1}{l}{$\theta$} & \multicolumn{1}{l}{$\sigma$} & \multicolumn{1}{l|}{$a$} \\
     \hline
    \multirow{3}[2]{*}{3} & 0.738431 & 0.5457939 & 0.8202671 & 0.850932 & 1.73E-07 \\
    3     & 0.013289 & 0.9216817 & 0.1453924 & 0.114773 & 0.004153 \\
          & 0.168724 & 0.5690233 & 0.09363364 & 0.005876 & 0.000693 \\
          \hline
    1     & 0.54686 & 0.7273657 & 0.04681492 & 0.205392 & 6.67E-05 \\

    \hline
    \end{tabular}\\[1ex]

    \caption{CIR model calibrated parameters on 08/09/2014}\label{OISPar:20140908}%

    \begin{tabular}{|c|ccccc|}
     \hline
    Factor & $y(0)$    & $\kappa$ & $\theta$ & $\sigma$ & $a$ \\
     \hline
    \multirow{3}[2]{*}{3} & 0.345831 & 0.038334 & 0.891026 & 0.024648 & 1.25E-06 \\
          & 0.127899 & 0.391148 & 0.90416 & 0.184931 & 3.78E-06 \\
          & 0.926643 & 0.694078 & 0.06601 & 0.160443 & 0.000129 \\
   \hline
    1     & 0.660211 & 0.49086 & 0.602925 & 0.318304 & 0.001602 \\
     \hline
    \end{tabular}%

    \caption{CIR model calibrated parameters on 18/06/2015}\label{OISPar:20150618}

    \begin{tabular}{|c|ccccc|}
     \hline
    Factor & $y(0)$    & $\kappa$ & $\theta$ & $\sigma$ & $a$ \\
    \hline
    \multirow{3}[2]{*}{3} & 0.241573 & 0.723951 & 0.262249 & 0.614391 & 0.006614 \\
          & 0.974125 & 0.435349 & 0.016192 & 0.022887 & 0.001613 \\
          & 0.155927 & 0.531023 & 0.088924 & 0.08861 & 0.006218 \\
     \hline
    1     & 0.002608 & 0.674544 & 0.051723 & 0.253174 & 0.01166 \\
   \hline
    \end{tabular}%

    \caption{CIR model calibrated parameters on 20/04/2016}\label{OISPar:20160420}

     \begin{tabular}{|c|ccccc|}
     \hline
    Factor & $y(0)$    & $\kappa$ & $\theta$ & $\sigma$ & $a$ \\
    \hline
    \multirow{3}[2]{*}{3} & 0.342 & 0.7045459 & 0.2785814 & 0.254248 & 0.003887 \\
    3     & 0.534943 & 0.6002792 & 0.966547 & 0.333788 & 0.001546 \\
          & 0.673438 & 0.4692102 & 0.1112721 & 0.206185 & 0.001157 \\
    \midrule
    1     & 0.340134 & 0.2156976 & 0.1746632 & 0.005657 & 9.40E-03 \\
   \hline
    \end{tabular}%

    \caption{CIR model calibrated parameters on 22/03/2017}\label{OISPar:201703122}

    \begin{tabular}{|c|ccccc|}
     \hline
    Factor & $y(0)$    & $\kappa$ & $\theta$ & $\sigma$ & $a$ \\
    \hline
    \multirow{3}[2]{*}{3} & 0.365879 & 0.2461593 & 0.1192502 & 0.11067 & 0.000127 \\
    \multicolumn{1}{|c|}{\multirow{2}[1]{*}{3}} & 0.028448 & 0.2058425 & 0.9971775 & 0.177739 & 0.000251 \\
          & 0.095642 & 0.9477889 & 0.4641101 & 0.312365 & 0.003171 \\
    \hline
    1     & 0.773084 & 0.2608761 & 0.7980566 & 0.264573 & 3.34E-03 \\
   \hline
    \end{tabular}%

    \caption{CIR model calibrated parameters on 31/10/2017}\label{OISPar:20171031}
\end{table}

\begin{table}[t]
\centering
\begin{subtable}{.4\textwidth}
\centering
    \begin{tabular}{|ccc|}
     \hline
    $T$     & $a_0^{(1)}$ & $a_0^{(3)}$ \\
     \hline
    0.5   & 1.03E-03 & 1.00E-03 \\
    1     & 0.001164 & 0.001524 \\
    2     & 0.001251 & 0.000397 \\
    3     & 0.001756 & 0.000795 \\
    4     & 0.000284 & 0.000684 \\
    5     & 0.00174 & 0.000206 \\
    6     & 0.002237 & 0.000958 \\
    8     & 0.004769 & 0.005657 \\
    9     & 0.017322 & 0.010415 \\
    10    & 0.023314 & 0.027489 \\
    \bottomrule
    \end{tabular}
\caption{01 January 2013}
\end{subtable}%
\begin{subtable}{.4\textwidth}
\centering
    \begin{tabular}{|rrr|}
     \hline
    $T$ &$a_0^{(1)}$ & $a_0^{(3)}$ \\
   \hline

    0.5   & 0.000923 & 0.000655 \\
    1     & 0.000871 & 0.000591 \\
    2     & 0.001037 & 0.000424 \\
    3     & 0.000964 & 0.000461 \\
    4     & 0.002122 & 0.001435 \\
    5     & 0.004211 & 0.003586 \\
    6     & 0.011486 & 0.011333 \\
    8     & 0.034919 & 0.032802 \\
    9     & 0.033426 & 0.031482 \\
    10    & 0.040708 & 0.041732 \\
     \hline
    \end{tabular}
\caption{08 September 2014}
\end{subtable}\vspace*{1.5em}

\begin{subtable}{.4\textwidth}
\centering
    \begin{tabular}{|rrr|}
     \hline
    $T$ &$a_0^{(1)}$ & $a_0^{(3)}$ \\
   \hline
    0.5   & 3.463E-04  & 5.09E-04   \\
    1     & 0.000807 & 0.000817 \\
    2     & 0.000901 & 0.001029 \\
    3     & 0.001155 & 0.000885 \\
    4     & 0.003721 & 0.003477 \\
    5     & 0.005422 & 0.005166\\
    6     & 0.016119 & 0.014454 \\
    8     & 0.022774 & 0.024241 \\
    9     & 0.035287 & 0.030412 \\
    10    & 0.037843 & 0.037339 \\
     \hline
    \end{tabular}
\caption{18 June 2015}
\end{subtable}
\begin{subtable}{.4\textwidth}
\centering
     \begin{tabular}{|ccc|}
    \hline
    $T$ &$a_0^{(1)}$ & $a_0^{(3)}$ \\
    \hline
    0.5   & 3.67E-03  & 0.000964 \\
    1     & 0.00399 & 0.000465 \\
    2     & 0.003692 & 0.000876 \\
    3     & 0.003975 & 0.001434 \\
    4     & 0.00521 & 0.0028 \\
    5     & 0.00534 & 0.003874 \\
    6     & 0.007987 & 0.00584 \\
    8     & 0.010661 & 0.008865 \\
    9     & 0.01513 & 0.014107 \\
    10    & 0.018535 & 0.015489 \\
     \hline
    \end{tabular}%
\caption{20 April 2016}
\end{subtable}\vspace*{1.5em}

\begin{subtable}{.4\textwidth}
\centering
    \begin{tabular}{|rrr|}
     \hline
    $T$ &$a_0^{(1)}$ & $a_0^{(3)}$ \\
   \hline

    0.5   & 0.006667 & 0.006766 \\
    1     & 0.009529 & 0.00978 \\
    2     & 0.012614 & 0.012493 \\
    3     & 0.015748 & 0.015599 \\
    4     & 0.016677 & 0.017447 \\
    5     & 0.019813 & 0.017904 \\
    6     & 0.019809 & 0.019014 \\
    8     & 0.030165 & 0.029841 \\
    9     & 0.002889 & 0.001362 \\
    10    & 0.026624 & 0.023907 \\

     \hline
    \end{tabular}
\caption{22 march 2017}
\end{subtable}
\begin{subtable}{.4\textwidth}
\centering
     \begin{tabular}{|ccc|}
    \hline
    $T$ &$a_0^{(1)}$ & $a_0^{(3)}$ \\
    \hline

    0.5   & 0.010865 & 0.012902 \\
    1     & 0.013164 & 0.014817 \\
    2     & 0.014373 & 0.01626 \\
    3     & 0.016675 & 0.016998 \\
    4     & 0.016015 & 0.019136 \\
    5     & 0.018735 & 0.017861 \\
    6     & 0.018916 & 0.021131 \\
    8     & 0.029504 & 0.029561 \\
    9     & 0.003794 & 0.00262 \\
    10    & 0.021191 & 0.022741 \\

     \hline
    \end{tabular}%
\caption{31 October 2017}
\end{subtable}\vspace*{1.5em}

\caption{Time-dependent parameter calibrations}\label{tab:a0}

\end{table}

The fitted parameters are shown in Tables \ref{OISPar:20130101} to \ref{tab:a0}.
\begin{figure}[t]
  \centering
    \begin{subfigure}[b]{0.53\textwidth}
        \includegraphics[width=\textwidth]{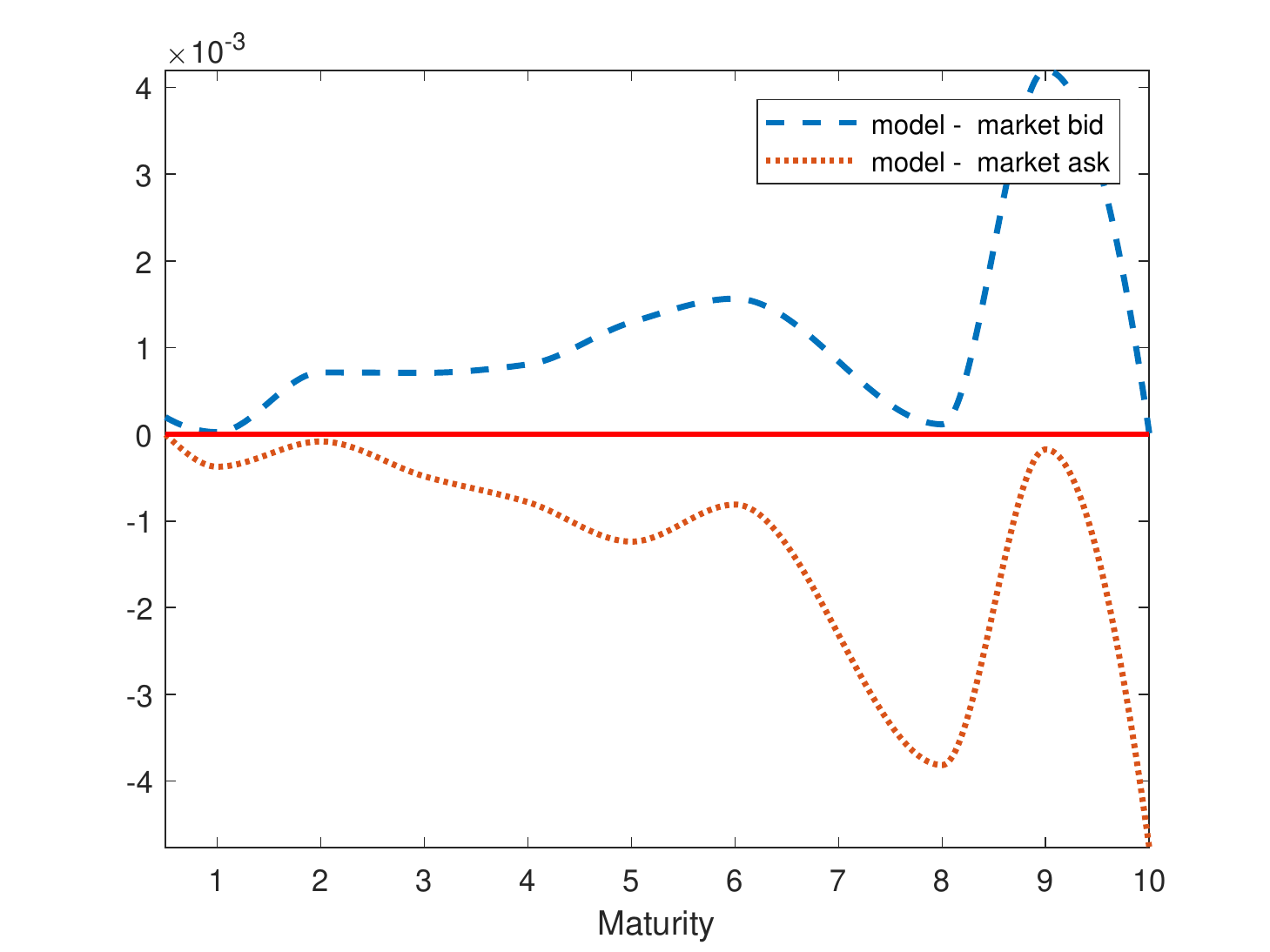}
        \caption{1 January 2013}
        \label{fig:ua}
    \end{subfigure}
    ~
    \begin{subfigure}[b]{0.53\textwidth}
        \includegraphics[width=\textwidth]{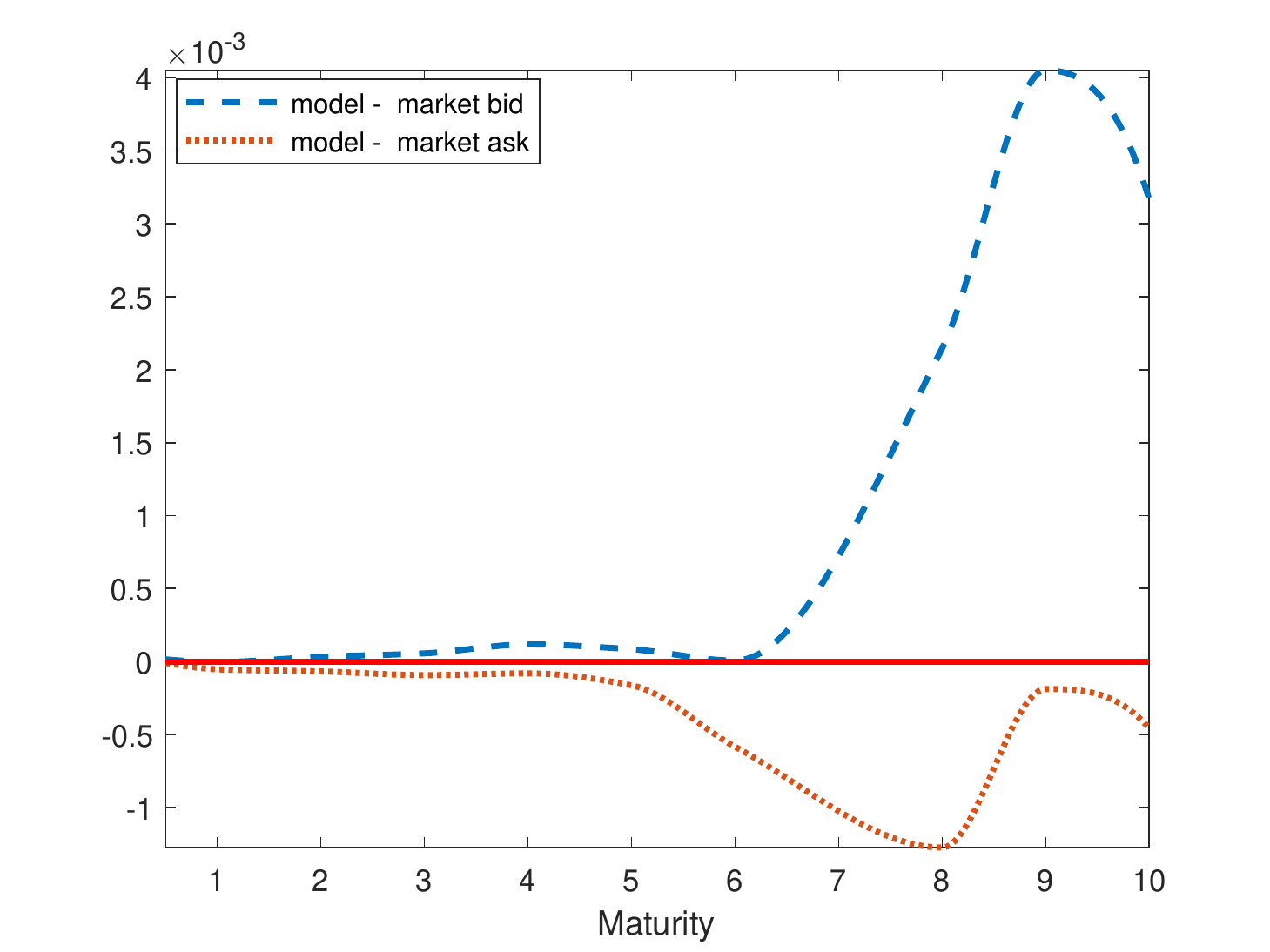}
       \caption{08 September 2014}
        \label{figxajun}
    \end{subfigure}\vspace*{1em}

 \begin{subfigure}[b]{0.53\textwidth}
        \includegraphics[width=\textwidth]{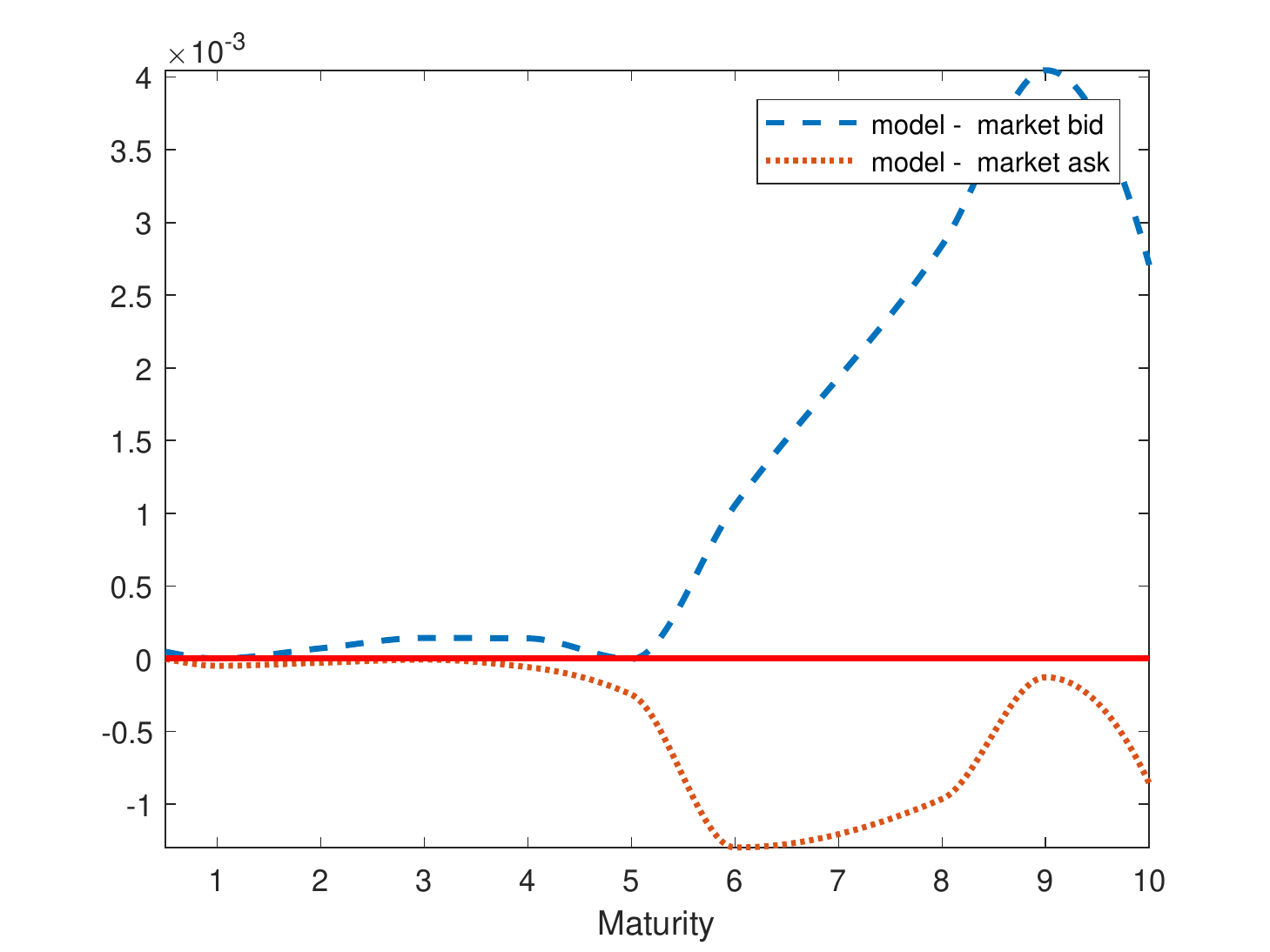}
       \caption{18 June 2015}
        \label{figxajun}
    \end{subfigure}
    ~
     \begin{subfigure}[b]{0.53\textwidth}
        \includegraphics[width=\textwidth]{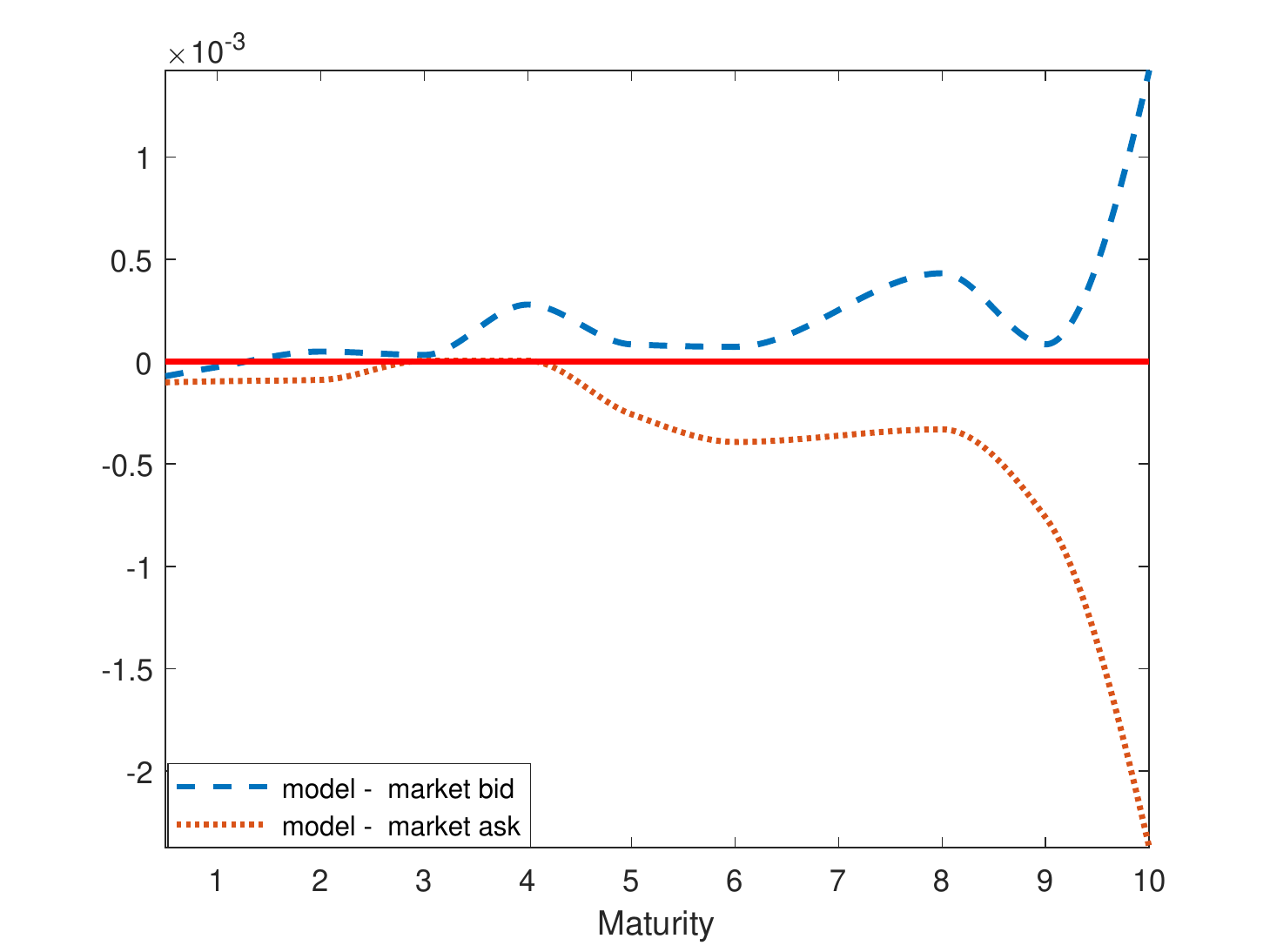}
       \caption{20 April 2016}
        \label{figxaapr}
    \end{subfigure}\vspace*{1em}

     \begin{subfigure}[b]{0.53\textwidth}
        \includegraphics[width=\textwidth]{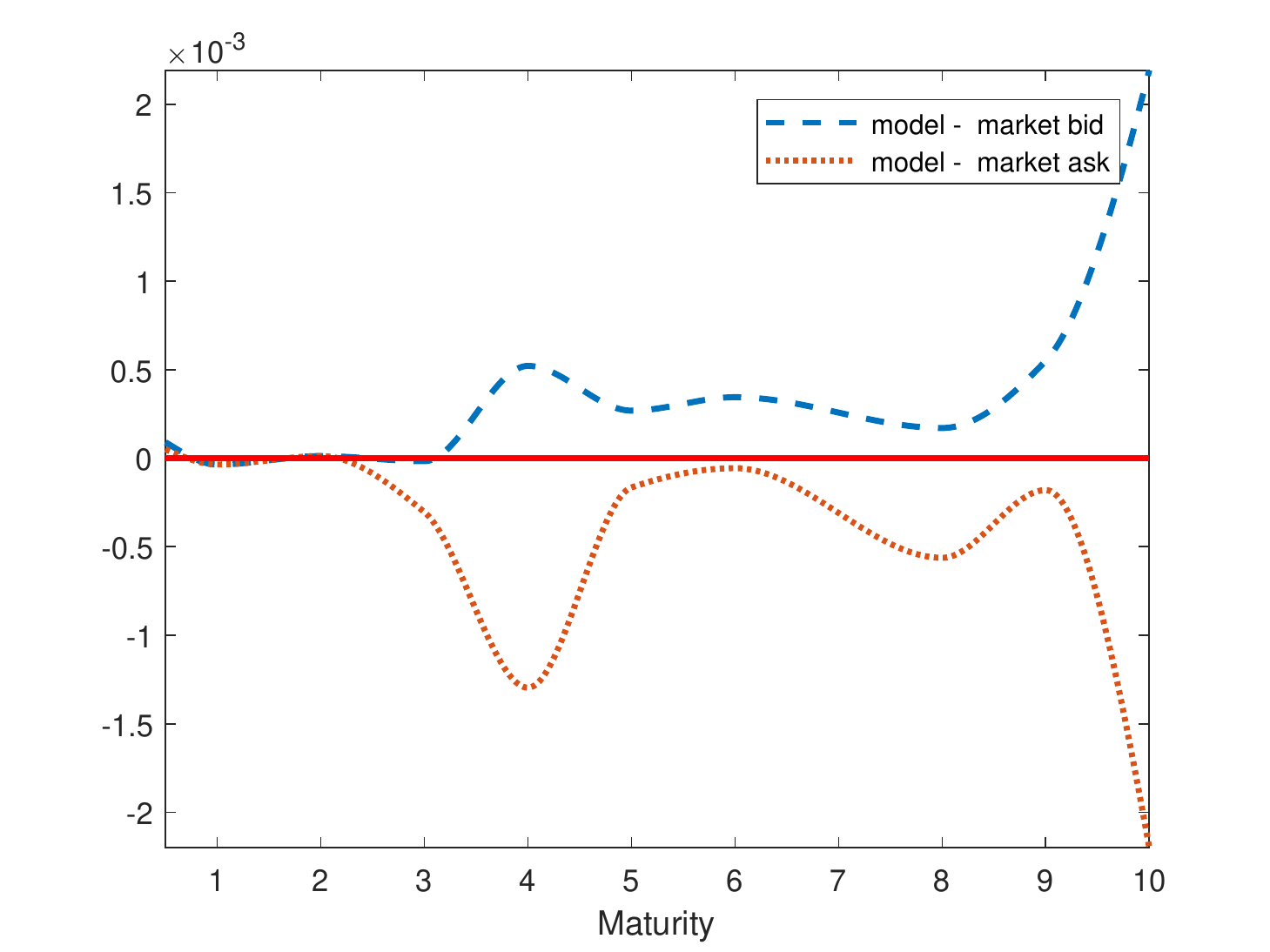}
       \caption{22 March 2017}
        \label{figx}
    \end{subfigure}
    ~
     \begin{subfigure}[b]{0.53\textwidth}
        \includegraphics[width=\textwidth]{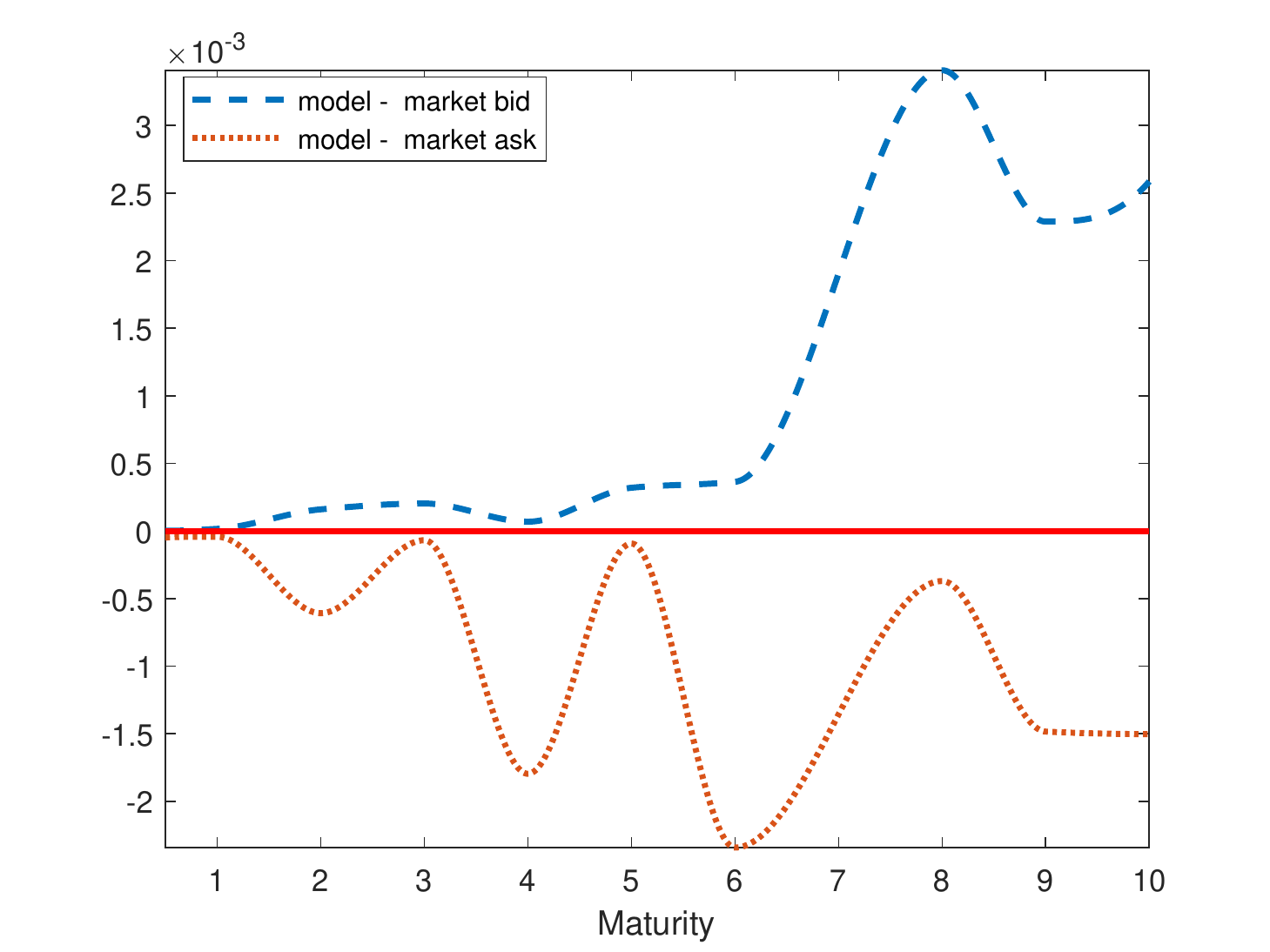}
       \caption{31 October 2017}
        \label{figxa}
    \end{subfigure}
\caption{Three--factor model fit to OIS discount factors (based on data from Bloomberg)}\label{OIS3Factor}
\end{figure}

\subsection{Fitting to OIS, vanilla and basis swaps}\label{OISswapcal}
The calibration condition for each vanilla swap is given by equation (\ref{swapcond}), and for each basis swap we obtain a condition as in (\ref{basisswapcond1}) or (\ref{basisswapcond2}). Specifically, in USD we have vanilla swaps exchanging a floating leg indexed to three--month LIBOR paid quarterly against a fixed leg paid semi-annually. We combine this with basis swaps for three months vs. six months and one months vs. three months, giving us three calibration conditions for each maturity (again, we restrict ourselves to maturities out to ten years for which data is available from Bloomberg, i.e. maturities of six months, 1, 2, 3, 4, 5, 6, 8, 9 and 10 years).

\begin{figure}[t]
  \centering
    \begin{subfigure}[b]{0.53\textwidth}
        \includegraphics[width=\textwidth]{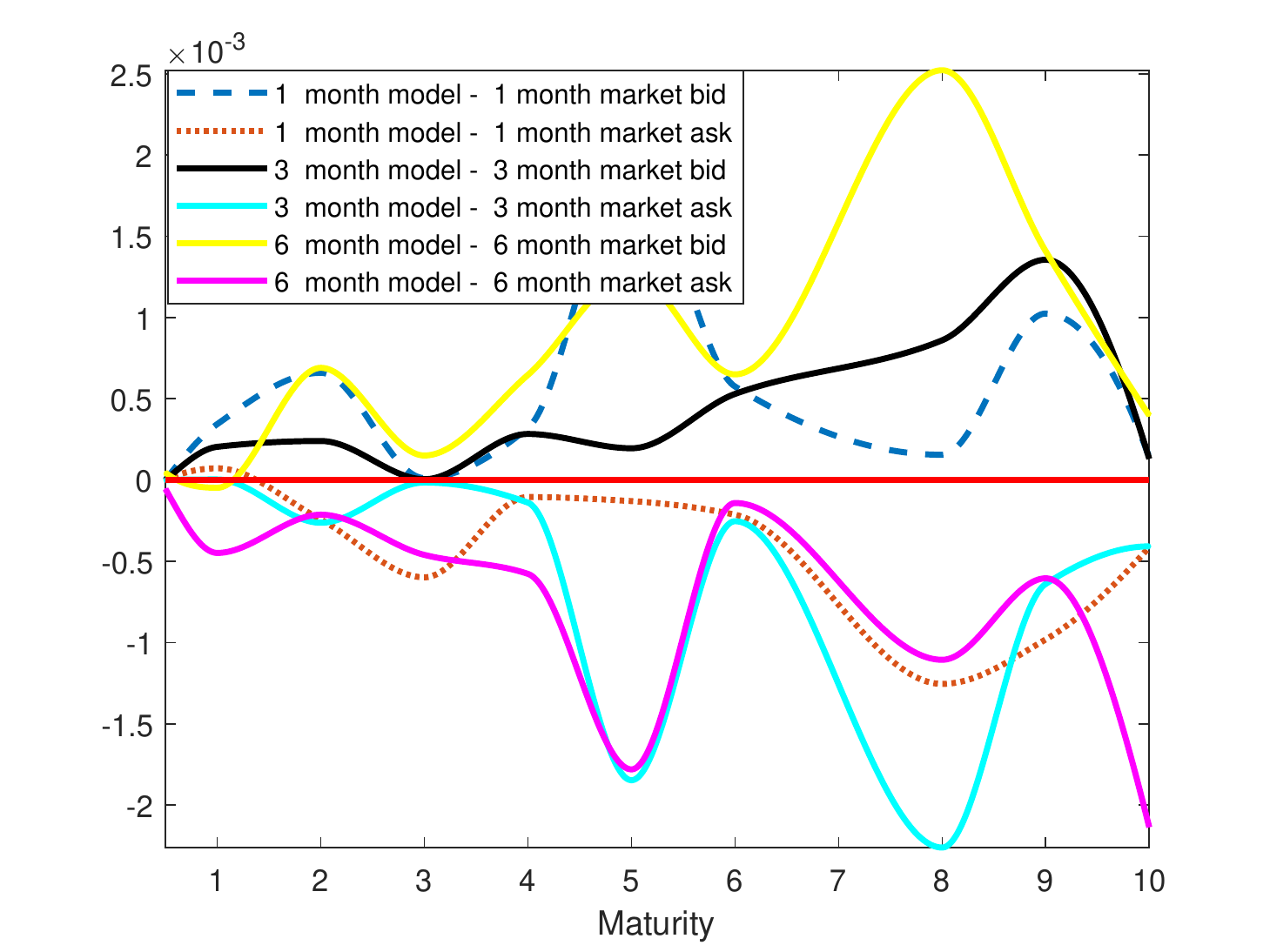}
        \caption{1 January 2013}
        \label{fig:ub}
    \end{subfigure}
    ~
    \begin{subfigure}[b]{0.53\textwidth}
        \includegraphics[width=\textwidth]{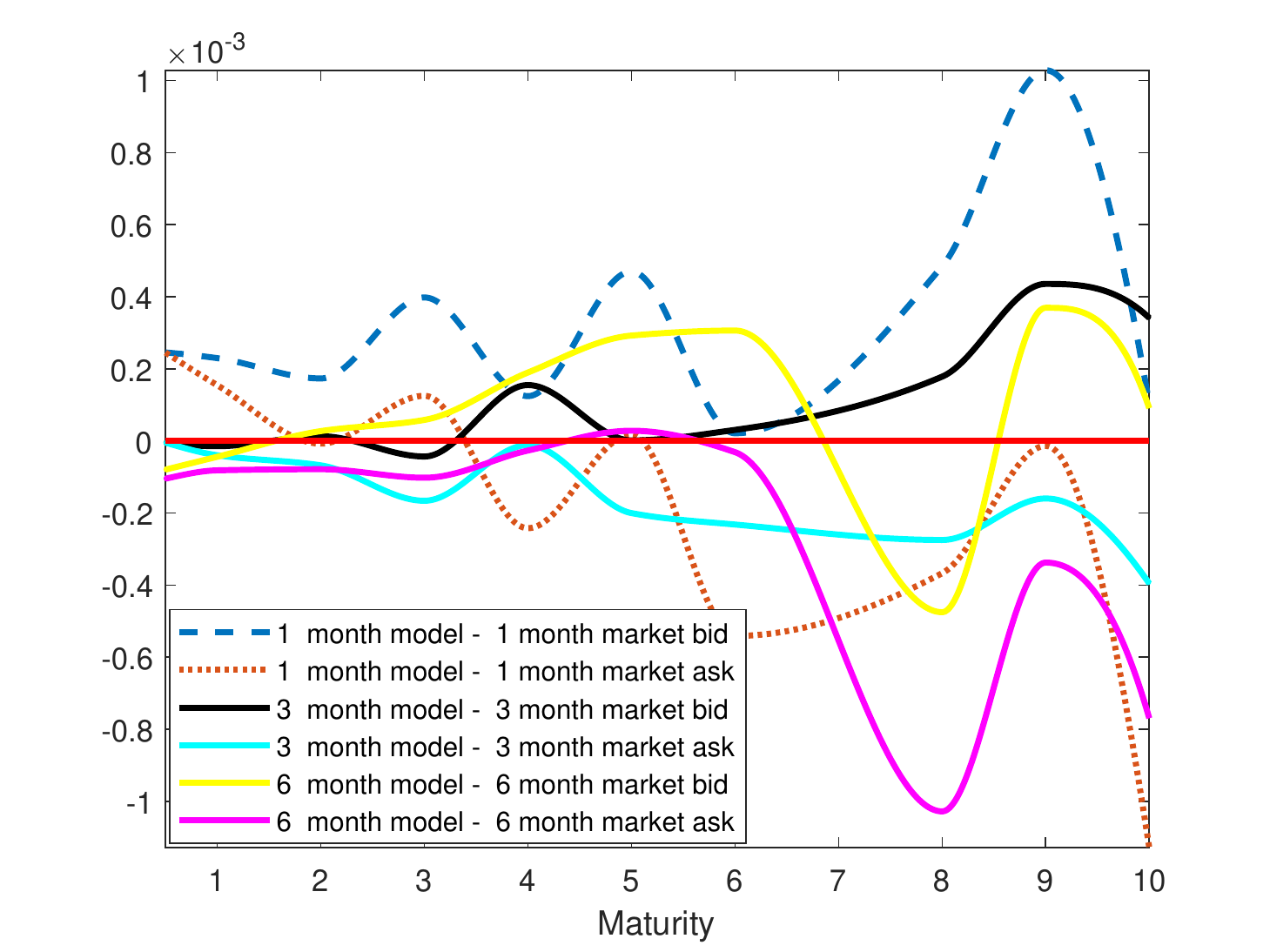}
       \caption{08 September 2014}
        \label{figxb}
    \end{subfigure}\vspace*{1em}

     \begin{subfigure}[b]{0.53\textwidth}
        \includegraphics[width=\textwidth]{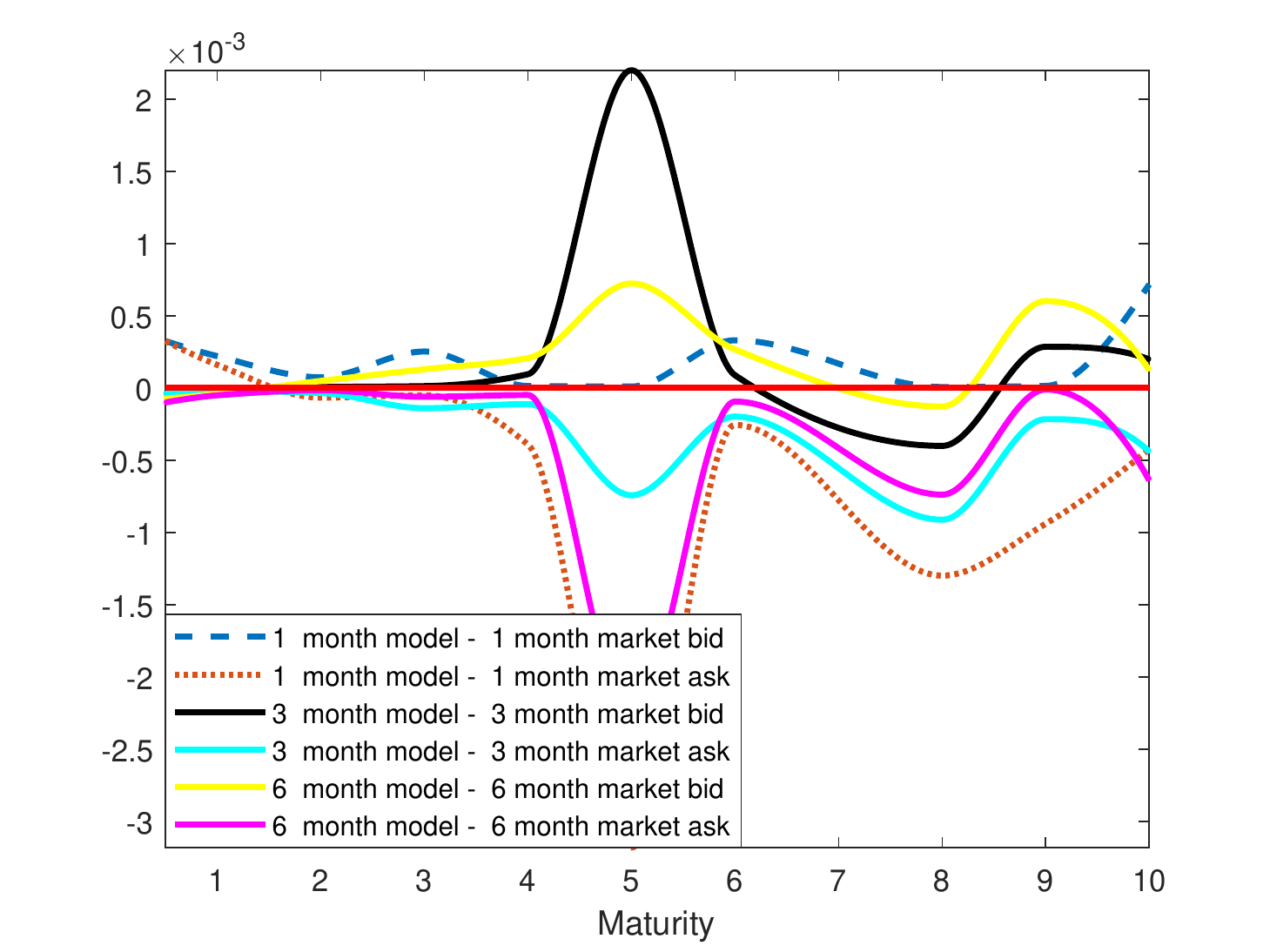}
       \caption{18 June 2015}
        \label{figxb}
    \end{subfigure}
      ~
    \begin{subfigure}[b]{0.53\textwidth}
        \includegraphics[width=\textwidth]{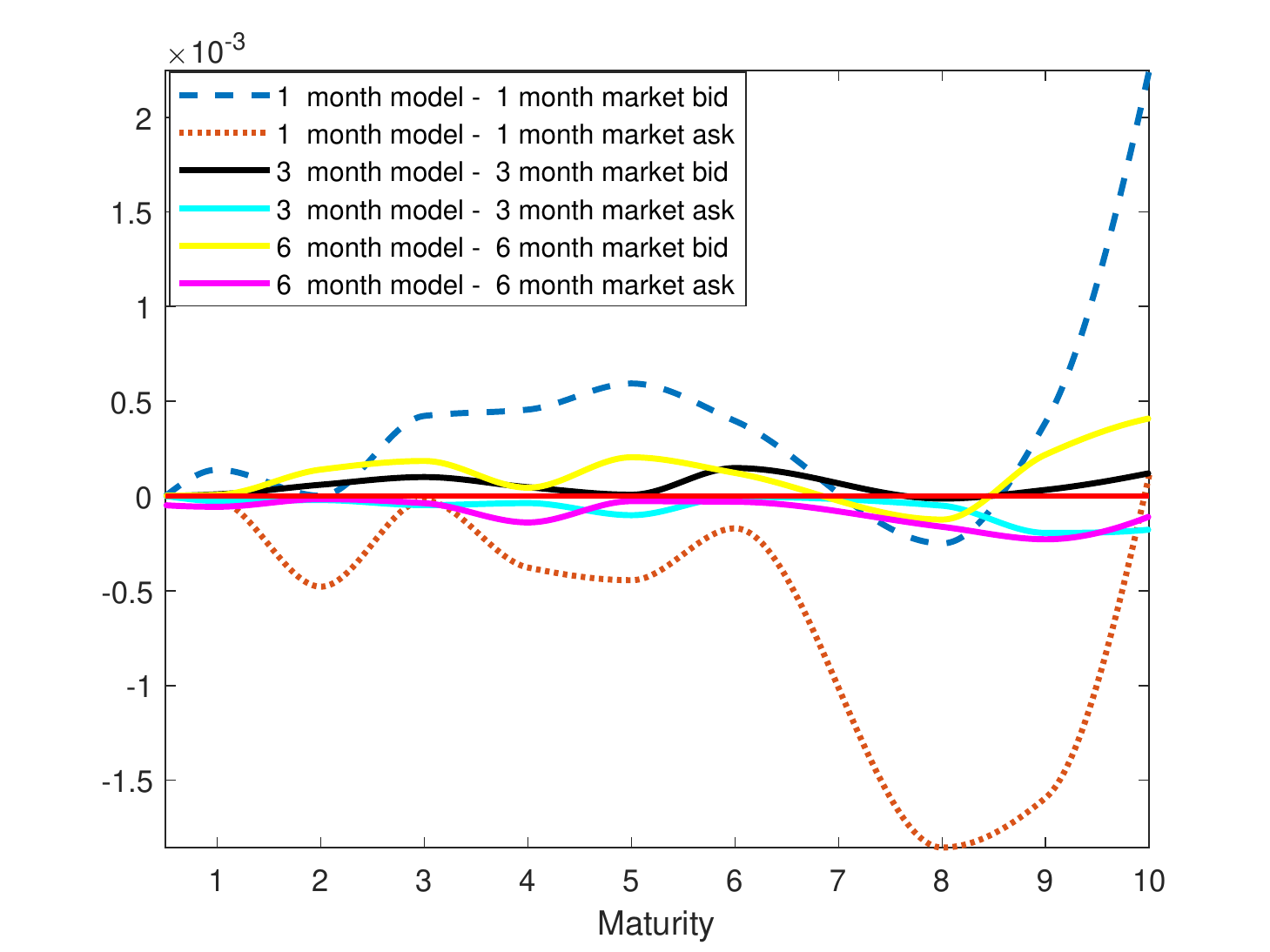}
       \caption{20 April 2016}
        \label{figxb}
    \end{subfigure}\vspace*{1em}

     \begin{subfigure}[b]{0.53\textwidth}
        \includegraphics[width=\textwidth]{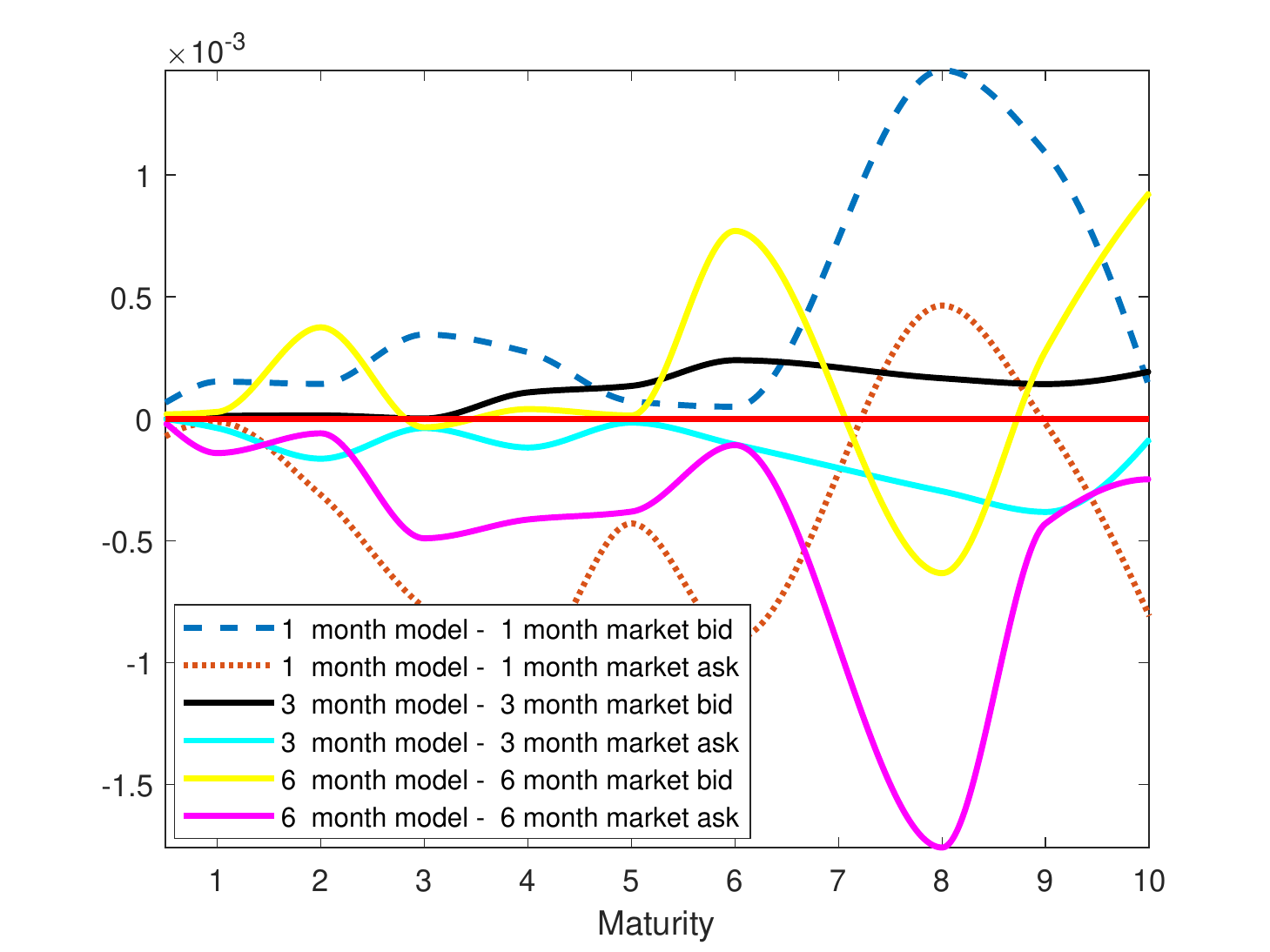}
       \caption{22 March 2017}
        \label{figx}
    \end{subfigure}
    ~
     \begin{subfigure}[b]{0.53\textwidth}
        \includegraphics[width=\textwidth]{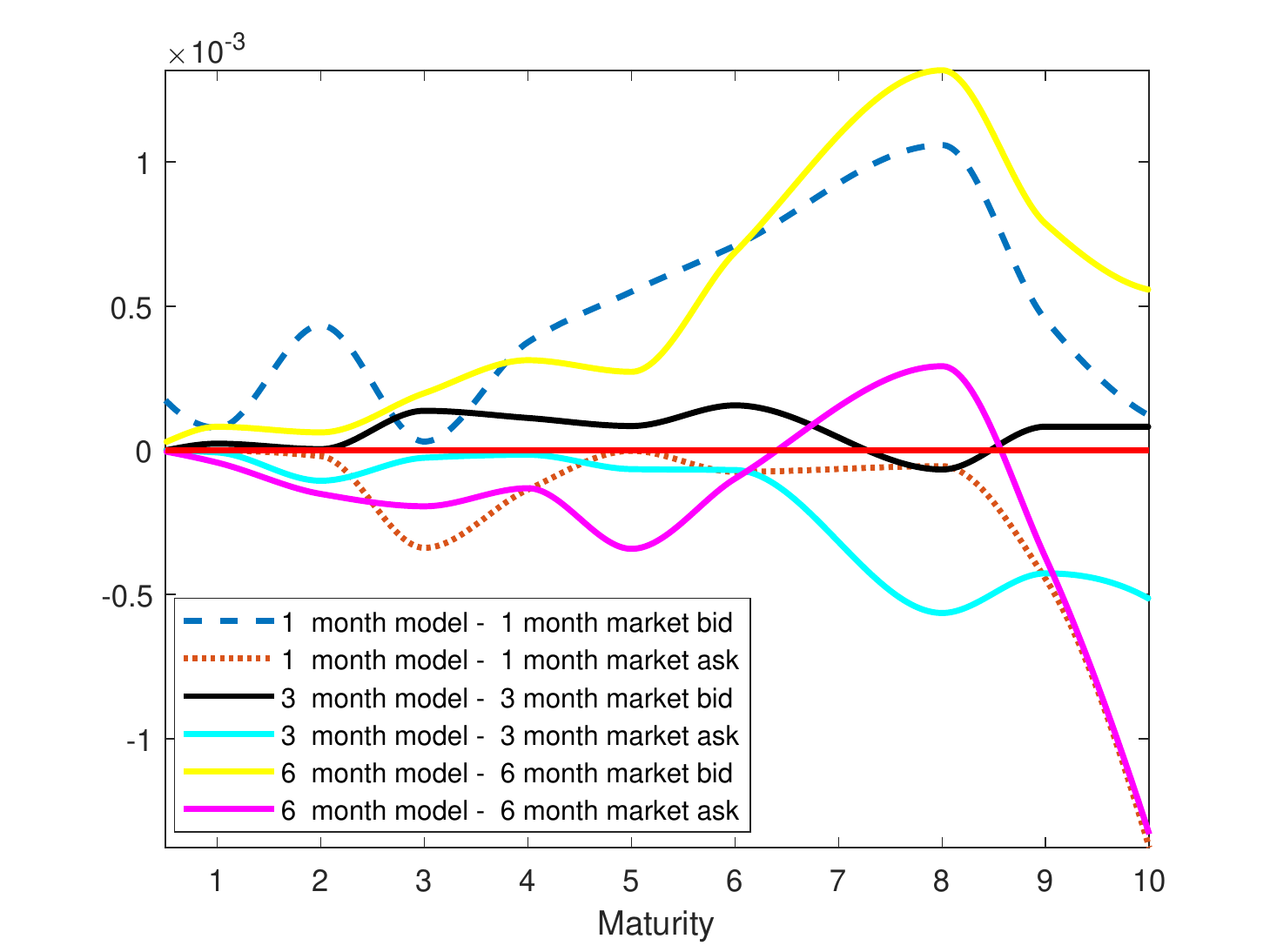}
       \caption{31 October 2017}
        \label{figxa}
    \end{subfigure}\vspace*{1em}

\caption{One--factor model fit to Basis Swaps (based on data from Bloomberg)}\label{basisconstpar1}
\end{figure}

Figure \ref{basisconstpar1} shows the fits of a one--factor model to vanilla and basis swap data obtained for 1 January 2013, 8 September 2014, 18 June 2015, 20 April 2016, 22 March 2017 and 31 October 2017, respectively. Table \ref{basispartbla} gives the corresponding model parameters.\footnote{Given that there is insufficient information contained on vanilla and basis swaps to calibrate the loss--in--default fraction $q$ ($q$ is essentially just a scaling parameter), we fix $q=0.6$, i.e. a default recovery rate of 40\%.}

As the fit of the one--factor model to the market is no longer perfect (though close) once vanilla and basis swap data are taken into account, we expand the number of factors to three ($d=3$), modifying the staged procedure to further facilitate the non-linear optimisation involved in the calibration: We first fit a one--factor model to OIS data as described in Section \ref{OISfit}, and then keep those parameters fixed in the three-factor model, setting $a_2=a_3=0$ to maintain the OIS calibration, and then fit the initial $y_2(0)$, $y_3(0)$ and constant parameters $d_0$, $b_1,b_2,b_3,c_1,c_2,c_3$, $\kappa_2$, $\kappa_3$, $\theta_2$, $\theta_3$ and $\sigma_2$, $\sigma_3$ in such a way as to match the swap calibration conditions on a given day as closely as possible.

\begin{table}[t]
\centering
\begin{subtable}{1.0\textwidth}
\centering
\begin{tabular}{|cccccccc|}
     \hline
    \multicolumn{1}{|c|}{$y(0)$} & \multicolumn{1}{c|}{a} & \multicolumn{1}{c|}{b} & \multicolumn{1}{c|}{c} & \multicolumn{1}{c|}{q} & \multicolumn{1}{c|}{$\sigma$} & \multicolumn{1}{c|}{$\kappa$} & $\theta$ \\
    \hline
    0.831418 & 0.000517 & 1.04E-05 & 0.000108 &       & 0.22479 & 0.278658 & 0.715343 \\
    0.094204 & 0     & 1.62E-05 & 0.000749 & 0.6 & 0.220092 & 0.352275 & 0.383409 \\
    0.100625 & 0     & 2.14E-05 & 0.001809 &       & 0.343971 & 0.334376 & 0.797952 \\
     \hline
    \end{tabular}%

\caption{01 January 2013}
\end{subtable}

\begin{subtable}{1.0\textwidth}
\centering
\begin{tabular}{|cccccccc|}
     \hline
    \multicolumn{1}{|c|}{$y(0)$} & \multicolumn{1}{c|}{a} & \multicolumn{1}{c|}{b} & \multicolumn{1}{c|}{c} & \multicolumn{1}{c|}{q} & \multicolumn{1}{c|}{$\sigma$} & \multicolumn{1}{c|}{$\kappa$} & $\theta$ \\
    \hline
0.54686 & 6.67E-05 & 0.000374 & 0.002578 &       & 0.205392 & 0.7273657 & 0.04681492 \\
    0.055658 & 0     & 0.003208 & 0.003004 & 0.6   & 0.355417 & 0.7175086 & 0.09971208 \\
    0.184441 & 0     & 0.000472 & 0.000216 &       & 0.568733 & 0.9197932 & 0.3863628 \\
     \hline
    \end{tabular}%

\caption{08 September 2014}
\end{subtable}

\begin{subtable}{1.0\textwidth}

\begin{tabular}{|cccccccc|}
    \hline
    \multicolumn{1}{|c|}{$y(0)$} & \multicolumn{1}{c|}{a} & \multicolumn{1}{c|}{b} & \multicolumn{1}{c|}{c} & \multicolumn{1}{c|}{q} & \multicolumn{1}{c|}{$\sigma$} & \multicolumn{1}{c|}{$\kappa$} & $\theta$ \\
     \hline
     {0.483062} & 0.001577 & 1.53E-03 & 0.004607 &       & \multicolumn{1}{r}{0.29234} & \multicolumn{1}{r}{0.94945} & {0.326688} \\
    0.100874 & 0     & 4.29E-05 & 0.00285 & \multicolumn{1}{c}{0.6} & 0.500709 & 0.780582 & 0.187569 \\
    0.326835 & 0     & 2.51E-05 & 0.000325 &       & 0.124841 & 0.898274 & 0.363523 \\
    \hline
    \end{tabular}%
\caption{18 June 2015}
\end{subtable}%

\begin{subtable}{1.0\textwidth}

\begin{tabular}{|cccccccc|}
    \hline
    \multicolumn{1}{|c|}{$y(0)$} & \multicolumn{1}{c|}{a} & \multicolumn{1}{c|}{b} & \multicolumn{1}{c|}{c} & \multicolumn{1}{c|}{q} & \multicolumn{1}{c|}{$\sigma$} & \multicolumn{1}{c|}{$\kappa$} & $\theta$ \\
    \hline
    0.002608 & 0.01166 & 0.00455 & 0.016407 &       & 0.253174 & 0.674544 & 0.051723 \\
    0.074159 & 0     & 0.001414 & 0.002822 & 0.6   & 0.008161 & 0.031635 & 0.156863 \\
    0.009814 & 0     & 0.002227 & 0.004898 &       & 0.377 & 0.261372 & 0.431784 \\
    \hline
    \end{tabular}%
\caption{20 April 2016}
\end{subtable}

\begin{subtable}{1.0\textwidth}

\begin{tabular}{|cccccccc|}
    \hline
    \multicolumn{1}{|c|}{$y(0)$} & \multicolumn{1}{c|}{a} & \multicolumn{1}{c|}{b} & \multicolumn{1}{c|}{c} & \multicolumn{1}{c|}{q} & \multicolumn{1}{c|}{$\sigma$} & \multicolumn{1}{c|}{$\kappa$} & $\theta$ \\
    \hline
   0.340134 & 9.40E-03 & 2.96E-05 & 9.37E-05 &       & 0.0056574 & 0.2156976 & 0.174663 \\
    0.000873 & 0     & 0.002236 & 0.001796 & 0.6   & 0.04284526 & 0.1823359 & 0.014394 \\
    0.000873 & 0     & 0.000662 & 0.000394 &       & 0.03652793 & 0.9609686 & 0.031956 \\

    \hline
    \end{tabular}%
\caption{22 March 2017}
\end{subtable}

\begin{subtable}{1.0\textwidth}

\begin{tabular}{|cccccccc|}
    \hline
    \multicolumn{1}{|c|}{$y(0)$} & \multicolumn{1}{c|}{a} & \multicolumn{1}{c|}{b} & \multicolumn{1}{c|}{c} & \multicolumn{1}{c|}{q} & \multicolumn{1}{c|}{$\sigma$} & \multicolumn{1}{c|}{$\kappa$} & $\theta$ \\
    \hline
   0.773084 & 3.34E-03 & 5.39E-05 & 3.72E-05 &       & 0.264573 & 0.2608761 & 0.7980566 \\
    0.013896 & 0     & 0.113603 & 0.008913 & 0.6   & 0.004227 & 0.3975118 & 0.0002119 \\
    0.065454 & 0     & 7.94E-05 & 4.09E-06 &       & 0.403512 & 0.9037867 & 0.8058104 \\
    \hline
    \end{tabular}%
\caption{31 October 2017}
\end{subtable}
\caption{Basis model parameters}\label{basispartbla}
\end{table}

\begin{figure}[t]
  \centering
    \begin{subfigure}[b]{0.53\textwidth}
        \includegraphics[scale=0.45]{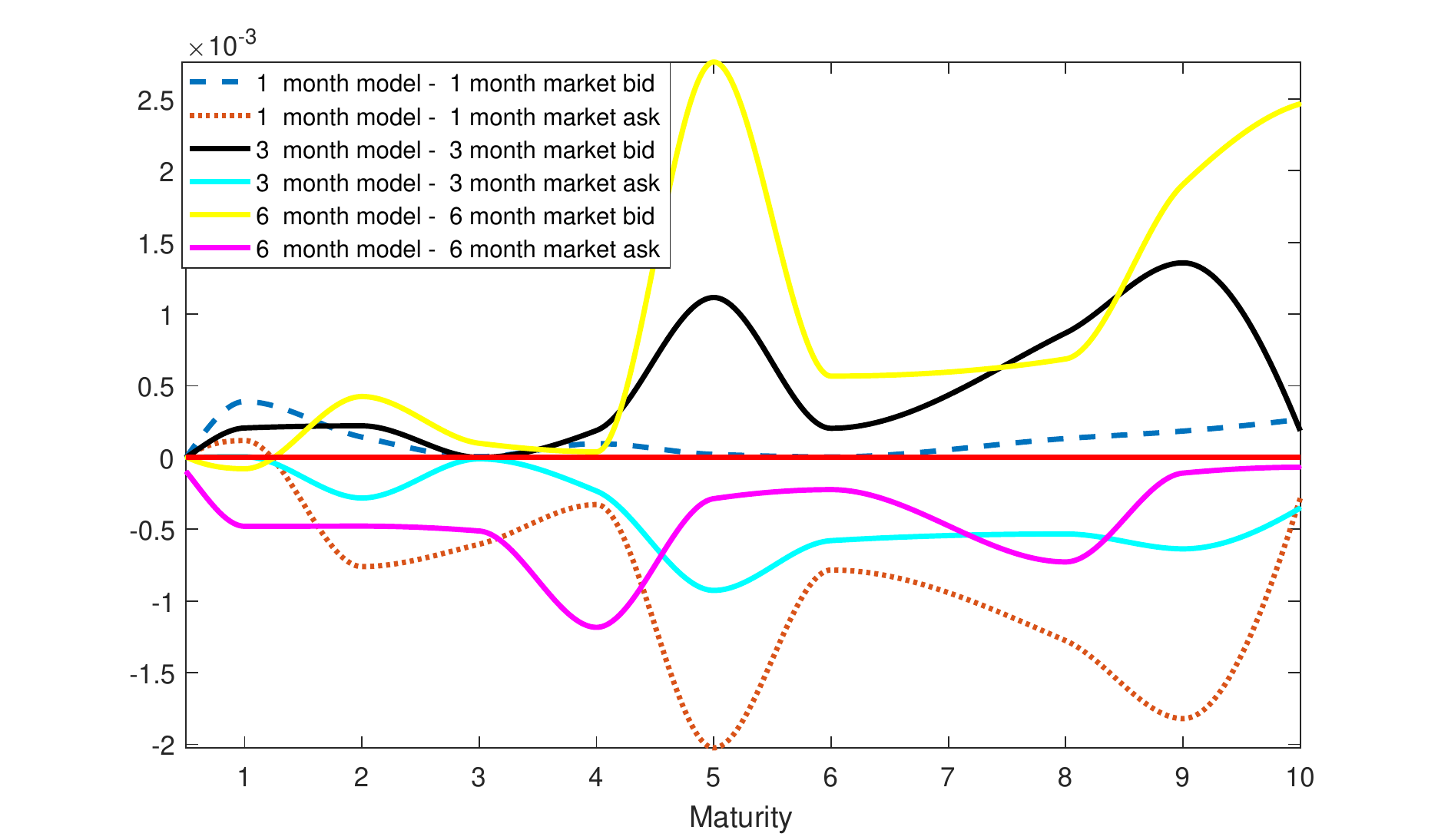}
        \caption{1 January 2013}
        \label{fig:ud}
    \end{subfigure}
    ~
        \begin{subfigure}[b]{0.53\textwidth}
        \includegraphics[scale=0.45]{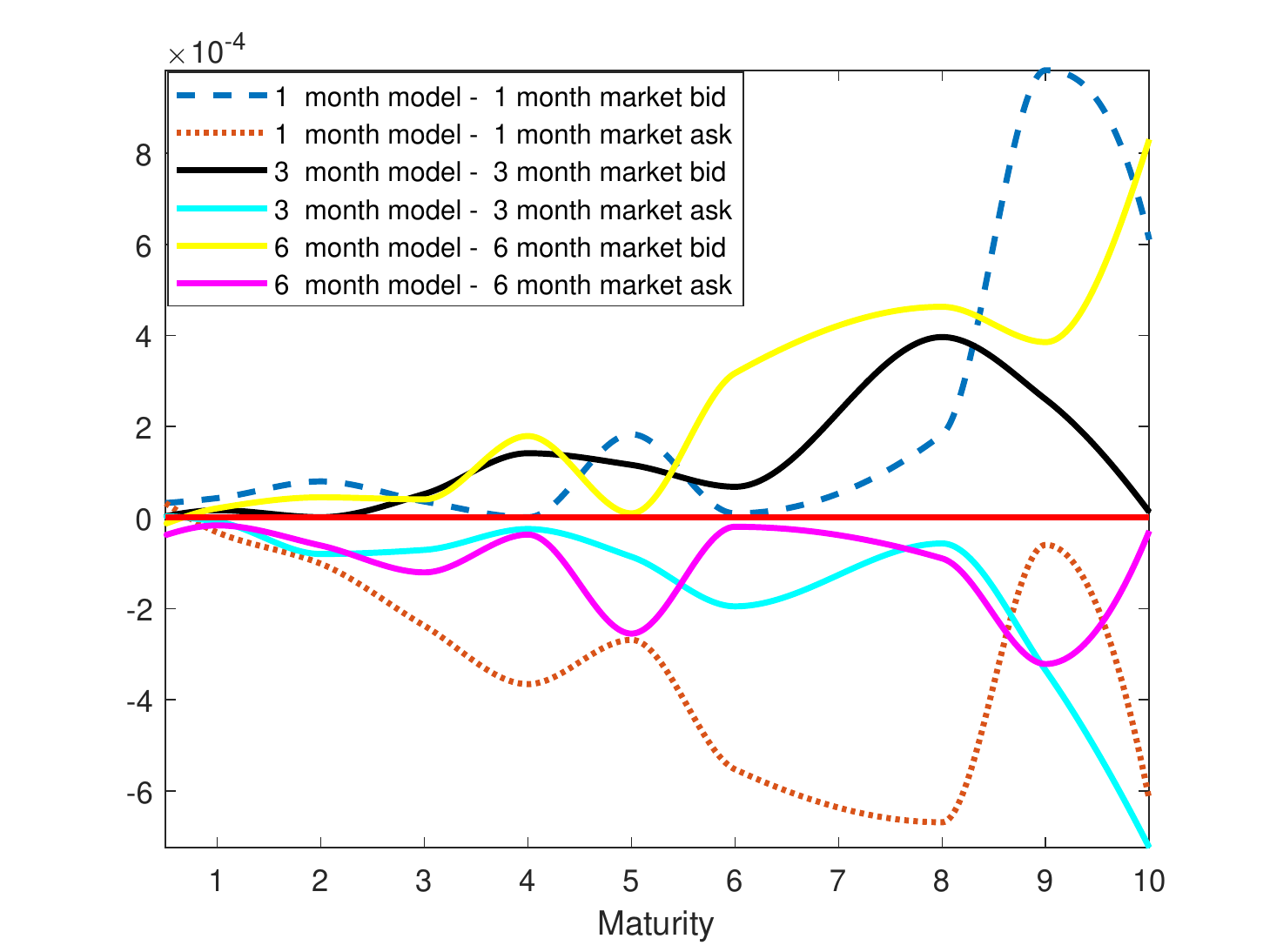}
       \caption{08 September 2014}
        \label{figxd}
    \end{subfigure}

    \begin{subfigure}[b]{0.53\textwidth}
        \includegraphics[scale=0.45]{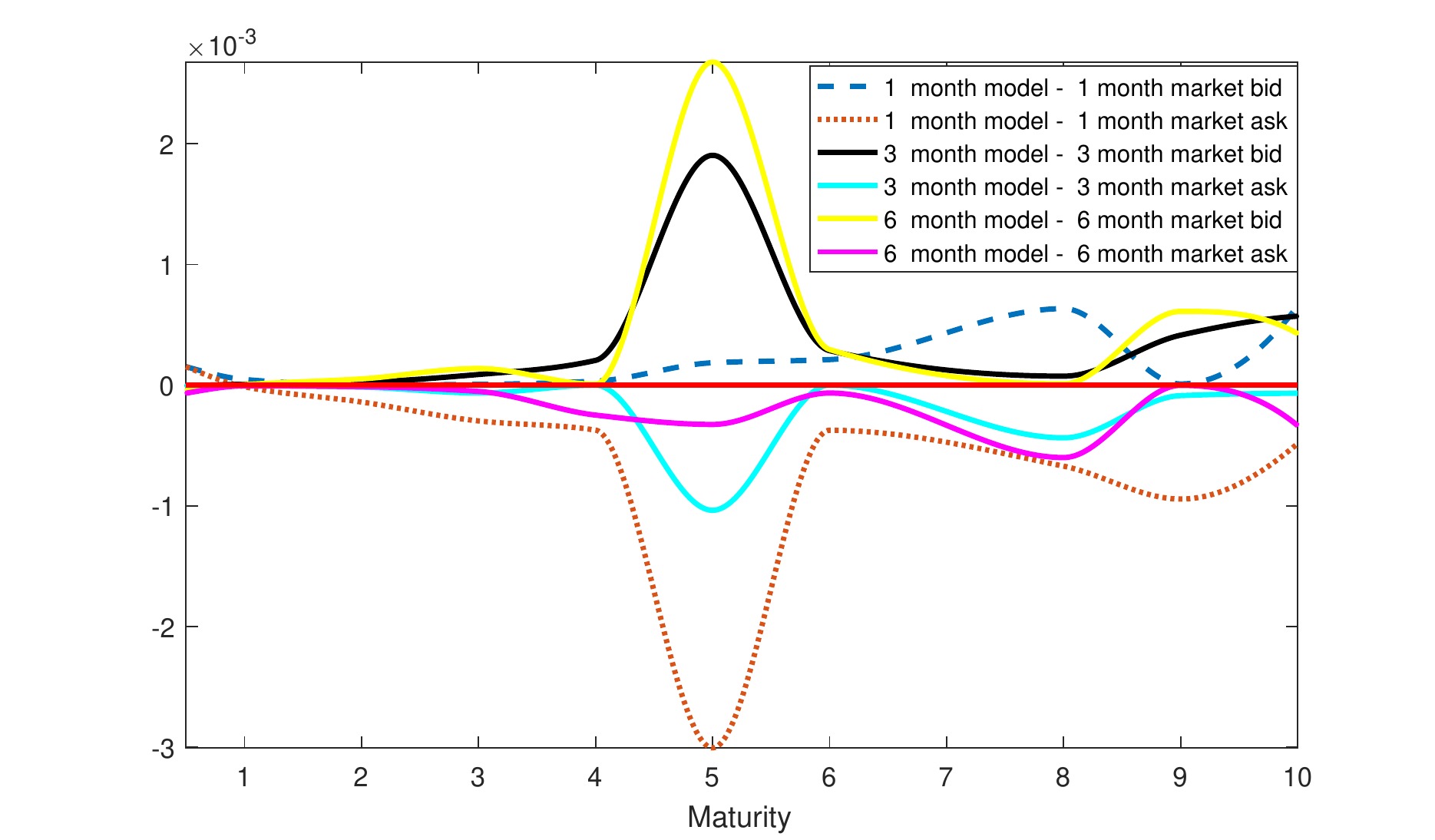}
       \caption{18 June 2015}
        \label{figxd}
    \end{subfigure}
    ~
    \begin{subfigure}[b]{0.53\textwidth}
        \includegraphics[scale=0.45]{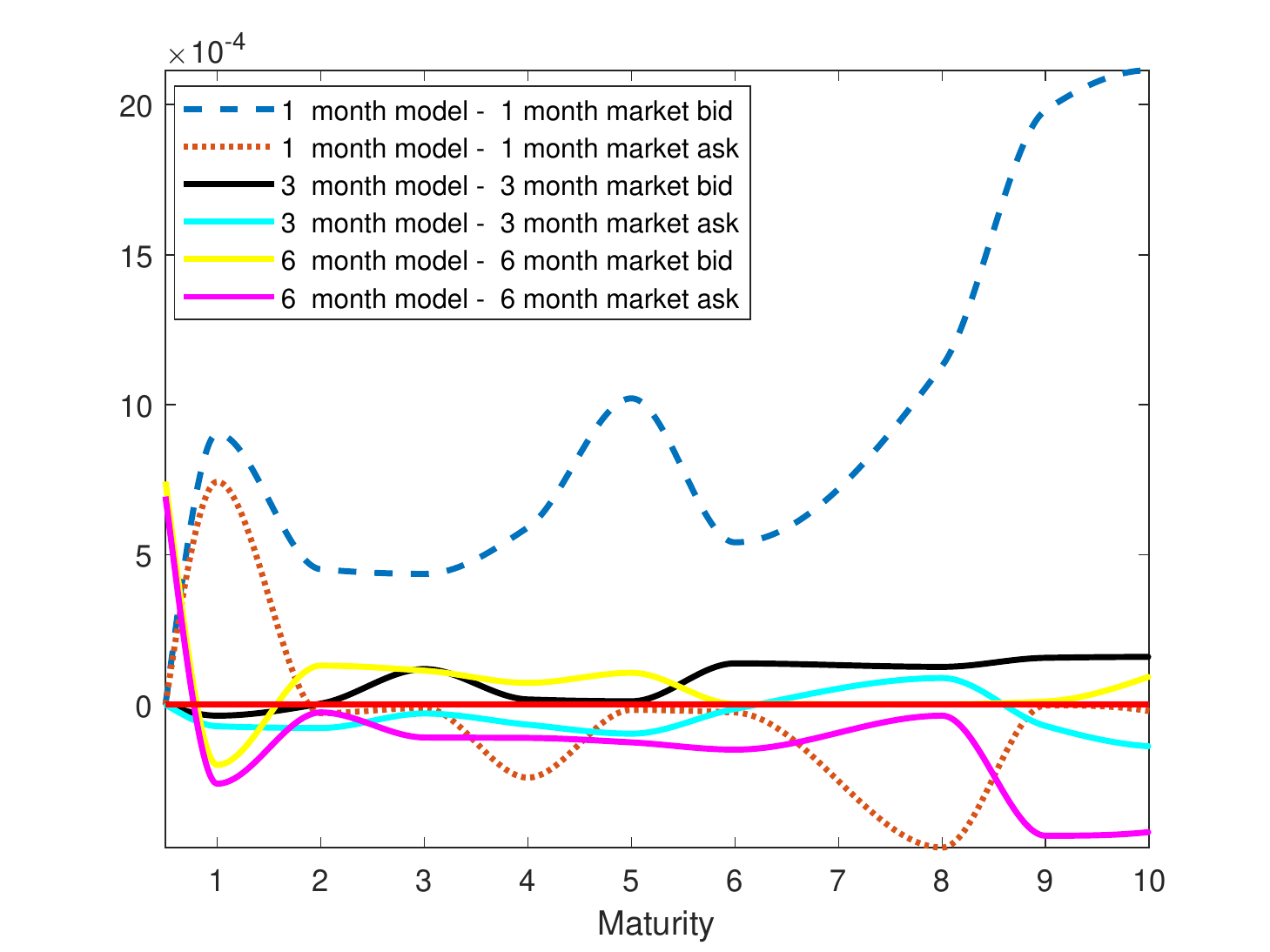}
       \caption{20 April 2016}
        \label{figxb}
    \end{subfigure}

     \begin{subfigure}[b]{0.53\textwidth}
        \includegraphics[width=\textwidth]{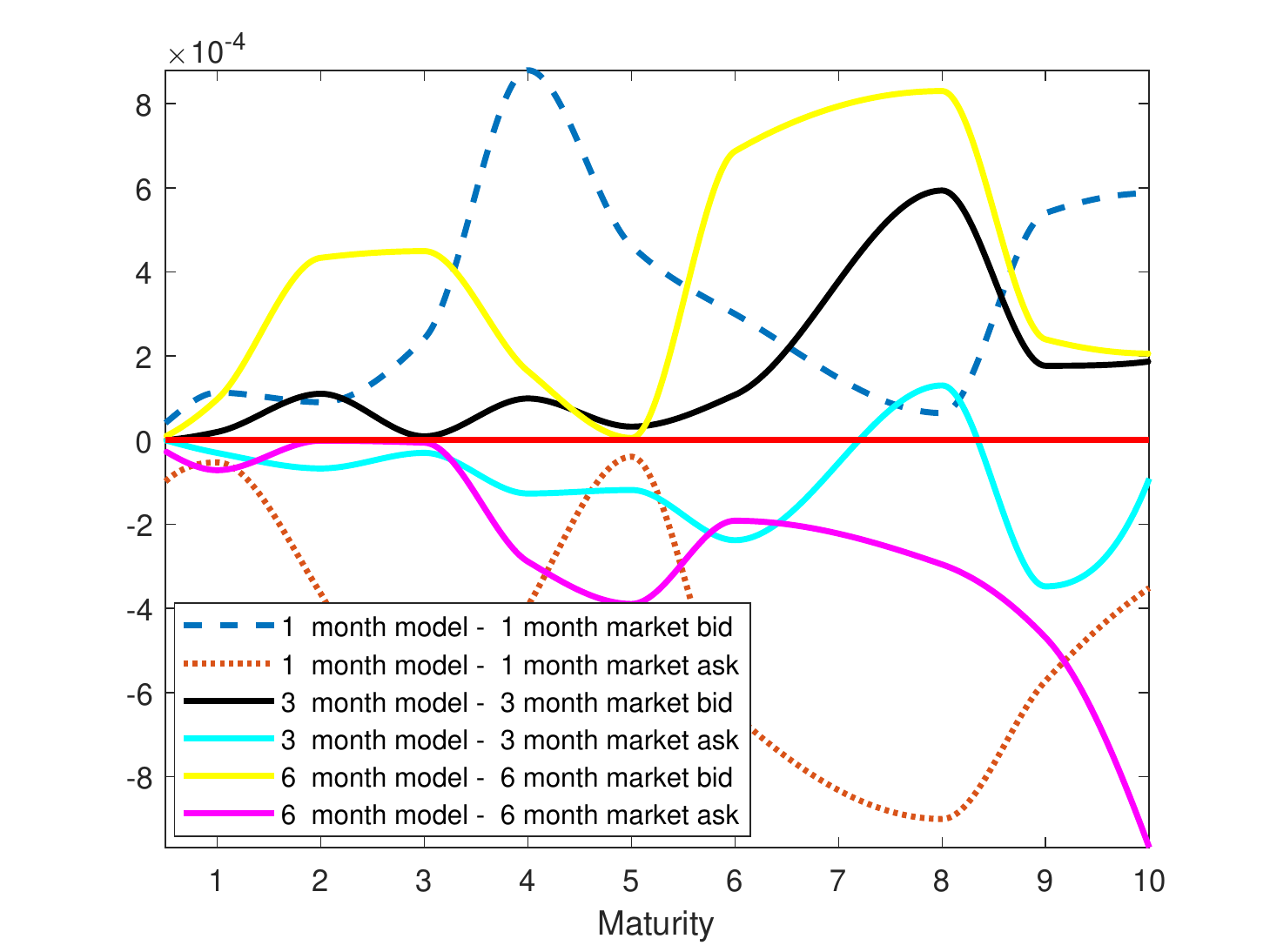}
       \caption{22 March 2017}
        \label{figx}
    \end{subfigure}
    ~
     \begin{subfigure}[b]{0.53\textwidth}
        \includegraphics[width=\textwidth]{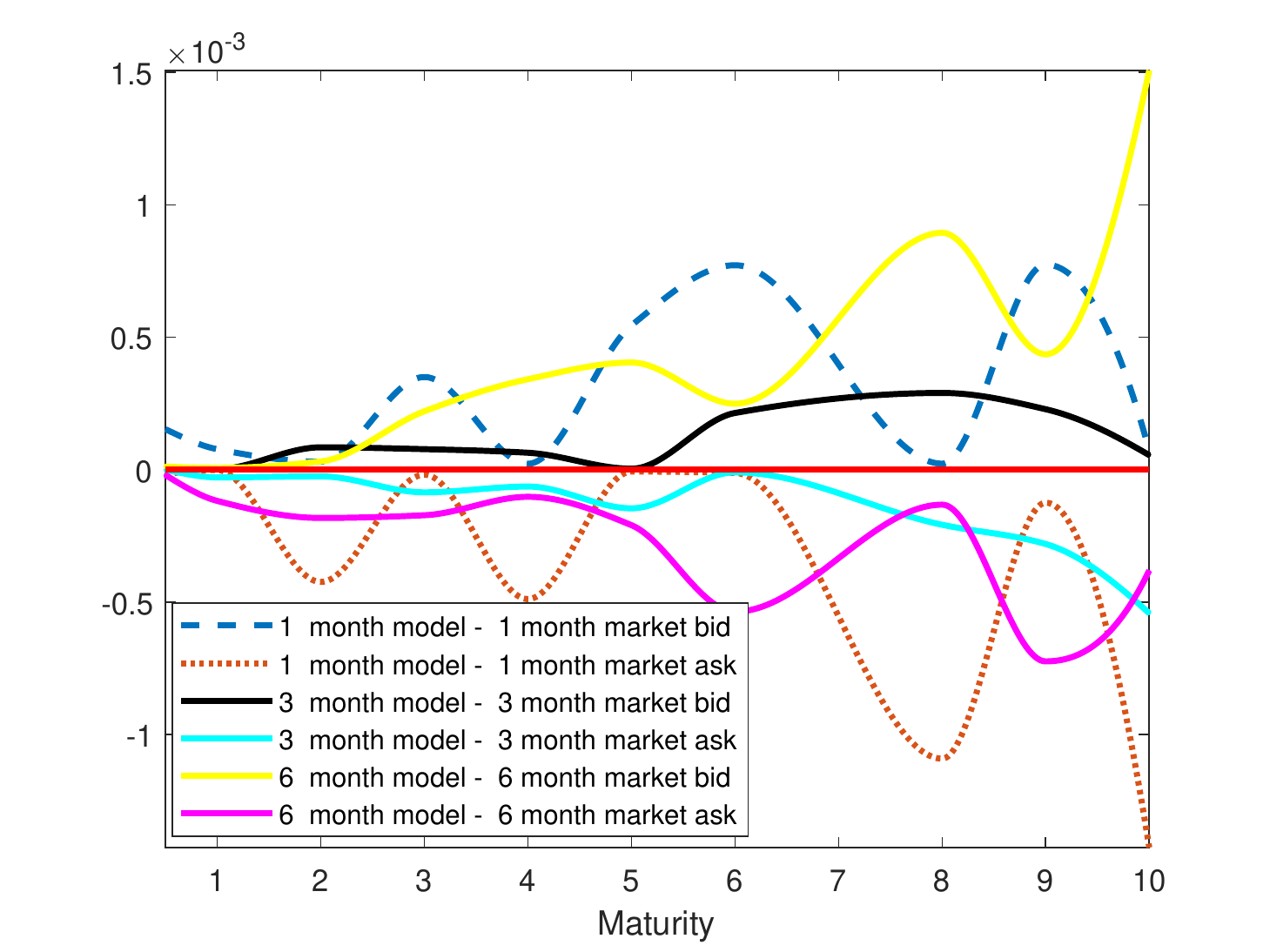}
       \caption{31 October 2017}
        \label{figxa}
    \end{subfigure}
\caption{Tenors fit to market data  (based on data from Bloomberg)}\label{basisfit}
\end{figure}

Lastly, in order to improve the fit further, we allow for time--dependent (piecewise constant) $d_0$. Figure \ref{basisfit} shows
the resulting fits for 1 January 2013, 8 September 2014, 18 June 2015, 20 April 2016, 22 March 2017 and 31 October 2017, respectively, and Table \ref{basispartbla} gives the corresponding model parameters.

From these results, we note that on a given day, the model can be simultaneously calibrated to all available OIS and vanilla/basis swap data, resulting in a fit between the bid and ask prices for all maturities, except for small discrepancies, mainly for maturities up to a year.


\section{Separating Funding Liquidity Risk and Credit Risk Using CDS}\label{CDScal}
This section aims to disentangle the components $\phi(s)$ and $\lambda(s)$ by including credit default swaps (CDS) in the calibration. The coupon leg of the CDS consists of a streams of payments while there is no default and the protection leg of the CDS only pays when default happens before maturity of the CDS. Since the reference rates for the vanilla and basis swaps are LIBORs, in our CDS calibration we focus on banks on the LIBOR panel. Because the payoff of a CDS is contingent on default of its reference entity, default must now be modelled explicitly, i.e. we also have to disentangle the default--free interest rate $r$ and the default intensity, rather than directly modelling the LHS of \ref{collateral}, $r_c$. To do this, one needs information on the systemic default risk component $\Lambda$ of $r_c$ as specified in (\ref{collateral}). However, identifying this from market information is not reliably possible, as Filipovi\'c and Trolle (\citeyear*{FilTro:2013}) also have found.\footnote{Typical candidates as proxies for a risk--free rate, i.e. government bonds or repo rates, often result in term structures which intersect the term structure implied by OIS, indicating that the spread between OIS and these rates is not a reliable proxy for $\Lambda$.} They fix $\Lambda$ at 5 basis points (bp), noting that ``reasonable variations'' in the value of $\Lambda$ do not change their results, a point which also applies in our context, and thus we follow this modelling choice.

%

\subsection{Setting up the calibration}
Let $\tau^j$ be the default time of the bank $j$ in the panel. Define a full filtration $\mathcal{F}^j_t=\mathcal{F}_t^{\tau^j}\vee \mathcal{H}_t$, where
\[\mathcal{F}_t^{\tau^j}=\sigma(\{\tau^j>u\}:u\leq t)\]
is a subfiltration generated by $\tau^j$ and $\mathcal{H}_t$ is a subfiltration that contains default--free information, also known as ``market filtration''. Then the default intensity $\hat\lambda_j(s)=\Lambda(s)+\lambda_j(s)$ of bank $j$ is $\mathcal{H}_t$-adapted, and $\lambda_j$ gives the departure of $\hat\lambda_j$ of bank $j$ from the panel's systemic $\Lambda$. We then take the $\lambda$ required by the roll--over risk pricing conditions derived in Section \ref{modelsection} to be the average over the $\lambda_j$ in the panel\footnote{We note that LIBOR is determined by taking the trimmed average of the corresponding simple--compounded rates quoted by the panel banks. Thus, for the sake of tractability, with this assumption we are making two approximations: Firstly, taking the relationship between $\lambda_j$ and the simple--compounded rate quoted by the $j$-th bank to be approximately linear, and secondly assuming that the trimmed outliers lie approximately symmetrically above and below the panel average (in which case the trimmed average is approximately equal to the untrimmed average).} and consequently the coefficients of the dynamics of $\lambda$ are the averages (due to linearity in coefficients) of the corresponding coefficients of the $\lambda_j$.

Analogously to (\ref{r_c})--(\ref{phi}), define $r_c$ as
\begin{eqnarray}
r_c(t)    &=& \alpha_0(t)+\langle \alpha , X(t) \rangle\label{r}
\end{eqnarray}
where as before the coefficients $\alpha_0(t), \alpha$ are calibrated to OIS discount factors via condition \ref{DOIScond}. Following Filipovi\'c and Trolle (\citeyear*{FilTro:2013}), $\Lambda(t)$ is fixed identically equal to 5 basis points, and we obtain the default--free rate $r$ via
\begin{equation}
r(t)=r_c(t)-\Lambda q
\end{equation}
The idiosyncratic default intensity associated with the $j$--th bank follows
\begin{eqnarray}
\hat{\lambda}_j(t) &=&\Lambda+ \lambda_j(t) =\hat{b}_0^j(t)+\langle \hat{b}^j , X(t) \rangle,\label{lambdaj} \end{eqnarray}
and the aim is to calibrate the coefficients of these dynamics to the corresponding credit default swaps. As noted above, $\lambda(t)$ is calculated as the average of $ \lambda_j(t)$. We can then write
$$r_c(t)+\lambda(t)q=d_0(t)+\langle d, X(t) \rangle,$$
Fixing $q=0.6$ as before and noting
\begin{eqnarray}
d_0(t) &=& a_0(t)+q\hat b_0(t)-q\Lambda\\
d_i &=& a_i+qb_i,
\end{eqnarray}
we are left with the freedom to determine the coefficients $c_0(t), c$ of the dynamics (\ref{phi}) of the ``funding liquidity'' component $\phi$ of roll--over risk. We obtain $c_0(t), c$ by calibrating to vanilla and basis swaps via conditions (\ref{swapcond}) to (\ref{basisswapcond2}).

\subsection{Pricing the CDS}
The time $t$-value of a zero--recovery zero coupon bond with notional $1$ issued by the $j$--th bank within the LIBOR panel is given by
\begin{eqnarray}\label{unsecure}
\begin{aligned}
B^j(t,T)&=\mathbb{E}^\mathbb{Q}\left[ e^{-\int_t^{T} r(s) ds} \mathbb{I}_{\tau^j>T}\mid \mathcal{F}_t\right]\\
&=\mathbb{E}^\mathbb{Q}\left[ e^{-\int_t^{T}r(s)+\hat{\lambda}_j(s) ds}\mid \mathcal{H}_t \right]\\
&=\mathbb{E}^\mathbb{Q}\left[ e^{-\int_t^{T}r_c(s)-(q\Lambda-\hat{\lambda}_j(s)) ds}\mid \mathcal{H}_t \right]\\
&=D^{\tiny \mbox{OIS}}(t,T)\mathbb{E}^{\mathbb{Q}^T}\left[ e^{-\int_t^{T}(\hat{\lambda}_j(s)-q\Lambda) ds}\mid \mathcal{H}_t \right].
\end{aligned}
\end{eqnarray}
In Appendix \ref{CDS} we develop the previous expression in terms of our affine specification.

Let us now consider the pricing of a  \emph{Credit Default Swap}, i.e.  a contractual agreement between the protection buyer and the protection seller, typically designed to provide protection from a credit event associated with a risky bond issued by the reference entity\footnote{We drop the subscript corresponding to the  reference entity  bank $j$ in order simplify the notation, when there is no ambiguity.}. CDS represent the most liquid credit derivative contracts by far.

\begin{itemize}
\item The \emph{protection buyer} pays rate $\mathcal{C}$ (the CDS spread) at times $T_{1}, \cdots,T_N=T$.
\item The \emph{protection seller} agrees to make a single protection payment called \emph{Loss Given Default} (LGD) (i.e. $L=1-\mathcal{R}$, where $\mathcal{R}$ is assumed to be a fixed cash recovery) in case the default event happens between $T_{0}$ and $T$.
\end{itemize}
Let $(T_k)_{k=1,\cdots, N}$ be the CDS spread payment dates and $\tau$ the default time. The  CDS discounted payoff from the perspective of a protection buyer, with unit notional and protection payment $1-\mathcal{R}$, is at time $t\leq T_1$

{\normalsize
\begin{align}\label{Payoff}
\mbox{CDS}(t,T)=&\underbrace{\mathbb{E}^\mathbb{Q}\left[e^{-\int_t^\tau r_c(s)ds }(1-\mathcal{R})\mathbb{I}_{\{\tau\leq T\}}\mid \mathcal{F}_t\right]}_{(1)}\\
&-\underbrace{\mathbb{E}^\mathbb{Q}\left[\sum_{k=1}^N e^{-\int_t^{T_k} r_c(s)ds }(T_k-T_{k-1})\mathcal{C}\mathbb{I}_{\{\tau> T_k\}}\mid \mathcal{F}_t\right]}_{(2)}\\ \nonumber
&-\underbrace{\mathbb{E}^\mathbb{Q}\left[\sum_{k=1}^N e^{-\int_t^{\tau} r_c(s)ds }(\tau-T_{k-1})\mathcal{C}\mathbb{I}_{\{\tau\in [T_{k-1},T_k]\}}\mid \mathcal{F}_t\right]}_{(3)},
\end{align}}
where:
\begin{itemize}
\item[(1)] Protection leg payment,  price of any proceeds from default before $T$;
\item[(2)] No-default spread payments;
\item[(3)] Payments of accrued spread interest at default.
\end{itemize}
We proceed to evaluate each of these components.
\begin{align}\label{Premium}
\mathbb{E}^\mathbb{Q}\left[e^{-\int_t^{T} r_c(s)ds }\mathbb{I}_{\{\tau> T\}}\mid \mathcal{F}_t\right]&= D^{\tiny \mbox{OIS}}(t,T)\mathbb{E}^{\mathbb{Q}^{T}}\left[\mathbb{I}_{\{\tau> T\}}\mid \mathcal{F}_t\right]\\ \nonumber
&=D^{\tiny \mbox{OIS}}(t,T)\mathbb{E}^{\mathbb{Q}^{T}}\left[e^{-\int_t^T \hat{\lambda}_j(s)ds}\mid \mathcal{F}_t\right]\\ \nonumber
&:=D^{\tiny \mbox{OIS}}(t,T)Z(t,T).
\end{align}
Here the dynamics of $\hat{\lambda}_j(s)$ have to be adjusted in accordance with $T$-forward measure.
It is easy to see that
\begin{align}
\frac{dZ(t,u)}{du}=-\mathbb{E}^{\mathbb{Q}^{T}}\left[e^{-\int_t^u \hat{\lambda}_j(s)ds}\hat{\lambda}_j(u)\mid \mathcal{F}_t\right],
\end{align}
where the change of order of differentiation and expectation is justified by the dominated convergence theorem.

Consider a mesh $\{t_p\}$ on the interval $[T_{k-1}, T_k]$,
\begin{align}\label{Interest}
&\mathbb{E}^\mathbb{Q}\left[\sum_{k=1}^N e^{-\int_t^{\tau} r_c(s)ds } \mathcal{C}(\tau-T_{k-1})\mathbb{I}_{\{\tau\in [T_{k-1},T_k]\}}\mid \mathcal{F}_t\right]\\
&=\mathbb{E}^\mathbb{Q}\left[\sum_{k=1}^N e^{-\int_t^{\tau} r_c(s)ds } \mathcal{C}(\tau-T_{k-1})\mathbb{I}_{\{\tau\in [T_{k-1},T_k]\}}\sum_p \mathbb{I}_{\{\tau\in [t_p,t_{p+1}]\}}\mid \mathcal{F}_t\right]\nonumber\\ \nonumber
&=\sum_{k=1}^N \mathcal{C}\sum_p D^{\tiny \mbox{OIS}}(t,t_p)\mathbb{E}^{\mathbb{Q}^{T}}\left[(t_p-T_{k-1})\mathbb{I}_{\{\tau\in [t_p,t_{p+1}]\}}\mid \mathcal{F}_t\right]\\ \nonumber
&=\sum_{k=1}^N \mathcal{C}\sum_p D^{\tiny \mbox{OIS}}(t,t_p)(t_p-T_{k-1})\mathbb{E}^{\mathbb{Q}^{T}}\left[\mathbb{I}_{\{\tau>t_p\}}-\mathbb{I}_{\{\tau>t_{p+1}\}}\mid \mathcal{F}_t\right]\\ \nonumber
&=\sum_{k=1}^N \mathcal{C}\sum_p D^{\tiny \mbox{OIS}}(t,t_p)(t_p-T_{k-1})\left[Z(t,t_p)-Z(t,t_{p+1})\right]\\ \nonumber
&\approx \mathcal{C} \sum_{k=1}^N \int_{T_{k-1}}^{T_k} D^{\tiny \mbox{OIS}}(t,u)(u-T_{k-1})dZ(t,u)\\ \nonumber
&=-\mathcal{C} \sum_{k=1}^N \int_{T_{k-1}}^{T_k} D^{\tiny \mbox{OIS}}(t,u)(u-T_{k-1})\mathbb{E}^{\mathbb{Q}^{T}}\left[e^{-\int_t^u \hat{\lambda}_j(s)ds}\hat{\lambda}_j(u)\mid \mathcal{F}_t\right] du.
\end{align}
Finally, the protection leg can be decomposed as follows:
\begin{align}\label{Protection}
&\mathbb{E}^\mathbb{Q}\left[e^{-\int_t^{\tau} r_c(s)ds } (1-\mathcal{R})\mathbb{I}_{\{\tau<T\}}\mid \mathcal{F}_t\right]\\
&=\mathbb{E}^\mathbb{Q}\left[e^{-\int_t^{\tau} r_c(s)ds }(\mathbb{I}_{\{\tau<T]\}}(1-\mathcal{R})\sum_p \mathbb{I}_{\{\tau\in [t_p,t_{p+1}]\}}\mid \mathcal{F}_t\right]\nonumber\\ \nonumber
&=(1-\mathcal{R})\sum_p D^{\tiny \mbox{OIS}}(t,t_p)\mathbb{E}^{\mathbb{Q}^{T}}\left[\mathbb{I}_{\{\tau\in [t_p,t_{p+1}]\}}\mid \mathcal{F}_t\right]\\ \nonumber
&=(1-\mathcal{R})\sum_p D^{\tiny \mbox{OIS}}(t,t_p)\mathbb{E}^{\mathbb{Q}^{T}}\left[\mathbb{I}_{\{\tau>t_p\}}-\mathbb{I}_{\{\tau>t_{p+1}\}}\mid \mathcal{F}_t\right]\\ \nonumber
&=(1-\mathcal{R})\sum_p D^{\tiny \mbox{OIS}}(t,t_p)\left[Z(t,t_p)-Z(t,t_{p+1})\right]\\ \nonumber
&\approx (1-\mathcal{R})\int_{t}^{T} D^{\tiny \mbox{OIS}}(t,u) dZ(t,u)\\ \nonumber
&= -(1-\mathcal{R})\int_{t}^{T} D^{\tiny \mbox{OIS}}(t,u)\mathbb{E}^{\mathbb{Q}^{T}}\left[e^{-\int_t^u \hat{\lambda}_j(s)ds}\hat{\lambda}_j(u)\mid \mathcal{F}_t\right]du.
\end{align}
Putting it all together, the expression of the price of a CDS written on the $j$--th bank in the panel can now be written as
\begin{align}
\mbox{CDS}(t,T)=&-(1-\mathcal{R})\int_{t}^{T} D^{\tiny \mbox{OIS}}(t,u)\mathbb{E}^{\mathbb{Q}^{T}}\left[e^{-\int_t^u \hat{\lambda}_j(s)ds}\hat{\lambda}_j(u)\mid \mathcal{F}_t\right]du\label{CDScond}\\ \nonumber
&-\mathcal{C}\sum_{k=1}^N  (T_k-T_{k-1})D^{\tiny \mbox{OIS}}(t,T_k)Z(t,T_k)\\
&-\mathcal{C}\sum_{k=1}^N\int_{T_{k-1}}^{T_k} D^{\tiny \mbox{OIS}}(t,u)(u-T_{k-1})\mathbb{E}^{\mathbb{Q}^{T}}\left[e^{-\int_t^u \hat{\lambda}_j(s)ds}\hat{\lambda}_j(u)\mid \mathcal{F}_t\right]du.\nonumber
\end{align}
The market--quoted CDS spread $s_t$ at time $t$  is the value of $\mathcal{C}$ such that $\mbox{CDS}(t,T)=0$. Thus the CDS calibration problem is to determine the coefficients of the dynamics of $\hat\lambda_j$ such that the RHS of (\ref{CDScond}) equals zero when the market--quoted CDS spread for the $j$--th bank is substituted for $\mathcal{C}$, for all maturities $T$ for which market--quoted CDS spreads for the $j$--th bank are available.

\begin{table}[t]
\begin{center}
\begin{tabular}{|c|c|}
\toprule
\textbf{TICKER}	&	\textbf{LONGNAME}	\\
\midrule
\hline

BACORP	&	Bank of America Corporation	\\
MUFJ-BTMUFJ	&	The Bank of Tokyo--Mitsubishi UFJ, Ltd.	\\
BACR-Bank	&	BARCLAYS BANK PLC	\\
C	&	Citigroup Inc.	\\
ACAFP	&	CREDIT AGRICOLE SA	\\
CSGAG	&	Credit Suisse Group AG	\\
DB	&	DEUTSCHE BANK AKTIENGESELLSCHAFT	\\
HSBC	&	HSBC HOLDINGS plc	\\
JPM	&	JPMorgan Chase \& Co.	\\
LBGP	&	LLOYDS BANKING GROUP PLC	\\
COOERAB	&	Cooeperatieve Rabobank U.A.	\\
RY	&	Royal Bank of Canada	\\
SOCGEN	&	SOCIETE GENERALE	\\
SUMIBK-Bank	&	Sumitomo Mitsui Banking Corporation	\\
NORBK	&	The Norinchukin Bank	\\
RBOS-RBOSplc	&	The Royal Bank of Scotland public limited company	\\
UBS	&	UBS AG	\\
\bottomrule
\end{tabular}
\end{center}
\caption{LIBOR panel banks}\label{panel}
\end{table}
\subsection{Calibration procedure and numerical results}
The calibration procedure outlined above thus proceeds in three steps:
\begin{enumerate}
\item Calibrate the coefficients $\alpha_0(t), \alpha$ of the dynamics of $r_c$ to the term structure of discount factors implied by OIS. This is identical to the first step of the calibration described in Section \ref{OISswapcal}, so we do not present the results of this step here.
\item Calibrate the coefficients $\hat b_0^j(t), \hat b^j$ of the dynamics of the idiosyncratic default intensity $\hat\lambda_j$ to the term structure of CDS spreads of bank $j$. Taking the average over all $j$ of each coefficient in $\hat b_0^j(t), \hat b^j$ and subtracting $\Lambda=5$ bp from $\hat b_0(t)$ gives the coefficients $b_0(t), b$ of the dynamics of $\lambda$.
\item Calibrate the coefficients $c_0(t), c$ of the dynamics of the ``funding liquidity'' component $\phi$ of roll--over risk simultaneously to all available vanilla and basis swaps.
\end{enumerate}
For the reasons stated in Section \ref{OISfit}, we restrict ourselves to maturities up to ten years. We report the calibration results for six exemplary dates, 1 January 2013, 8 September 2014, 18 June 2015, 20 April 2016, 22 March 2017 and 31 October 2017. Results for other days in our data set are qualitatively similar. For IRS, OIS and basis swaps here we use the same instruments as in Section \ref{calsec}. The USD LIBOR panel is made up of the 17 banks listed in Table \ref{panel}. Where CDS quotes (sourced from Markit) were not available on a given day for a given panel bank (e.g. for Lloyds Banking Group, Rabobank and Royal Bank of Canada), we have dropped these from our calibration. For the remaining panel banks, we used all available maturities in the CDS calibration, i.e. 0.5, 1, 2, 3, 4, 5, 7 and 10 years.

The calibrated coefficients $b$ for each day, in the one--factor and in the three--factor model, are given in Tables \ref{CDScoeff}--\ref{CDScoeff1}. Both the one--factor and three--factor model fit the Markit quotes to well within one basis point, as Tables \ref{CDSfit130101} to \ref{CDSfit171031} show.

Moving to Step 3 in the calibration procedure, as in Section \ref{OISswapcal} we plot the bid and ask values for each tenor relative to the value produced by the calibrated model. The results in this step differ from those reported in Section \ref{OISswapcal}, because now part of the roll--over risk (the part which a borrower faces due to the possibility of having to pay a higher credit spread than the LIBOR panel average in the future) is already determined by the credit spread dynamics calibrated in Step 2. Nevertheless, the one--factor model still fits the market reasonably well, as Figure \ref{CDSbasis1factor} shows --- and this could be improved further by a staged fit of a three--factor model in the same manner as in Section \ref{calsec}.

Thus we see that the model, even in its one--factor version, can be calibrated simultaneously to market data on a given day for OIS, CDS, interest rate and basis swaps. This is achieved in the usual fashion\footnote{This approach was pioneered in interest--rate term structure modelling by \citeasnoun{hull_pricing_1990}, and developed in the fashion used here by \citeasnoun{brigo_deterministic_1998} (it is called ``the CIR++ Model'' in \citeasnoun{BM2006}).} of fitting interest rate term structure models to the market, using ``term structures'' of time--dependent coefficients $\alpha_0(t)$, $b_0(t)$ and $c_0(t)$. In this sense, our approach extends the interest rate term structure modelling framework to roll--over risk in a relatively straightforward and consistent fashion. However, when calibrating simultaneously to all terms (1--month, 3--month and 6--month), the calibrated $c_0(t)$ are not very smooth, see Figure \ref{c0plot}, which may be due to market frictions not captured by the model, or an indication that the market across tenors has yet to mature fully.\footnote{In a simpler model of roll--over risk, \citeasnoun{ChaSch:2015} show how the ability of that model to fit the market for basis swaps has improved in the years since the tenor basis became an economically significant phenomenon as a result of the financial crisis of 2007/8.}

\begin{table}[t]

  \begin{subtable}{.5\textwidth}
  \centering
  \caption{One-factor CIR}
  \scalebox{0.45}{
%
    }
     \caption{Model calibration to USD LIBOR panel banks on 31/10/2017 }
  \label{CDSfit171031}%
\end{table}%

\begin{figure}[t]
  \centering
    \begin{subfigure}[b]{0.53\textwidth}
        \includegraphics[width=\textwidth]{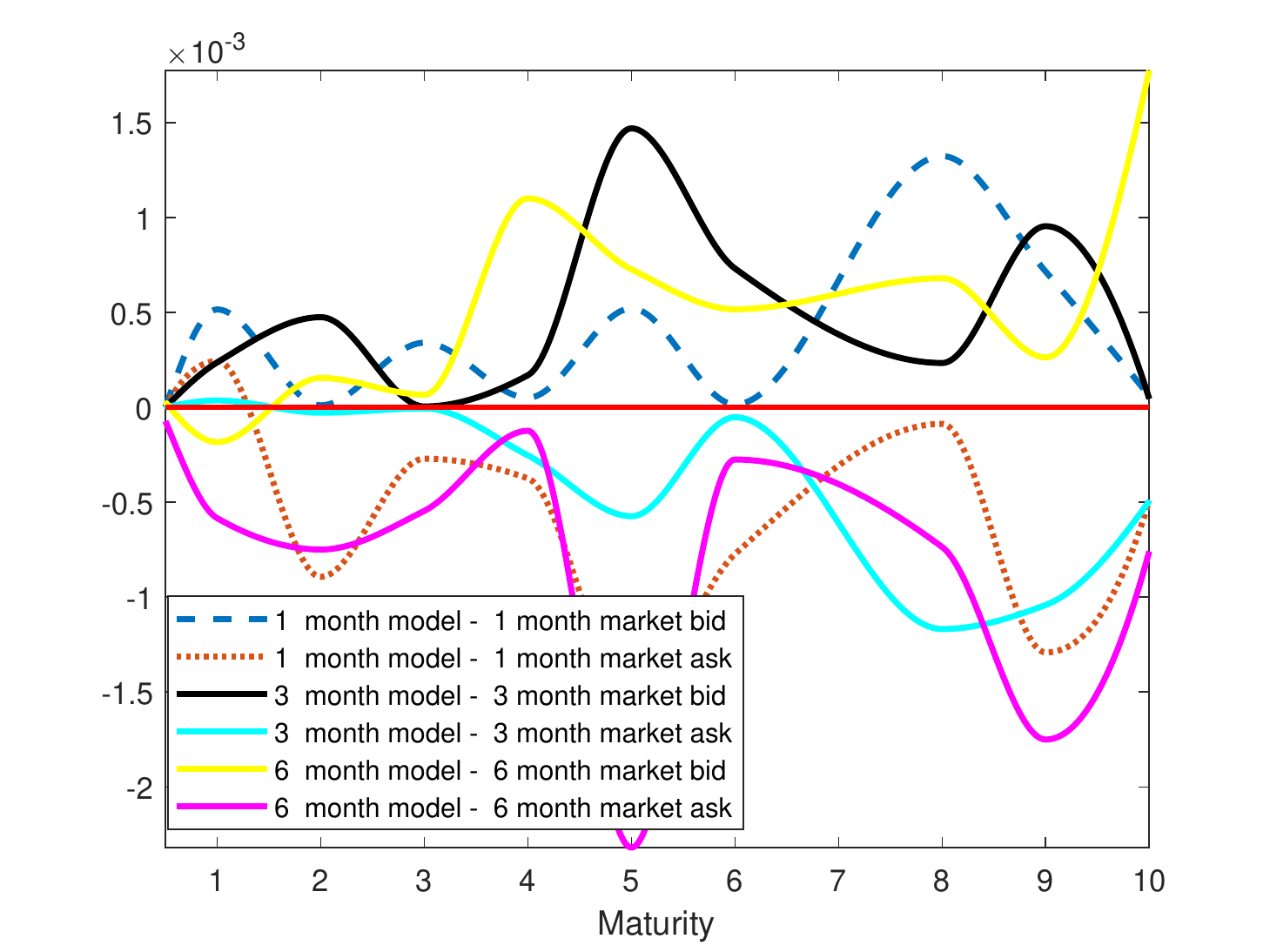}
        \caption{1 January 2013}
        \label{fig:uc}
    \end{subfigure}
    ~
        \begin{subfigure}[b]{0.55\textwidth}
        \includegraphics[width=\textwidth]{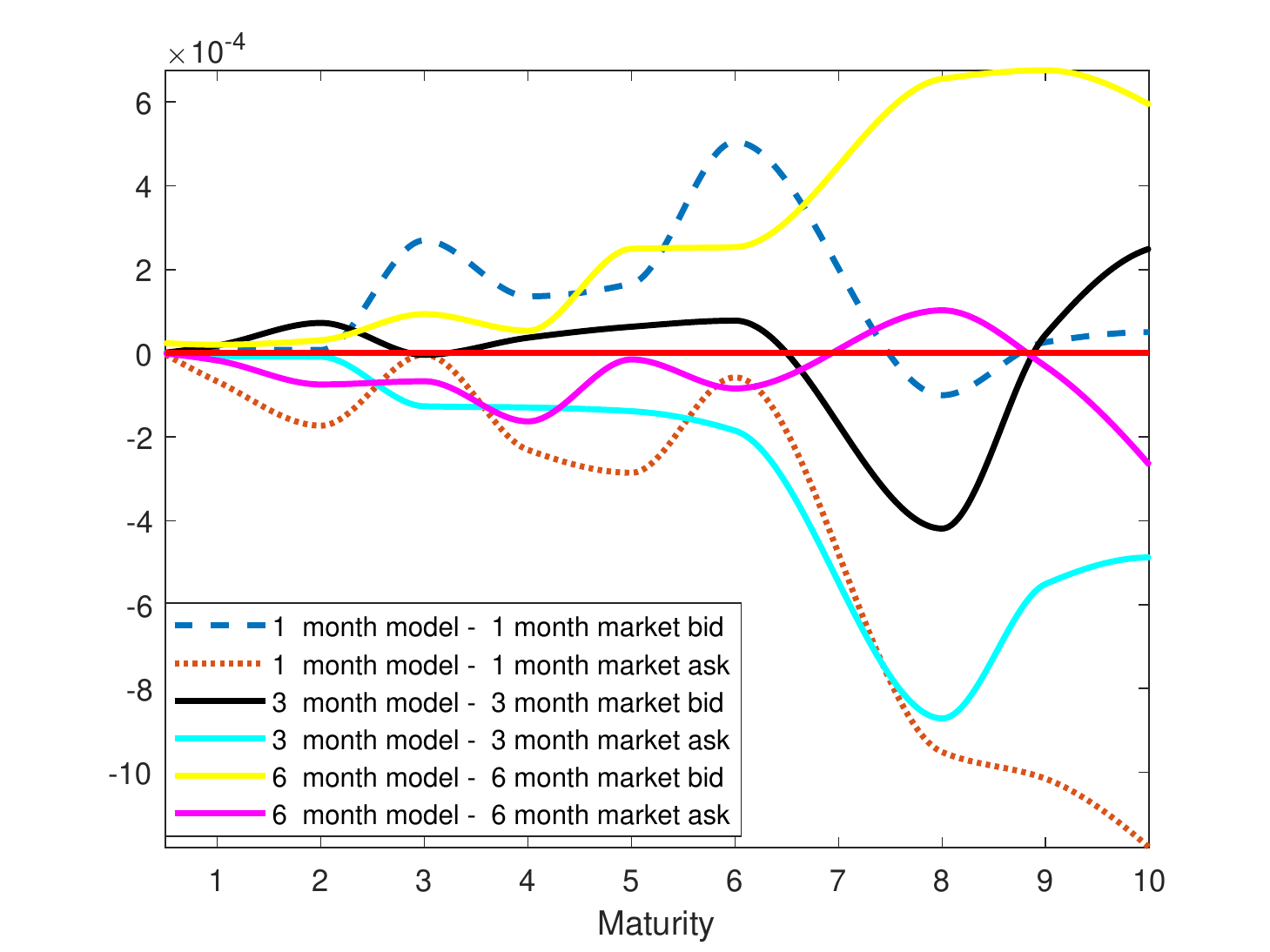}
       \caption{08 September 2014}
        \label{figxc}
    \end{subfigure}

    \begin{subfigure}[b]{0.55\textwidth}
        \includegraphics[width=\textwidth]{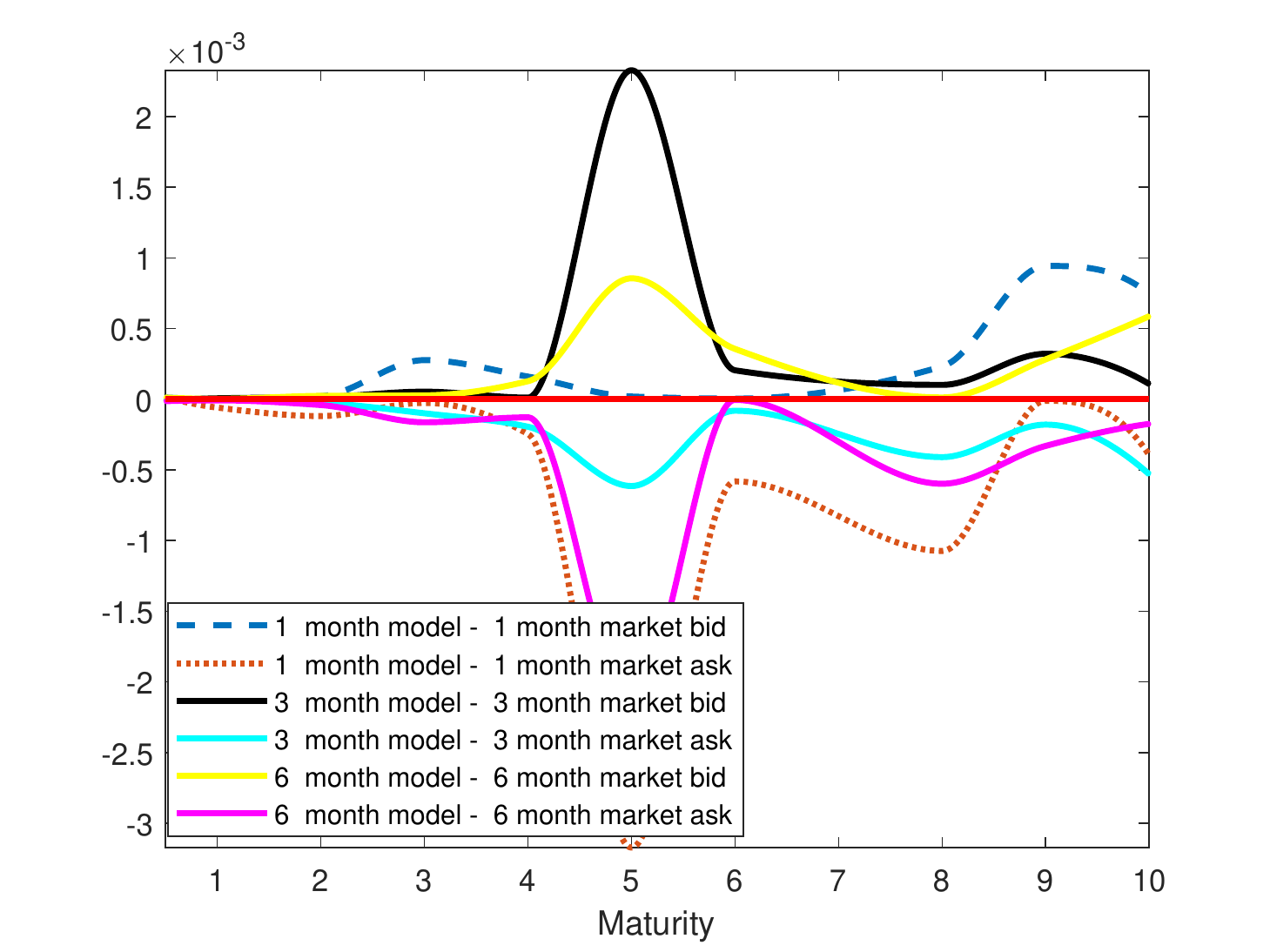}
       \caption{18 June 2015}
        \label{figxc}
    \end{subfigure}
          ~
    \begin{subfigure}[b]{0.53\textwidth}
        \includegraphics[width=\textwidth]{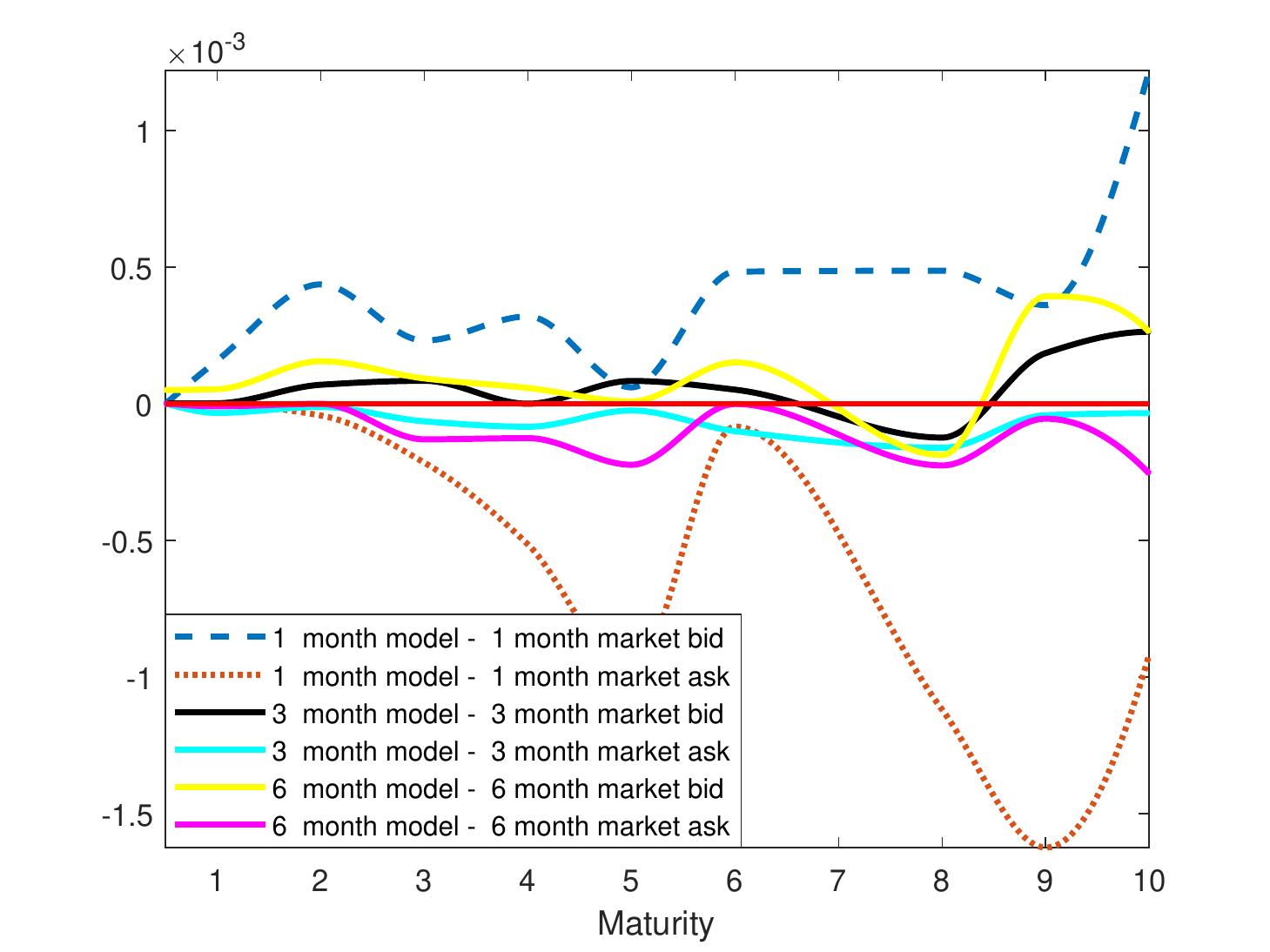}
       \caption{20 April 2016}
        \label{figxb}
    \end{subfigure}

     \begin{subfigure}[b]{0.53\textwidth}
        \includegraphics[width=\textwidth]{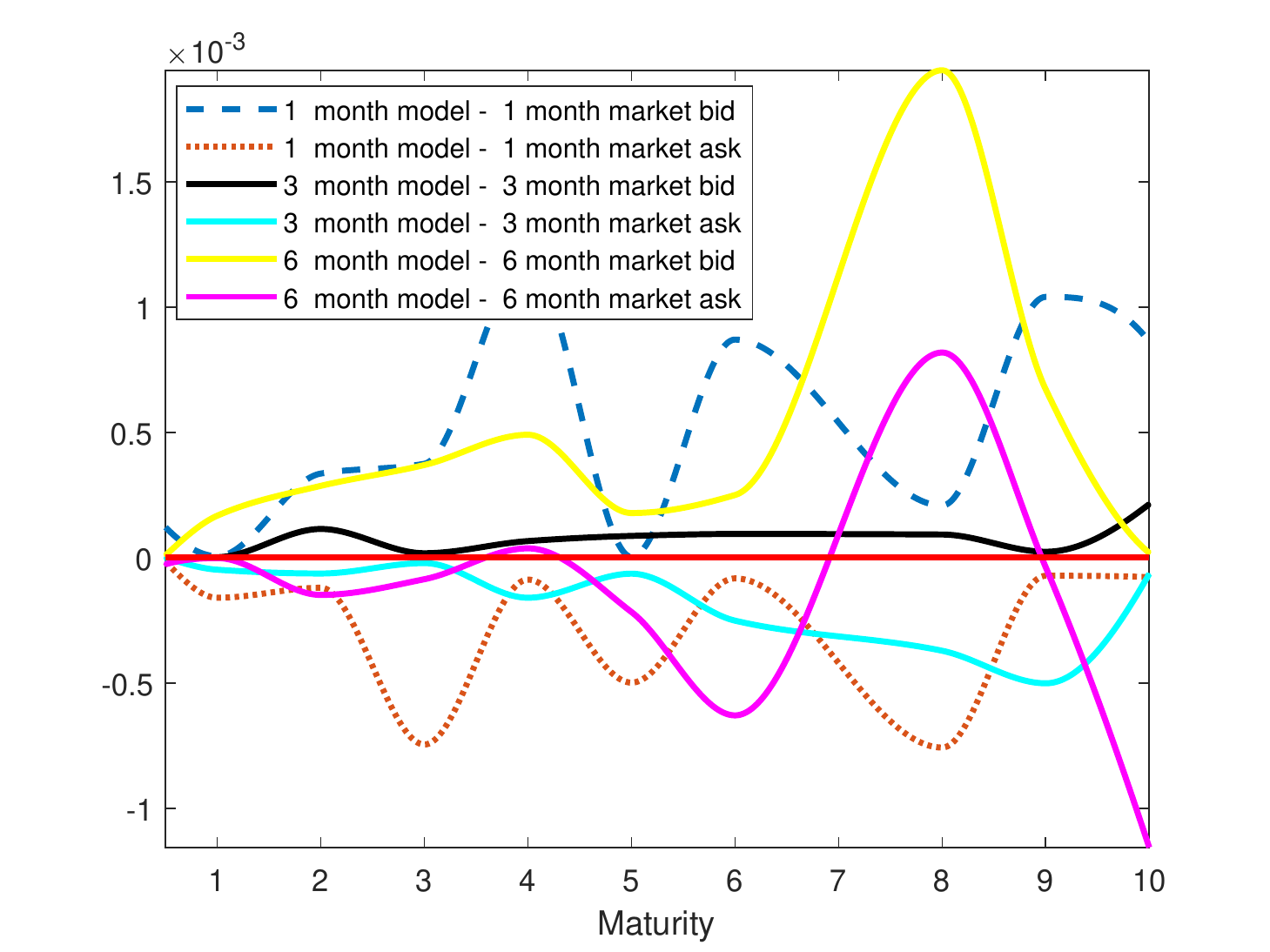}
       \caption{22 March 2017}
        \label{figx}
    \end{subfigure}
    ~
     \begin{subfigure}[b]{0.53\textwidth}
        \includegraphics[width=\textwidth]{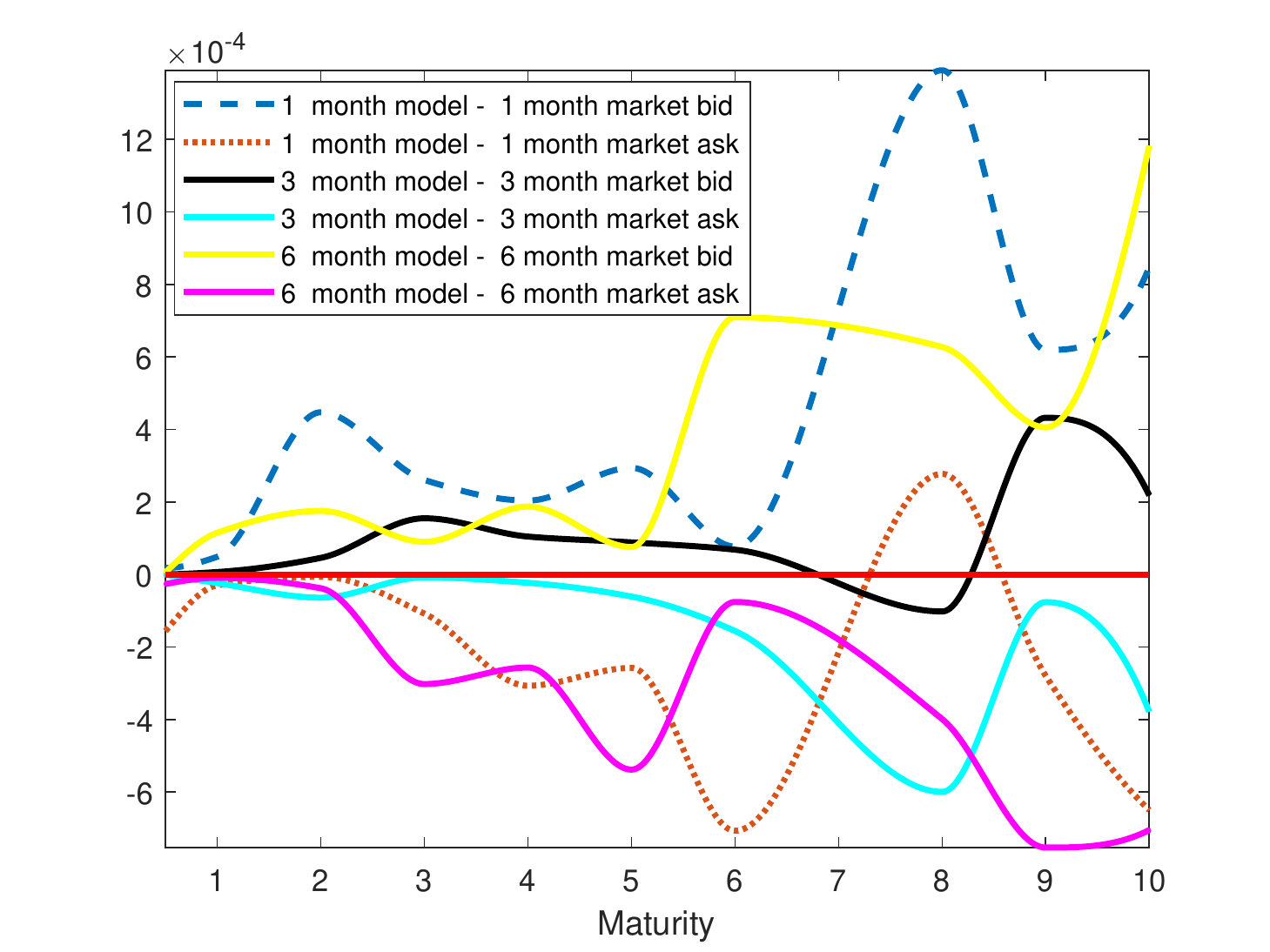}
       \caption{31 October 2017}
        \label{figxa}
    \end{subfigure}
\caption{One--factor model fit to Basis Swaps on 1 January 2013,  18 June 2015 and 20 April 2016 (based on data from Bloomberg)}\label{CDSbasis1factor}
\end{figure}

 \begin{figure}[t]
  \centering

 \includegraphics[width=\textwidth]{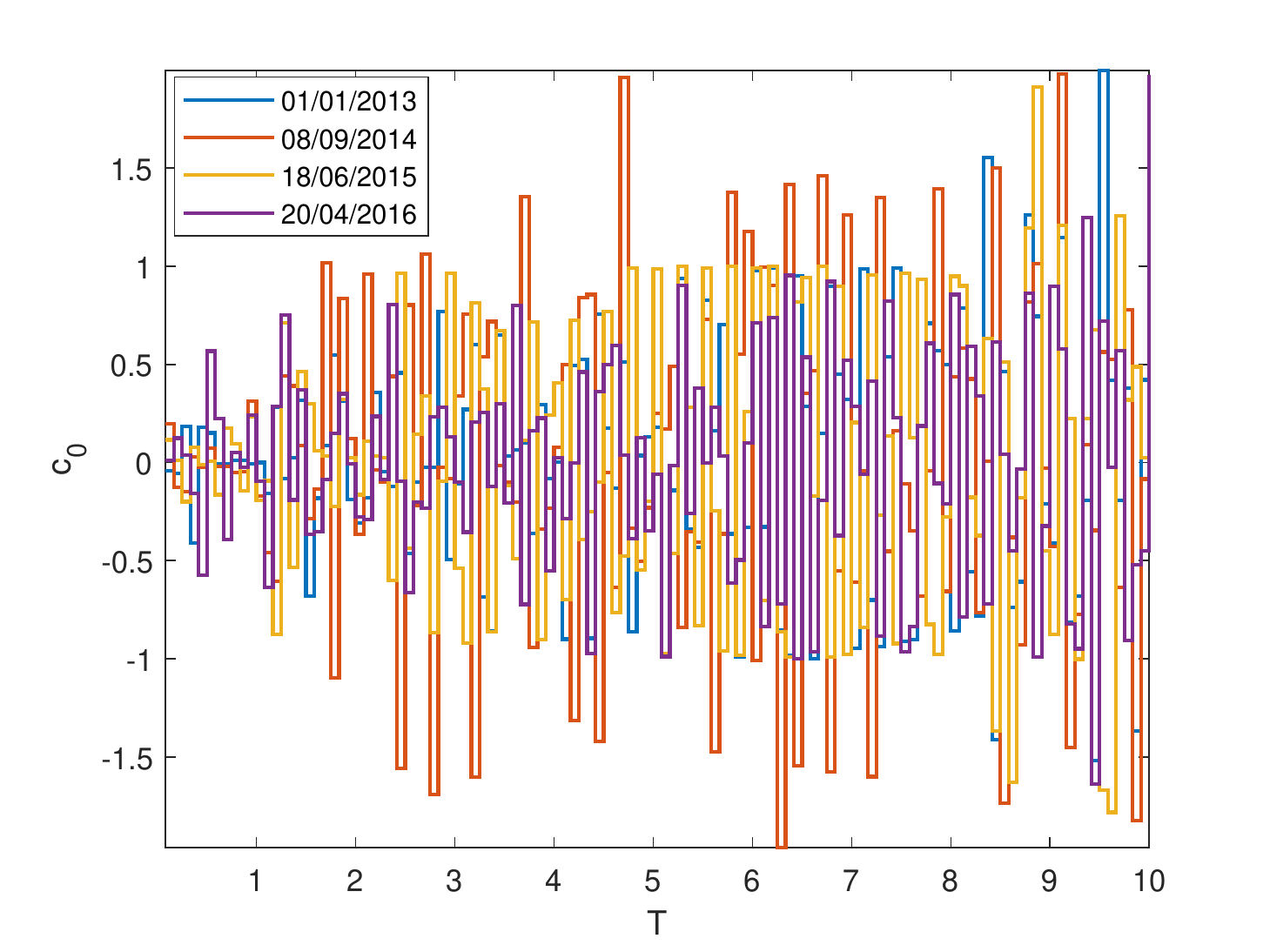}

\caption{Time-dependent calibrated parameter $c_0$ }\label{c0plot}
\end{figure}

 \begin{figure}[t]
  \centering

 \includegraphics[width=\textwidth]{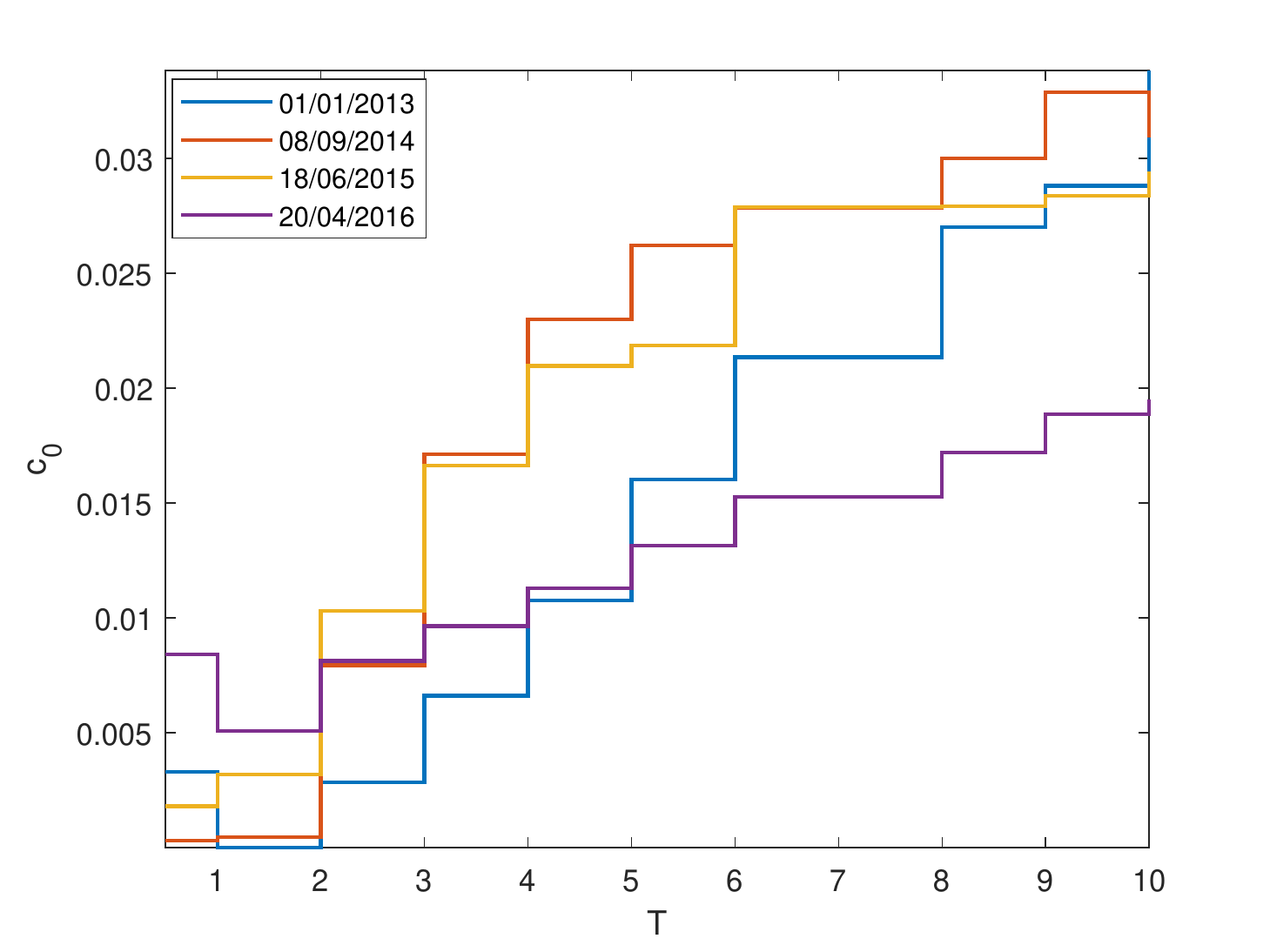}

\caption{Time-dependent calibrated parameter $c_0$ on a 3--month tenor }\label{c03m}
\end{figure}

\section{Conclusion}
Since seeking to profit from the frequency basis (be it the LIBOR/OIS spread or basis swaps involving different LIBOR tenors) entails lending at a longer tenor against rolling over borrowing at a shorter tenor, modelling the risk of such a strategy, i.e. ``roll--over risk,'' is a natural approach to obtain a consistent model of all possible tenors in a relatively parsimonious fashion. Such a model --- as we have shown --- can be calibrated to a large number of relevant market instruments simultaneously. Here, we have deliberately included two distinct calibration exercises: In the first, the model is calibrated to OIS and all available USD interest rate swaps (vanilla IRS and basis swaps) --- this demonstrates that the model can represent the interest rate markets for different tenor frequencies as effectively as the more common, \emph{ad hoc} models in the literature, which typically model spreads between term structures for different tenors directly, without recourse to any underlying theoretical justification for the existence of such spreads. In the second calibration exercise, the component of roll--over risk associated with the risk of a credit downgrade of the entity attempting to roll over their borrowing is explicitly separated from the component interpreted as funding liquidity risk, by including credit default swap data in the calibration. Again, it is demonstrated that the model can fit the market represented by this expanded set of calibration instruments (OIS, IRS, basis swaps and CDS).

Thus, in the sense of the bulk of the derivatives pricing literature, the present paper takes the practitioners' approach of static calibration to (cross--sectional) market data on a given day, for a model to be used for pricing less liquid derivatives in an arbitrage--free manner relative to a set of calibration instruments. The consistent model guarantees the absence of arbitrage, though as in other applications where markets are incomplete\footnote{In the sense of \citeasnoun{Radner:72}.} (credit derivatives are a prominent example), we remain agnostic as to the extent to which arbitrage--free prices are unique. Unlike approaches which do not model the risks driving tenor spreads, the model presented here can be used to provide basis spreads for bespoke tenors not quoted in the market. This could also be employed to ``fill in the gaps'' in markets where key term structures (such as OIS) are missing, potentially improving on less structured, regression--based approaches hitherto available (such as \citeasnoun{jakarasi_estimating_2015}).

The relationship between tenor basis spreads (in particular the LIBOR/OIS spread) and roll--over risk, as expressed in the model presented in this paper and calibrated to market data, also serves as a cautionary note in the present debate on the replacement of LIBOR--type benchmarks with benchmarks based on overnight lending, such as the Secured Overnight Financing Rate (SOFR). As \citeasnoun{Henrard:18} notes in a critical overview of the current discussions on this issue, most proposals suggest no more than adjusting overnight benchmarks for the existing LIBOR/OIS spread by adding a fixed spread to e.g. SOFR. This would imply taking the view that roll--over risk (as priced by the market) is known and constant for all time, and even a cursory analysis (such as calibrating the model presented here to market data on a few exemplary dates) shows that this is not the case.

\appendix

\section{Affine process specification}\label{affine}

 The Markov process $X$ is affine if it is stochastically continuous and its Fourier-Laplace transform has exponential-affine dependence on the initial state, that is there exist
some deterministic functions $\Phi_u:\mathbb{R}_+\rightarrow \mathbb{C}$ and $\Psi_u:\mathbb{R}_+\rightarrow \mathbb{C}^d$ such that  the semigroup $P$ acts as follows:
\begin{align}
\int_E e^{\langle u ,w \rangle} P_t(x,dw) &
= \mathbb{E}^{\mathbb{Q}}\left[e^{\langle u,X(t)\rangle}  \right]\\
&= e^{\Phi_u(t)+\langle\Psi_u(t) ,x\rangle}
\label{affineeq}
\end{align}
for all $t\geq0$, $X(0)=x\in E$ and $u\in {\cal D}_T$, where ${\cal D}_T\subseteq \mathbb{C}^d$ is such that $\mathbb{E}^{\mathbb{Q}}\left[e^{\langle u,X(t)\rangle}  \right]<\infty$ for all $u\in{\cal D}_T$ and $t\leq T$.
It can be shown (see e.g. Cuchiero et al. \cite{CuchieroEtAl2011}) that the process $X$ is a semimartingale with characteristics
\begin{eqnarray*}
A_t&=& \int_0^t \alpha(X({s-}))ds,\\
B_t&=&\int_0^t \beta(X({s-}))ds,\\
\nu (\omega,dt,d\xi ) &=& K(X({t-},\omega),d\xi)dt,
\end{eqnarray*}
with $\alpha(x),\beta(x),K(x,d\xi)$ affine functions:
\begin{eqnarray*}
\alpha(x)&=&\alpha^0+ x_1 \alpha^1+...+x_d \alpha^d,\\
\beta(x)&=&\beta^0+ x_1 \beta^1+...+x_d \beta^d,\\
K(x,d\xi)&=&\mu^0(d\xi)+x_1 \mu^1 (d\xi)+...+x_d \mu^d(d\xi),
\end{eqnarray*}
where $\alpha(x)$ (the diffusion coefficient) is a positive semidefinite $d\times d$ matrix, $\beta(x)$ is the $\mathbb{R}^d$-vector of the drift, and $K(x,d\xi)$ is a Radon measure on $\mathbb{R}^d$ associated to the affine jump part and it is such that
\begin{equation*}
\int_{\mathbb{R}^d} \left( \lVert \xi \rVert ^2 \wedge 1\right)K(x,d\xi) < \infty
\end{equation*}
and $K(x, \{ 0\})=0$.\\

The deterministic functions $\Phi_u(t),\Psi_u(t)$ solve the $generalized\ Riccati\ equations$

\begin{align*}
\frac{\partial}{\partial t} \Phi_u(t) & = \frac{1}{2}\langle \Psi_u(t),\alpha^0\Psi_u(t) \rangle+\langle \beta^0,\Psi_u(t) \rangle+\int_{\mathbb{R}^d\setminus \{ 0\}}
\left(e^{- \langle\xi, \Psi_u(t)\rangle}-1-\langle h(\xi),\Psi_u(t) \rangle\right)\mu^0(d\xi),\\
\Phi_u(0) & = 0,
\end{align*}
and for all $i=1,...,d:$
\begin{align*}
\frac{\partial}{\partial t} \Psi^i_u(t)&  = \frac{1}{2}\langle\Psi_u(t) ,\alpha^i \Psi_u(t) \rangle+\langle\beta^i,\Psi_u(t) \rangle+
\int_{\mathbb{R}^d\setminus \{ 0\}}\left(e^{- \langle\xi, \Psi_u(t)\rangle}-1-\langle h(\xi),\Psi_u(t) \rangle\right)\mu^i (d\xi),\\
\Psi_u(0) & = u,
\end{align*}
where $h(\xi)={\bf 1}_{\{\parallel \xi\parallel \leq 1\} }\xi$ is a truncation function.\\

It is also useful to consider the process
$ (X,Y^{\gamma}) := (X,\int_0^{\cdot} \langle\gamma ,X(u)\rangle du)$ which is an affine process with state space
$E \times \mathbb{R}$ starting from $(x,0)$.

\begin{lemma}\label{lem:lemmaRiccatiXY}
Let $\tilde{P}^{\gamma}$ be the semigroup of the process $ (X,Y^{\gamma})$. Then we have
\begin{equation}
\int_{E \times \mathbb{R}} e^{\langle u ,w\rangle + vz} \tilde{P}^{\gamma}_t((x,y),(dw,dz))
=e^{ \Phi_{(u,v)}(t,\gamma) +\langle\Psi_{(u,v)}(t,\gamma), x\rangle + v y}\label{Fourierextended}
\end{equation}
for every $(u,v)\in {\cal V}_T\subseteq \mathbb{C}^{d+1}$ such that the LHS of Eq. \eqref{Fourierextended} is finite. Here the functions $\Phi_{(u,v)}(\cdot,\gamma)$ and $\Psi_{(u,v)}(\cdot,\gamma)$ satisfy the following system of generalized Riccati ODEs
\begin{align}
\frac{\partial}{\partial t} \Phi_{(u,v)}(t,\gamma)= & \  \frac{1}{2}\langle \Psi_{(u,v)}(t,\gamma),\alpha^0\Psi_{(u,v)}(t,\gamma)
 \rangle+\langle \beta^0,\Psi_{(u,v)}(t,\gamma) \rangle \nonumber\\
 & +\int_{\mathbb{R}^d\setminus \{ 0\}}
\left(e^{- \langle\xi, \Psi_{(u,v)}(t,\gamma)\rangle}-1-\langle h(\xi),\Psi_{(u,v)}(t,\gamma) \rangle\right)\mu^0(d\xi),\label{riccatiPhi}\\
\Phi_{(u,v)}(0,\gamma) =& \ 0,\nonumber
\end{align}
and for $i=1,...,d$
\begin{align}
\frac{\partial}{\partial t} \Psi^i_{(u,v)}(t,\gamma) =& \ v\gamma^i + \frac{1}{2}\langle\Psi_{(u,v)}(t,\gamma) ,\alpha^i \Psi_{(u,v)}(t,\gamma)
 \rangle+\langle\beta^i,\Psi_{(u,v)}(t,\gamma) \rangle \nonumber\\
& + \int_{\mathbb{R}^d\setminus \{ 0\}}\left(e^{- \langle\xi, \Psi_{(u,v)}(t,\gamma)\rangle}-1-\langle h(\xi),\Psi_{(u,v)}(t,\gamma)\rangle\right)\mu^i (d\xi),\label{riccatiPsi}\\
\Psi_{(u,v)}(0,\gamma) =& \ u.\nonumber
\end{align}
\end{lemma}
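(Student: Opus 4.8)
The plan is to recognise $(X,Y^{\gamma})$ as an affine process in its own right on the extended convex state space $E\times\mathbb{R}\subseteq\mathbb{R}^{d+1}$ and then read off the system \eqref{riccatiPhi}--\eqref{riccatiPsi} from the generic characterisation of affine processes recalled above, applied in $d+1$ dimensions. First I would record the semimartingale characteristics of $(X,Y^{\gamma})$. Since $Y^{\gamma}_t=\int_0^t\langle\gamma,X(u)\rangle\,du$ is absolutely continuous, it carries neither a diffusion term nor a jump term, and $[X^i,Y^{\gamma}]\equiv 0$; its drift is $\langle\gamma,X(t)\rangle$, which is affine in $(X(t),Y^{\gamma}_t)$ (in fact it does not depend on the $Y$-coordinate). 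Hence $(X,Y^{\gamma})$ is stochastically continuous (inherited from $X$) with affine characteristics $(\tilde\alpha,\tilde\beta,\tilde K)$ which, in block form relative to the splitting $\mathbb{R}^{d+1}=\mathbb{R}^d\times\mathbb{R}$, are $\tilde\alpha^{0}=\operatorname{diag}(\alpha^0,0)$, $\tilde\alpha^{i}=\operatorname{diag}(\alpha^i,0)$, $\tilde\alpha^{d+1}=0$, $\tilde\beta^{0}=(\beta^0,0)$, $\tilde\beta^{i}=(\beta^i,\gamma^i)$ for $i=1,\dots,d$, $\tilde\beta^{d+1}=0$, and jump measures $\tilde\mu^{j}=\mu^{j}\otimes\delta_{0}$ for $j=0,\dots,d$ and $\tilde\mu^{d+1}=0$ (so the compensator charges only the hyperplane $\{\eta=0\}$). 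This is exactly the standard ``adjoined drift coordinate'' construction, so in the write-up I would quote the affine characterisation directly (e.g. Cuchiero et al. \cite{CuchieroEtAl2011}, or Duffie et al. \cite{DuffieEtAl2003} when $E$ is the canonical state space) rather than re-deriving it.

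Given these coefficients, \eqref{riccatiPhi}--\eqref{riccatiPsi} is obtained by specialising the generic Riccati system to dimension $d+1$ with dual variable $(u,v)\in\mathbb{C}^{d+1}$. The $(d+1)$-st component of $\Psi$ satisfies $\partial_t\Psi^{d+1}=\tfrac12\langle\tilde\Psi,\tilde\alpha^{d+1}\tilde\Psi\rangle+\langle\tilde\beta^{d+1},\tilde\Psi\rangle+(\text{jump term in }\tilde\mu^{d+1})=0$ with $\Psi^{d+1}(0)=v$, hence $\Psi^{d+1}\equiv v$; this is precisely what produces the undamped term $vy$ in the exponent of \eqref{Fourierextended}, so the named object $\Psi_{(u,v)}(\cdot,\gamma)$ can be taken to be the $\mathbb{C}^d$-valued ``$X$-part'' only. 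Substituting $\Psi^{d+1}\equiv v$ into the equations for $\Psi^i$, $i\le d$, the quadratic term $\langle\tilde\Psi,\tilde\alpha^i\tilde\Psi\rangle$ and the jump integral against $\tilde\mu^i$ collapse to their $X$-only counterparts because of the zero blocks, and $\langle\tilde\beta^i,\tilde\Psi\rangle=\langle\beta^i,\Psi\rangle+\gamma^i\Psi^{d+1}=\langle\beta^i,\Psi\rangle+v\gamma^i$; this is \eqref{riccatiPsi}. The equation for $\Phi$ involves $\tilde\alpha^0,\tilde\beta^0,\tilde\mu^0$, whose extra rows/columns vanish, so it reduces verbatim to \eqref{riccatiPhi}.

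To make the identity \eqref{Fourierextended} rigorous rather than merely ``affine by inspection'', I would fix $(u,v)\in{\cal V}_T$ and the corresponding maximal Riccati solution and show that
\[ M_t:=\exp\Big(\Phi_{(u,v)}(T-t,\gamma)+\langle\Psi_{(u,v)}(T-t,\gamma),X(t)\rangle+vY^{\gamma}_t\Big),\qquad t\in[0,T], \]
is a martingale: Itô's formula applied with the affine generator of $(X,Y^{\gamma})$ shows that the drift of $M$ vanishes exactly because $\Phi$ and $\Psi$ solve the Riccati ODEs, so $M$ is a local martingale, and $\mathbb{E}[M_T]=M_0$ then yields \eqref{Fourierextended} (using $\Phi_{(u,v)}(0,\gamma)=0$, $\Psi_{(u,v)}(0,\gamma)=u$). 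The main obstacle is the familiar technical point in the theory of affine processes: ensuring non-explosion of the Riccati solutions on $[0,T]$ and upgrading $M$ from a local to a true martingale. This is precisely controlled by the requirement $(u,v)\in{\cal V}_T$, i.e. finiteness of the LHS of \eqref{Fourierextended} for all $t\le T$, which supplies the dominating exponential moment; with that in hand the remaining steps are routine and already subsumed in the cited affine-process results transported to $d+1$ dimensions.
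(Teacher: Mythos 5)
Your proposal is correct, and it reconstructs precisely the argument the paper implicitly relies on: the paper states Lemma \ref{lem:lemmaRiccatiXY} without proof, deferring to the cited affine-process literature (Duffie et al., Cuchiero et al.), and the canonical proof there is exactly your route --- adjoin the absolutely continuous coordinate $Y^{\gamma}$, observe that the extended process is affine with the block characteristics you list (zero diffusion and jump contributions from the new coordinate, drift $\langle\gamma,X\rangle$ feeding $\gamma^i$ into $\tilde\beta^i$), read off $\Psi^{d+1}\equiv v$ to produce the $vy$ term and the $v\gamma^i$ source term in \eqref{riccatiPsi}, and close with the local-martingale argument whose integrability is controlled by $(u,v)\in{\cal V}_T$. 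Nothing in your sketch departs from or falls short of what the paper's citation chain supplies.
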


In our calibration example we consider the important case where $X$ follows a $d-$ dimensional vector of independent Cox--Ingersoll--Ross (CIR), for which the Riccati ODE admit a closed form expression. This  dynamics is of particular interest because,  as discussed for the one--factor case in Cox et al. (\citeyear*{CIR1985}),  a simple condition ensures the positivity of the modelled object. In order to keep the model analytically tractable, we do not allow for time--dependent coefficients at this stage\footnote{For example, following Schl\"ogl and Schl\"ogl \cite{OZ:Sch&Sch:00}, the coefficients could be made piecewise constant to facilitate calibration to at--the--money option price data, while retaining most analytical tractability.}. What is more, in the CIR case there exists a complete explicit characterization of the maximal domains ${\cal D}_T,{\cal V}_T$ for the transforms. This turns out to be very useful in the calibration procedure when we shall provide some bounds for the parameters.

\section{Computation of the relevant expectations}\label{expectations}


Let us first consider the expectations involved in formula \eqref{noarb}, i.e. $\mathbb{E}_t^\mathbb{Q}\left[e^{\int_{t}^T\phi(s)ds}\right]$, $\mathbb{E}_t^\mathbb{Q}\left[e^{-\int_{t}^T(r_c(s)+\lambda(s)q)ds}\right] $ and $D^{OIS}(t,T)=\mathbb{E}_t^\mathbb{Q}\left[e^{-\int_t^Tr_c(s)ds}\right].
$

 Assume that the deterministic functions $a_0(t), b_0(t), c_0(t)$ are integrable and $a,b,c\in E$ are constant. Thanks to the time-homogeneity of the affine  process $X$, it follows that all these conditional expectations will depend on time through the difference $(T-t)$.

Using the notation of the Appendix \ref{affine}, it is easy to show that
\begin{align}
\mathbb{E}_t^\mathbb{Q}\left[e^{\int_{t}^T\phi(s)ds}\right]=&
e^{\int_t^Tc_0(s)ds}\mathbb{E}_t^\mathbb{Q}\left[e^{\int_t^T\langle c, X(s)\rangle ds}\right]\nonumber\\
=&e^{\int_t^Tc_0(s)ds+\Phi_{(0,1)}(T-t,c)+\langle \Psi_{(0,1)}(T-t,c),X(t)\rangle },\label{transformphi}
\end{align}
where the deterministic functions $\Phi_{(0,1)}, \Psi_{(0,1)}$ satisfy the generalized Riccati ODE in Lemma \ref{lem:lemmaRiccatiXY}.

Analogously,
\begin{align}
\quad \mathbb{E}_t^\mathbb{Q}\left[e^{-\int_{t}^T(r_c(s)+\lambda(s)q)ds}\right]=&
e^{-\int_t^T(a_0(s)+b_0(s)q)ds}\mathbb{E}_t^\mathbb{Q}\left[e^{-\int_t^T\langle a+bq, X(s)\rangle ds}\right]\nonumber\\
=&e^{-\int_t^T(a_0(s)+b_0(s)q)ds+\Phi_{(0,1)}(T-t,-(a+bq))+\langle \Psi_{(0,1)}(T-t,-(a+bq)),X(t)\rangle }\label{transformrclambda}
\end{align}
and
\begin{align}
D^{OIS}(t,T)=& \mathbb{E}_t^\mathbb{Q}\left[e^{-\int_{t}^Tr_c(s)ds}\right]\nonumber\\
=&
e^{-\int_t^Ta_0(s)ds}\mathbb{E}_t^\mathbb{Q}\left[e^{-\int_t^T\langle a, X(s)\rangle ds}\right]\nonumber\\
=&e^{-\int_t^Ta_0(s)ds+\Phi_{(0,1)}(T-t,-a)+\langle \Psi_{(0,1)}(T-t,-a),X(t)\rangle }.\label{OISbondprice}
\end{align}
The proof follows immediately from the definition of $r_c,\lambda,\phi$ and the Lemma \ref{lem:lemmaRiccatiXY}.

Let us now turn out attention to formula \eqref{nnoarb}, which involves the following expectation:
\begin{align}
&\mathbb{E}_t^\mathbb{Q}\left[e^{-\int_{t}^{T_{j-1}}(r_c(s)+\lambda(s)q)ds}e^{\int_{T_{j-1}}^{T_{j}}\phi (s)ds}\right]  \nonumber\\
&=\mathbb{E}_t^\mathbb{Q}\left[e^{-\int_{t}^{T_{j-1}}(r_c(s)+\lambda(s)q)ds}\mathbb{E}_{T_{j-1}}^\mathbb{Q}\left[e^{\int_{T_{j-1}}^{T_{j}}\phi (s)ds}\right]\right]  \nonumber\\
& =\mathbb{E}_t^\mathbb{Q}\left[e^{-\int_{t}^{T_{j-1}}(r_c(s)+\lambda(s)q)ds}e^{\int_{T_{j-1}}^{T_{j}}c_0(s)ds+\Phi_{(0,1)}(T_{j}-T_{j-1},c)+\langle \Psi_{(0,1)}(T_{j}-T_{j-1},c),X(T_{j-1})\rangle}
 \right]  \nonumber\\
&=e^{-\int_t^{T_{j-1}}(a_0(s)+b_0(s)q)ds +\int_{T_{j-1}}^{T_{j}}c_0(s)ds+\Phi_{(0,1)}(T_{j}-T_{j-1},c)}\nonumber\\
&\quad .\mathbb{E}_t^\mathbb{Q}\left[e^{\int_t^{T_{j-1}}\langle -(a+bq), X(s)\rangle ds +\langle \Psi_{(0,1)}(T_{j}-T_{j-1},c) , X(T_{j-1})\rangle }\right]\nonumber\\
&=e^{-\int_t^{T_{j-1}}(a_0(s)+b_0(s)q)ds +\int_{T_{j-1}}^{T_{j}}c_0(s)ds+\Phi_{(0,1)}(T_{j}-T_{j-1},c)}\nonumber\\
&\quad . e^{\Phi_{(\Psi_{(0,1)}(T_{j}-T_{j-1},c),1)}(T_{j-1}-t,-(a+bq))+\langle
\Psi_{(\Psi_{(0,1)}(T_{j}-T_{j-1},c),1)}(T_{j-1}-t,-(a+bq)),X(t)\rangle}
\end{align}

%
In conclusion, all the expectations in formulae  \eqref{noarb} and \eqref{nnoarb} can be explicitly computed.

\subsection{The LIBOR and the pricing of swaps}\label{CIRLIBOR}

In this subsection we compute the expectations appearing in swap transaction formula \eqref{swapcond}, which is crucial to the calibration procedure.
The expression of $D^{OIS}$ has already been computed in \eqref{OISbondprice}, so we focus now on the first expectation in \eqref{swapcond} involving the forward LIBOR.

From \eqref{libor} it follows immediately
\begin{align*}
\delta L(T_{j-1},T_j)=& -1+ e^{\int_{T_{j-1}}^{T_j}(c_0(s)+a_0(s)+qb_0(s))ds
+ \Phi_{(0,1)}(T_{j}-T_{j-1},c)-\Phi_{(0,1)}(T_{j}-T_{j-1},-(a+qb))}\\
& .e^{\langle \Psi_{(0,1)}(T_{j}-T_{j-1},c)-\Psi_{(0,1)}(T_{j}-T_{j-1},-(a+qb)),X(T_{j-1})\rangle}.
\end{align*}
We have
\begin{align*}
&\mathbb{E}_t^\mathbb{Q}\left[e^{-\int_{t}^{T_j}r_c(s)ds}L(T_{j-1},T_j)\right]\\
&=\mathbb{E}_t^\mathbb{Q}\left[e^{-\int_{t}^{T_{j-1}}r_c(s)ds}L(T_{j-1},T_j)D^{OIS}(T_{j-1},T_j)\right]\\
&=\mathbb{E}_t^\mathbb{Q}\left[e^{-\int_{t}^{T_{j-1}}(a_0(s)+\langle a, X(s)\rangle) ds
-\int_{T_{j-1}}^{T_{j}}a_0(s)ds+\Phi_{(0,1)}(T_{j}-T_{j-1},-a)+\langle \Psi_{(0,1)}(T_{j}-T_{j-1},-a), X(T_{j-1})\rangle }
L(T_{j-1},T_j)\right]\\
&=e^{-\int_{t}^{T_{j}}a_0(s)ds+\Phi_{(0,1)}(T_{j}-T_{j-1},-a)}
\mathbb{E}_t^\mathbb{Q}\left[e^{-\int_{t}^{T_{j-1}}\langle a, X(s)\rangle ds
+\langle \Psi_{(0,1)}(T_{j}-T_{j-1},-a), X(T_{j-1})\rangle}L(T_{j-1},T_j)\right].
\end{align*}
Into this equation we insert the expression of the forward  LIBOR in terms of the sum of the constant and the two exponentials and we apply Lemma \ref{lem:lemmaRiccatiXY}:
\begin{align*}
&\mathbb{E}_t^\mathbb{Q}\left[e^{-\int_{t}^{T_j}r_c(s)ds}L(T_{j-1},T_j)\right]\\
&=-\frac{1}{\delta}e^{-\int_{t}^{T_{j}}a_0(s)ds+\Phi_{(0,1)}(T_{j}-T_{j-1},-a)}
\mathbb{E}_t^\mathbb{Q}\left[e^{-\int_{t}^{T_{j-1}}\langle a, X(s)\rangle ds
+\langle \Psi_{(0,1)}(T_{j}-T_{j-1},-a), X(T_{j-1})\rangle}\right]\\
&+ \frac{1}{\delta}e^{-\int_{t}^{T_{j}}a_0(s)ds+\int_{T_{j-1}}^{T_j}(c_0(s)+a_0(s)+qb_0(s))ds
+\Phi_{(0,1)}(T_{j}-T_{j-1},-a)+\Phi_{(0,1)}(T_{j}-T_{j-1},c)-\Phi_{(0,1)}(T_{j}-T_{j-1},-(a+qb))}\\
&.\mathbb{E}_t^\mathbb{Q}\left[e^{-\int_{t}^{T_{j-1}}\langle a, X(s)\rangle ds
+\rangle \Psi_{(0,1)}(T_{j}-T_{j-1},-a)+ \Psi_{(0,1)}(T_{j}-T_{j-1},c)-\Psi_{(0,1)}(T_{j}-T_{j-1},-(a+qb)),X(T_{j-1})\rangle}\right]\\
&=-\frac{1}{\delta}e^{-\int_{t}^{T_{j}}a_0(s)ds+
\Phi_{(0,1)}(T_{j}-T_{j-1},-a)+
\Phi_{(\Psi_{(0,1)}(T_{j}-T_{j-1},-a),1)}(T_{j-1}-t,-a)+
\langle \Psi_{(\Psi_{(0,1)}(T_{j}-T_{j-1},-a),1)}(T_{j-1}-t,-a),X(t)\rangle}\\
&+ \frac{1}{\delta}e^{-\int_{t}^{T_{j}}a_0(s)ds+\int_{T_{j-1}}^{T_j}(c_0(s)+a_0(s)+qb_0(s))ds
+\Phi_{(0,1)}(T_{j}-T_{j-1},-a)+\Phi_{(0,1)}(T_{j}-T_{j-1},c)-\Phi_{(0,1)}(T_{j}-T_{j-1},-(a+qb))}\\
&.e^{
\Phi_{(\Psi_{(0,1)}(T_{j}-T_{j-1},-a)+ \Psi_{(0,1)}(T_{j}-T_{j-1},c)-\Psi_{(0,1)}(T_{j}-T_{j-1},-(a+qb)),1)}(T_{j-1}-t,-a)}\\
&.e^{
\langle\Psi_{(\Psi_{(0,1)}(T_{j}-T_{j-1},-a)+ \Psi_{(0,1)}(T_{j}-T_{j-1},c)-\Psi_{(0,1)}(T_{j}-T_{j-1},-(a+qb)),1)}(T_{j-1}-t,-a),X(t)\rangle}.
\end{align*}


Therefore, also the expectations in \eqref{swapcond} can be computed explicitly and the formula can be obtained in closed form.

\subsection{The pricing of caps}

Let us first consider a caplet on the LIBOR with maturity $T_{j}$, whose payoff is given by
\begin{align*}
K\delta\left( L(T_{j-1}, T_j)-R\right)^+,&
\end{align*}
where $K$ denotes the notional and $R$ is the strike price of the caplet.

The price at time $t\leq T_{j-1}$ is given by
\begin{align}
Caplet_t=&\mathbb{E}_t^\mathbb{Q}\left[e^{-\int_{t}^{T_j}r_c(s)ds}K\delta\left(L(T_{j-1},T_j)-R\right)^+\right]\nonumber\\
=& D^{OIS}(t,T_j)\mathbb{E}_t^{\mathbb{Q}^{T_j}}\left[K\delta\left(L(T_{j-1},T_j)-R\right)^+\right]\nonumber\\
=& D^{OIS}(t,T_j)K\delta\mathbb{E}_t^{\mathbb{Q}^{T_j}}\left[
\frac{1}{\delta}\left(\frac{\mathbb{E}_{T_{j-1}}^\mathbb{Q}\left[e^{\int_{T_{j-1}}^{T_{j}}\phi(s)ds}\right]}{\mathbb{E}_{T_{j-1}}^\mathbb{Q}
\left[e^{-\int_{T_{j-1}}^{T_{j}}(r_c(s)+\lambda(s)q)ds}\right]}-{\left( 1+\delta R\right)}\right)^+\right]\nonumber\\
=& D^{OIS}(t,T_j)K\left( 1+\delta R\right)\mathbb{E}_t^{\mathbb{Q}^{T_j}}\left[\left(
e^{Z (T_{j-1})}-1\right)^+\right],\label{Cap}
\end{align}
where
\begin{align}
e^{Z (T_{j-1})}=& \frac{1}{1+\delta R}\frac{\mathbb{E}_{T_{j-1}}^\mathbb{Q}\left[e^{\int_{T_{j-1}}^{T_{j}}\phi(s)ds}\right]}{\mathbb{E}_{T_{j-1}}^\mathbb{Q}\left[e^{-\int_{T_{j-1}}^{T_{j}}(r_c(s)+\lambda(s)q)ds}\right]}\label{Z}.\\
\end{align}

Using \eqref{transformphi} and \eqref{transformrclambda}, we can now write

\begin{align}
e^{Z(T_{j-1})}=& \frac{1}{1+\delta R}
\frac{e^{\int_{T_{j-1}}^{T_{j}}c_0(s)ds+
\Phi_{(0,1)}(T_{j}-T_{j-1},c)+
\langle \Psi_{(0,1)}(T_{j}-T_{j-1},c), X(T_{j-1})\rangle}}
{e^{-\int_{T_{j-1}}^{T_{j}}(a_0(s)+b_0(s)q)ds+
\Phi_{(0,1)}(T_{j}-T_{j-1},-(a+bq))+
\langle \Psi_{(0,1)}(T_{j}-T_{j-1},-(a+bq)), X(T_{j-1})\rangle}}\nonumber\\
=& e^{f(T_{j-1})+\langle g(T_{j-1}), X({T_{j-1}})\rangle},
\end{align}
where
\begin{align}
f(T_{j-1}) =& -\ln (1+\delta R) +
\int_{T_{j-1}}^{T_{j}}(c_0(s)+a_0(s)+b_0(s)q)ds\nonumber\\
& +\Phi_{(0,1)}(T_{j}-T_{j-1},c)-\Phi_{(0,1)}(T_{j}-T_{j-1},-(a+bq)),\label{f1}\\
g(T_{j-1})=& \Psi_{(0,1)}(T_{j}-T_{j-1},c)-\Psi_{(0,1)}(T_{j}-T_{j-1},-(a+bq)).\label{g1}
\end{align}

In order to compute Equation \eqref{Cap}, we apply Fourier transform techniques well-explored by e.g. Carr and Madan  (\citeyear*{CarrMadan:1999}). For this purpose, we derive the expression for the  characteristic function  or the moment generating function of $Z$ under the $\mathbb{Q}^{T_j}-$forward probability measure.

\begin{proposition}\label{fcX1X2}
The conditional  characteristic function under the $\mathbb{Q}^{T_j}-$forward probability measure of the random variable   $Z$ defined in  \eqref{Z} is given by
\begin{align*}
\varphi^{T_j}_Z(u)&=\mathbb{E}_t^{\mathbb{Q}^{T_j}}\left[e^{iuZ (T_{j-1})}\right]\\
&=\frac{1}{D^{OIS}(t,T_j)}e^{iuf(T_{j-1})-\int_t^{T_{j}}a_0(s)ds+\Phi_{(0,1)}(T_{j}-T_{j-1},-a)}\\
&\quad .e^{\Phi_{(iug(T_{j-1})+\Psi_{(0,1)}(T_{j}-T_{j-1},-a),1)}(T_{j-1}-t,-a)+\langle \Psi_{(iug(T_{j-1})+\Psi_{(0,1)}(T_{j}-T_{j-1},-a),1)}(T_{j-1}-t,-a),X(t)\rangle
}.
\end{align*}
The transform is well defined for  $u\in\mathbb{C}$ such that $\Phi_{(iug(T_{j-1})+\Psi_{(0,1)}(T_{j}-T_{j-1},-a),1)}(T_{j-1}-t,-a)$ and $\Phi_{(iug(T_{j-1})+\Psi_{(0,1)}(T_{j}-T_{j-1},-a),1)}(T_{j-1}-t,-a)$ are bounded.
\end{proposition}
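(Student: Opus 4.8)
The plan is to reduce the claim to a single application of the extended affine transform formula of Lemma~\ref{lem:lemmaRiccatiXY}, after two preparatory reductions: rewriting $Z(T_{j-1})$ as an affine function of $X(T_{j-1})$, and changing from the $T_j$-forward measure back to the risk-neutral measure $\mathbb{Q}$. First I would record that the computation preceding the statement (based on \eqref{transformphi} and \eqref{transformrclambda}) already gives $Z(T_{j-1}) = f(T_{j-1}) + \langle g(T_{j-1}), X(T_{j-1})\rangle$, with $f(T_{j-1})$ and $g(T_{j-1})$ as in \eqref{f1}--\eqref{g1}; in particular $Z(T_{j-1})$ is $\mathcal{F}_{T_{j-1}}$-measurable and affine in $X(T_{j-1})$, so that $e^{iuZ(T_{j-1})} = e^{iuf(T_{j-1})}\,e^{\langle iu\,g(T_{j-1}), X(T_{j-1})\rangle}$.

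Next I would use the definition of the $T_j$-forward measure (numeraire the OIS discount bond, with discounting at $r_c$ because of collateralisation), under which, for any integrable $\mathcal{F}_{T_{j-1}}$-measurable payoff, $\mathbb{E}_t^{\mathbb{Q}^{T_j}}[\,\cdot\,] = D^{OIS}(t,T_j)^{-1}\,\mathbb{E}_t^{\mathbb{Q}}[e^{-\int_t^{T_j}r_c(s)ds}\,\cdot\,]$. Applying this to $e^{iuZ(T_{j-1})}$ and splitting $\int_t^{T_j} = \int_t^{T_{j-1}} + \int_{T_{j-1}}^{T_j}$, I would condition on $\mathcal{F}_{T_{j-1}}$ and use the tower property with $\mathbb{E}_{T_{j-1}}^{\mathbb{Q}}[e^{-\int_{T_{j-1}}^{T_j}r_c(s)ds}] = D^{OIS}(T_{j-1},T_j)$, whose closed form is \eqref{OISbondprice} evaluated over the horizon $T_j - T_{j-1}$ (legitimate by time-homogeneity of $X$). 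Collecting exponents, the deterministic $a_0$-integrals combine to $-\int_t^{T_j}a_0(s)\,ds$, the term $\Phi_{(0,1)}(T_j-T_{j-1},-a)$ factors out, the coefficients of $X(T_{j-1})$ assemble into $\langle iu\,g(T_{j-1}) + \Psi_{(0,1)}(T_j-T_{j-1},-a),\,X(T_{j-1})\rangle$, and the remaining stochastic factor is $\exp\!\big(-\int_t^{T_{j-1}}\langle a, X(s)\rangle\,ds\big) = \exp\!\big(Y^{-a}_{T_{j-1}} - Y^{-a}_t\big)$ in the notation of the extended process $(X,Y^{\gamma})$ with $\gamma = -a$. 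A single application of Lemma~\ref{lem:lemmaRiccatiXY} with $v=1$, $\gamma=-a$, and $u$ replaced by $iu\,g(T_{j-1}) + \Psi_{(0,1)}(T_j-T_{j-1},-a)$ — invoking the Markov property and time-homogeneity to shift the time origin from $0$ to $t$ — then yields exactly the stated expression, with the $1/D^{OIS}(t,T_j)$ prefactor surviving from the measure change.

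The point requiring genuine care, and the main obstacle, is the domain of validity: I must verify that $\big(iu\,g(T_{j-1}) + \Psi_{(0,1)}(T_j-T_{j-1},-a),\,1\big)$ lies in the set $\mathcal{V}_{T_{j-1}-t}$ on which the transform in Lemma~\ref{lem:lemmaRiccatiXY} is finite, equivalently that the two generalized Riccati solutions $\Phi_{(\,\cdot\,,1)}(T_{j-1}-t,-a)$ and $\Psi_{(\,\cdot\,,1)}(T_{j-1}-t,-a)$ do not explode. In the multifactor CIR specialisation recalled in Appendix~\ref{affine} this is governed by the explicit characterisation of the maximal domains $\mathcal{D}_T,\mathcal{V}_T$, which also underlies the parameter bounds used in the calibration; for real $u$ the exponent $iu\,g(T_{j-1})$ is purely imaginary, so boundedness is immediate and the transform is certainly well defined on the real line, with the statement recording the precise admissible complex strip. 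The interchanges of expectation with the Lebesgue time integral inside $Y^{-a}$, and of conditional expectation with the measure-change density, are justified by Fubini/Tonelli together with the integrability built into the affine domain, exactly as in the derivations of \eqref{transformphi}--\eqref{OISbondprice}.
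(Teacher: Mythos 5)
Your proposal is correct and follows essentially the same route as the paper's own proof: write $Z(T_{j-1})=f(T_{j-1})+\langle g(T_{j-1}),X(T_{j-1})\rangle$, undo the change to the $T_j$-forward measure, use the tower property at $T_{j-1}$ together with the closed form of $D^{OIS}(T_{j-1},T_j)$, and finish with a single application of Lemma \ref{lem:lemmaRiccatiXY} with $\gamma=-a$ and shifted initial argument $iu\,g(T_{j-1})+\Psi_{(0,1)}(T_j-T_{j-1},-a)$. Your additional remarks on the admissible domain and the justification of the interchanges go slightly beyond what the paper writes out, but they do not change the argument.
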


\begin{proof}
From \eqref{OISbondprice} it follows that
\begin{align*}
\varphi^{T_j}_Z(u)&=\mathbb{E}_t^{\mathbb{Q}^{T_j}}\left[e^{iuZ (T_{j-1})}\right]\\
&= e^{iuf(T_{j-1})}\mathbb{E}_t^{\mathbb{Q}^{T_j}}\left[e^{iu\langle g(T_{j-1}), X({T_{j-1}})\rangle}\right]\\
&= \frac{e^{iuf(T_{j-1})}}{D^{OIS}(t,T_j)}
\mathbb{E}_t^{\mathbb{Q}}\left[e^{-\int_t^{T_j}r_c(s)ds
+iu\langle g(T_{j-1}), X({T_{j-1}})\rangle}\right]\\
&= \frac{e^{iuf(T_{j-1})}}{D^{OIS}(t,T_j)}
\mathbb{E}_t^{\mathbb{Q}}\left[e^{-\int_t^{T_{j-1}}r_c(s)ds
+iu\langle g(T_{j-1}), X({T_{j-1}})\rangle}
\mathbb{E}_{T_{j-1}}^{\mathbb{Q}}\left[e^{-\int_{T_{j-1}}^{T_{j}}r_c(s)ds}\right]\right]\\
&= \frac{e^{iuf(T_{j-1})}}{D^{OIS}(t,T_j)}
\mathbb{E}_t^{\mathbb{Q}}\left[e^{-\int_t^{T_{j-1}}r_c(s)ds
+iu\langle g(T_{j-1}), X({T_{j-1}})\rangle}
\mathbb{E}_{T_{j-1}}^{\mathbb{Q}}\left[e^{-\int_{T_{j-1}}^{T_{j}}a_0(s)ds-\int_{T_{j-1}}^{T_{j}}\langle a,X(s)\rangle ds}\right]\right]\\
&= \frac{e^{iuf(T_{j-1})-\int_t^{T_{j}}a_0(s)ds+\Phi_{(0,1)}(T_{j}-T_{j-1},-a)}}{D^{OIS}(t,T_j)}\\
&\quad .
 \mathbb{E}_t^{\mathbb{Q}}\left[e^{-\int_t^{T_{j-1}}\langle a,X(s)\rangle ds+\langle iug(T_{j-1})+\Psi_{(0,1)}(T_{j}-T_{j-1},-a),X(T_{j-1})\rangle}
\right].
\end{align*}
Now we apply once again Lemma \ref{lem:lemmaRiccatiXY} and we get the result.
\end{proof}

With the explicit expression of the characteristic function of the process $Z$ we can apply the Carr and Madan (\citeyear*{CarrMadan:1999}) methodology in order to get the price of a caplet on LIBOR.
For more complex product like swaptions analytical formulas are no more available. However, one can easily apply efficient approximations in the spirit of Caldana et al. (\citeyear*{cfgg14}).

\section{Expressions for the Pricing of a CDS}\label{CDS}

First of all, let us consider the time $t$-value of a zero-recovery loan with notional $1$ to  a $j$ bank within the LIBOR panel  given by Formula \eqref{unsecure}:
\begin{eqnarray}
\begin{aligned}
B^j(t,T)&=D^{\tiny \mbox{OIS}}(t,T)\mathbb{E}^{\mathbb{Q}^T}\left[ e^{-\int_t^{T}(\hat{\lambda}_j(s)-q\Lambda(s)) ds}\mid \mathcal{H}_t \right]\\
&=D^{\tiny \mbox{OIS}}(t,T)
e^{-\int_t^T (\hat{b}_0^j(s)-q\beta_0(s))ds+ \Phi_{(0,1)}(T-t,\hat{b}^j-q\beta)+\langle \Psi_{(0,1)}(T-t,\hat{b}^j-q\beta),X(t)\rangle },
\end{aligned}
\end{eqnarray}
where the functions $\Phi_{(0,1)},\Psi_{(0,1)}$ solve the Riccati system \eqref{riccatiPhi}, \eqref{riccatiPsi}.

Now let us find an expression for
 $\mathbb{E}^{\mathbb{Q}^{T}}\left[e^{-\int_t^u \hat{\lambda}_j(s)ds}\hat{\lambda}_j(u)\right]$.

This can be done as follows:
\begin{align*}\label{EXP}
\mathbb{E}^{\mathbb{Q}^{T}}\left[e^{-\int_t^u \hat{\lambda}_j(s)ds}\hat{\lambda}_j(u)\right]=&-\frac{\partial}{\partial u}\mathbb{E}^{\mathbb{Q}^{T}}\left[e^{-\int_t^u \hat{\lambda}_j(s)ds}\right]\\
=&-\frac{\partial}{\partial u}\mathbb{E}^{\mathbb{Q}^{T}}\left[e^{-\int_t^u \hat{b}_0^j(s)ds-\int_t^u \langle\hat{b}^j, X(s)\rangle ds}\right]\\
=&-\frac{\partial}{\partial u}\left\{e^{-\int_t^u \hat{b}_0^j(s) ds+\Phi_{(0,1)}(u-t,\hat{b}^j)+\langle\Psi_{(0,1)}(u-t,\hat{b}^j),X(t)\rangle} \right\}\\
=&-\left(-\hat{b}_0^j(u)+ \partial_u\Phi_{(0,1)}(u-t,\hat{b}^j)+\langle\partial_u \Psi_{(0,1)}(u-t,\hat{b}^j),X(t)\rangle\right)\\
&.e^{-\int_t^u \hat{b}_0^j(s) ds+\Phi_{(0,1)}(u-t,\hat{b}^j)+\langle\Psi_{(0,1)}(u-t,\hat{b}^j),X(t)\rangle} ,
\end{align*}
where the derivatives of the functions $\Phi_{(0,1)},\Psi_{(0,1)}$ can be easily computed from the Riccati system \eqref{riccatiPhi}, \eqref{riccatiPsi}.

\bibliographystyle{agsm}
\bibliography{bib}
\end{document}